\documentclass[a4paper,11pt]{article}

\usepackage{amsmath,amsthm}
\usepackage{amssymb}
\usepackage{mathtools}
\usepackage{subcaption}
\usepackage[ruled,linesnumbered]{algorithm2e}
\usepackage[disable]{todonotes}
\usepackage{comment}
\usepackage{witharrows}
\usepackage{complexity}
\usepackage{accsupp}
\usepackage{subcaption}
\usepackage{xpatch}
\usepackage{apptools}
\usepackage{hyperref}
\usepackage[sort&compress,nameinlink,noabbrev,capitalize]{cleveref}
\usepackage{paralist}
\usepackage{thm-restate}
\usepackage{fullpage}

\usepackage{cellspace}
\setlength\cellspacetoplimit{4pt}
\setlength\cellspacebottomlimit{4pt}
\newcommand\cincludegraphics[2][]{\raisebox{-0.3\height}{\includegraphics[#1]{#2}}}

\usepackage{booktabs}

\newcommand{\pPVSlong}{\textsc{$\Pi$ Vertex Splitting}}
\newcommand{\pPVS}{\textsc{$\Pi$\nobreakdash-VS}}
\newcommand{\pPVSa}[1]{\textsc{#1\nobreakdash-VS}}
\newcommand{\pFVS}[1]{\textsc{\ensuremath{\free_\prec(#1)}\nobreakdash-VS}}

\newcommand{\F}{\ensuremath{\mathcal{F}}}

\usepackage{tabularx, environ}
\makeatletter
\newcolumntype{\expand}{}
\long\@namedef{NC@rewrite@\string\expand}{\expandafter\NC@find}
\NewEnviron{problem}[2][]{%
  \def\problem@arg{#1}%
  \def\problem@framed{framed}%
  \def\problem@lined{lined}%
  \def\problem@doublelined{doublelined}%
  \ifx\problem@arg\@empty%
    \def\problem@hline{}%
  \else%
    \ifx\problem@arg\problem@doublelined%
      \def\problem@hline{\hline\hline}%
    \else%
      \def\problem@hline{\hline}%
    \fi%
  \fi%
  \ifx\problem@arg\problem@framed%
    \def\problem@tablelayout{|>{\bfseries}lX|c}%
    \def\problem@title{\multicolumn{2}{|l|}{%
        \raisebox{-\fboxsep}{\textsc{\large #2}}%
      }}%
  \else
    \def\problem@tablelayout{>{\bfseries}lXc}%
    \def\problem@title{\multicolumn{2}{l}{%
        \raisebox{-\fboxsep}{\textsc{\large #2}}%
      }}%
  \fi%
  \medskip
  \par\noindent%
  \begin{tabularx}{\textwidth}{\expand\problem@tablelayout}%
    \problem@hline%
    \problem@title\\[2\fboxsep]%
    \BODY\\\problem@hline%
  \end{tabularx}%
  \medskip
  \par%
}
\makeatother

\def\barroman#1{\sbox0{#1}\dimen0=\dimexpr\wd0+1pt\relax
  \makebox[\dimen0]{\rlap{\vrule width\dimen0 height 0.06ex depth 0.06ex}%
    \rlap{\vrule width\dimen0 height\dimexpr\ht0+0.03ex\relax 
            depth\dimexpr-\ht0+0.09ex\relax}%
    \kern.5pt#1\kern.5pt}}

\newcommand{\cupdot}{\mathbin{\mathaccent\cdot\cup}}
\DeclareMathOperator*{\argmin}{\arg\!\min}
\DeclareMathOperator{\free}{Free}
\DeclareMathOperator{\emb}{Emb}
\DeclareMathOperator{\wgt}{wgt}
\DeclareMathOperator{\diam}{diam}
\DeclareMathOperator{\width}{wdt}
\DeclareMathOperator{\Split}{Split}
\DeclareMathOperator{\Constr}{Constr}
\DeclareMathOperator{\AllConstr}{AllConstr}
\DeclareMathOperator{\range}{Range}
\DeclareMathOperator{\dom}{Domain}

\newcommand{\set}[1] {
  \mathchoice
  {\left \{ #1 \right \}}
  {\{ #1 \}}
  {\{ #1 \}}
  {\{ #1 \}}
}

\newcommand{\parens}[1] {
  \mathchoice
  {\left ( #1 \right )}
  {( #1 )}
  {( #1 )}
  {( #1 )}
}

\newcommand{\abs}[1] {
  \mathchoice
  {\left | #1 \right |}
  {| #1 |}
  {| #1 |}
  {| #1 |}
}

\newcommand{\N}[1]{\abs{V(#1)}}
\renewcommand{\M}[1]{\abs{E(#1)}}

\theoremstyle{plain}
\newtheorem{theorem}{Theorem}[section]
\newtheorem{lemma}[theorem]{Lemma}

\newtheorem{proposition}[theorem]{Proposition}

\theoremstyle{definition}
\newtheorem{definition}[theorem]{Definition}

\newif\iflong
\newif\ifshort

\longtrue

\iflong
\else
\shorttrue
\fi

\usepackage{etoolbox}
\newcommand{\appsymb}{$\bigstar$}
\newcommand{\appref}[1]{\hyperref[proof:#1]{\appsymb}}

\newcommand{\appendixsection}[1]{%
  \iflong{}\else{}
    \gappto{\appendixText}{\section{Additional Material for \cref{#1} }\label{app:#1}}
  \fi{}
}

\newcommand{\toappendix}[1]{%
  \iflong{}#1\else{}
    \gappto{\appendixText}
    {
        #1
      }
  \fi{}
}

\newcommand{\appendixproof}[2]{%
  \iflong{}#2\else{}\gappto{\appendixText}
    {
      \subsection{Proof of \cref{#1}}\label{proof:#1}
      #2
    }
  \fi{}
}

\makeatletter
\xpatchcmd{\thmt@restatable}%
{\csname #2\@xa\endcsname\ifx\@nx#1\@nx\else[{#1}]\fi}%
{\IfAppendix{\csname #2\@xa\endcsname}{\csname #2\@xa\endcsname\ifx\@nx#1\@nx\else[{#1}]\fi}}%
{}{} %
\makeatother

\iflong
\makeatletter
\xpatchcmd{\thmt@restatable}%
{\csname #2\@xa\endcsname\ifx\@nx#1\@nx\else[{#1}]\fi}%
{\csname #2\@xa\endcsname}%
{}{} %
\makeatother
\fi

\bibliographystyle{plainurl}%

\usepackage{authblk}

\title{On the Complexity of Establishing Hereditary Graph Properties via Vertex Splitting\thanks{MS gratefully acknowledges support by the Alexander von Humboldt Foundation.}} %

\author{Alexander Firbas}%

\author{Manuel Sorge}

\affil{TU Wien, Austria, \texttt{\{alexander.firbas, manuel.sorge\}@tuwien.ac.at}}

\begin{document}

\maketitle

\begin{abstract}
  Vertex splitting is a graph operation that replaces a vertex $v$ with two nonadjacent new vertices and makes each neighbor of $v$ adjacent with one or both of the introduced vertices.
  Vertex splitting has been used in contexts from circuit design to statistical analysis.
  In this work, we explore the computational complexity of achieving a given graph property $\Pi$ by a limited number of vertex splits, formalized as the problem \pPVSlong~(\pPVS).
  We focus on hereditary graph properties and contribute four groups of results:
  First, we classify the classical complexity of \pPVS\ for graph properties characterized by forbidden subgraphs of size at most 3.
  Second, we provide a framework that allows to show \NP-completeness whenever one can construct a combination of a forbidden subgraph and prescribed vertex splits that satisfy certain conditions.
  Leveraging this framework we show \NP-completeness when $\Pi$ is characterized by forbidden subgraphs that are sufficiently well connected.
  In particular, we show that \pPVSa{$F$-Free} is \NP-complete for each biconnected graph $F$.
  Third, we study infinite families of forbidden subgraphs, obtaining \NP-hardness for \pPVSa{Bipartite} and \pPVSa{Perfect}.
  Finally, we touch upon the parameterized complexity of \pPVS\ with respect to the number of allowed splits, showing para-\NP-hardness for \textsc{$K_3$-Free-VS} and deriving an \XP-algorithm when each vertex is only allowed to be split at most once.
\end{abstract}

\section{Introduction}

\emph{Vertex splitting} is the graph operation in which we take a vertex $v$, remove it from the graph, add two \emph{copies} $v_1$, $v_2$ of $v$, and make each former neighbor of $v$ adjacent with $v_1$, $v_2$, or both.
Vertex splitting has been used in circuit design~\cite{PaikRS98,MayerE93}, the visualization of nonplanar graphs in a planar way~\cite{planar_splitting_2001,social_network,eppstein2018planar,planarizing_vs_fpt,baumann2023parameterized}, improving force-based graph layouts~\cite{eades_tension_1995}, in graph clustering with overlaps~\cite{abukhzam_cluster_2018,ArrighiBDSW23,firbas_cluster_2023}, in statistics~\cite{letter_display,davoodi_edge_2016} (see \cite{firbas_cluster_2023}), and variants of vertex splitting play roles in graph theory~\cite{HiltonZ97,mertzios_vertex_2011}, in particular in Fleischner's Splitting Lemma~\cite{fleischner_eulerian_1990} and in Tutte's theorem relating wheels and general three-connected graphs~\cite{tutte_connnectivity_1966}.
Vertex splitting can also be thought of as an inverse operation of vertex contraction, which is the underlying operation of the twinwidth graph parameter~(e.g.~\cite{BonnetKTW22}).

In some of the above applications, we are given a graph and want to establish a graph property by splitting the least number of times:
In circuit design, we aim to bound the longest path length~\cite{PaikRS98,MayerE93}, when visualizing non-planar graphs we aim to establish planarity~\cite{eppstein2018planar,planar_splitting_2001,planarizing_vs_fpt} or pathwidth one~\cite{baumann2023parameterized}, and in statistics and when clustering with overlaps we want to obtain a cluster graph (a disjoint union of cliques)~\cite{letter_display,davoodi_edge_2016,abukhzam_cluster_2018,ArrighiBDSW23,firbas_cluster_2023}.
This motivates generalizing these problems by letting $\Pi$ be any graph property (a family of graphs) and studying the problem \pPVSlong~(\pPVS): Given a graph $G$ and an integer $k$, is it possible to apply at most~$k$ vertex split operations to $G$ to obtain a graph in $\Pi$?
The above-mentioned graph properties are closed under taking induced subgraphs and thus we mainly focus on this case.
For graph operations different from vertex splitting the complexity of establishing graph properties~$\Pi$ is well studied, such as for deleting vertices~(e.g.~\cite{node_deletion,node_deletion_fpt}), adding or deleting edges~(see the recent survey~\cite{CrespelleDFG23}), or edge contractions (e.g.~\cite{GolovachHP13,GuillemotM13,HeggernesHLP13,GuoC15}).
In this work, we aim to start this direction for vertex splitting, that is, how can we characterize for which graph properties \pPVS\ is tractable?
Our main focus here is the classical complexity, that is, \NP-hardness vs.\ polynomial-time solvability, but we also provide modest contributions to the parameterized complexity with respect to the number of allowed splits.

\looseness=-1
Our results are as follows.
Each graph property $\Pi$ that is closed under taking induced subgraphs is characterized by a family $\F$ of \emph{forbidden induced subgraphs}.
We also write $\Pi$ as $\free_\prec(\F)$.
It is thus natural to begin by considering small forbidden subgraphs.
We classify for each family \F\ that contains graphs of size at most~3 whether \pFVS{\F} is polynomial-time solvable or \NP-complete (\cref{section:leq3}).
Indeed, it is \NP-complete precisely if $\F$ contains only the path $P_3$ on three vertices or a triangle $K_3$.
All other cases are polynomial-time solvable and this extends also to so-called threshold and split graphs.
In contrast, together with our results below, we obtain \NP-completeness for each connected forbidden subgraph~$F$ with four vertices except for $P_4$s and claws $K_{1, 3}$, for which the complexity remains open.

\looseness=-1
Second, the hardness construction for $K_3$-free graphs indicates that high connectivity in forbidden subgraphs makes \pPVS\ hard and thus we explored this direction further.
We obtain a framework for showing \NP-hardness of \pPVS\ whenever one can use forbidden induced subgraphs to construct certain splitting configurations (\cref{section:biconnected}).
That is, a graph together with a recipe specifying distinguished vertices that will be connected to the outside and how to split them.
Essentially, if one can provide a splitting configuration that avoids introducing new forbidden subgraphs and that decreases the connectivity to the outside well enough, then we can use such a configuration to give a hardness construction.
We then provide such splitting configurations for individual biconnected forbidden induced subgraphs, for families of triconnected forbidden subgraphs of bounded diameter and for families of 4-connected forbidden induced subgraphs, obtaining \NP-hardness of \pFVS{\F} in these cases.

\looseness=-1
Third, the above results do not cover the case where $\F$ is the family of all cycles, and this must be so because \pPVSa{Forest} is polynomial-time solvable~\cite{baumann2023parameterized,firbas_establishing_2023}.
However, we show that, if we forbid only cycles of at most a certain length, or all cycles of odd length, then \pPVS\ becomes \NP-complete again (\cref{section:cycles}).
This hardness extends also to so-called perfect graphs.

\looseness=-1
Finally, we contribute to the parameterized complexity of \pPVS\ with respect to the number~$k$ of allowed vertex splits (\cref{section:parameterized}).
Previously it was known that \pPVS\ is fixed-parameter tractable when $\Pi$ is closed under taking minors~\cite{planarizing_vs_fpt}, when $\Pi = \free_\prec(P_3)$~\cite{firbas_cluster_2023,firbas_establishing_2023}, and when $\Pi$ consists of graphs of pathwidth one or when $\Pi$ is MSO$_2$-definable and of bounded treewidth~\cite{baumann2023parameterized}.
In contrast, we observe that \pFVS{K_3} is \NP-hard even for $k = 2$.
However, the hardness comes from the fact that one can split a vertex multiple times:
If we instead can split each vertex at most once, then we obtain an $O(\sqrt{2}^{k^2} \cdot n^{k + 3})$-time algorithm.

\subsection{Preliminaries}

\subparagraph{General Graph Notation}
Unless explicitly mentioned otherwise, all graphs are undirected and without parallel edges or self-loops.
Given a graph~$G$ with vertex set $V(G)$ and edge set $E(G)$, we denote the neighborhood of a vertex $v\in V(G)$ by $N_G(v)$.
If the graph~$G$ is clear from the context, we omit the subscript~$G$. %
For $V'\subset V(G)$, we write $G[V']$ for the graph induced by the vertices $V'$.
For $u,v\in V(G)$ we write $uv$ as a shorthand for $\{u,v\}$, $G-v$ for $G[V\setminus \{v\}]$, $d_G(v)$ for $|N_G(v)|$, $d_G(u, v)$ for the length of the shortest path from $u$ to $v$, and $\diam(G)$ for the diameter of $G$, that is, $\max_{u, v \in V(G)} d_G(u, v)$.
The complement of a graph $G$ is denoted by $\overline{G}$.
The graph $K_n$ is the complete graph on $n$ vertices and $C_n$ is the cycle graph of $n$ vertices. If a graph $G$ is isomorphic to $H$, we write $G\simeq H$.

We mark all directed graphs $\vec{G}$ with an arrow.
The in-neighborhood is denoted by $ N_{\vec{G}}^-(\cdot)$ and the out-neighborhood by $ N_{\vec{G}}^+(\cdot)$. 
We say the directed graph $\vec{G}$ is an \emph{orientation} of $G$ if the underlying undirected graph of $\vec{G}$ is $G$.

The \emph{$k$-subdivision} of a graph $G$
is defined as the graph that results from
replacing each of $G$'s edges $uv$ with a path $u, p^{uv}_1, p^{uv}_2, \ldots, p^{uv}_k, v$ where $p^{uv}_1, p^{uv}_2, \ldots, p^{uv}_k$ are new vertices.

\subparagraph{Vertex Splitting}
Let $G$ be a graph, $v \in V(G)$, and $V_1, V_2$ subsets of $N_G(v)$ such that $V_1 \cup V_2 = N_G(v)$.
Furthermore, let $v_1$ and $v_2$ denote two fresh vertices, that is, $\set{v_1, v_2} \cap V(G) = \varnothing$.
Consider the graph $G'$ that is obtained from $G$ by deleting $v$, and adding $v_1$ and $v_2$ such that $N_{G'}(v_1) = V_1$ and $N_{G'}(v_2) = V_2$.
Then, we say $G'$ was obtained from $G$ by \emph{splitting} $v$ (via a \emph{vertex split}).
If $V_1 \cap V_2 = \varnothing$, we speak of a \emph{disjoint} vertex split,
and if either $V_1 = \varnothing$ or $V_2 = \varnothing$, we say the split is \emph{trivial}.
Furthermore, we say $v$ was \emph{split into} $v_1$ and $v_2$, and call these vertices the \emph{descendants} of $v$. Conversely, $v$ is called the \emph{ancestor} of $v_1$ and $v_2$.
Finally, consider an edge $v_1w$ (resp. $v_2w$) of $G'$.
We say that the edge $vw$ of $G$ was assigned to $v_1$ (resp. $v_2$) in the split,
and call $v_1w$ (resp. $v_2w$) a \emph{descendant edge} of $vw$. 

A \emph{splitting sequence} of $k$ splits is a sequence of graphs $G_0, G_1, \ldots, G_k$,
such that $G_{i+1}$ is obtainable from $G_{i}$ via a vertex split for $i \in \set{0, \ldots, k - 1}$.
The notion of descendant vertices (resp. ancestor vertices) is extended in a transitive and reflexive way to splitting sequences.

\subparagraph{Embeddings and Hereditary Graph Properties}
For graphs $G$ and $H$,
we write $\emb_\prec(G, H)$ (resp. $\emb_\subseteq(G, H)$) to denote the set of all \emph{induced embeddings} of $G$ in $H$ (resp. \emph{subgraph embeddings}),
that is, the set of all injective $f \colon V(G) \to V(H)$ where
$\forall uv \in V(G)^2 \colon uv \in E(G) \iff f(u)f(v) \in E(H)$ 
(resp. $\forall uv \in V(G)^2 \colon uv \in E(G) \implies f(u)f(v) \in E(H)$ ).
In case $\emb_\prec(G, H) \neq \varnothing$ (resp. $\emb_\subseteq(G, H) \neq \varnothing$), we write $G \prec H$ (resp. $G \subseteq H$) and say $G$ is an \emph{induced subgraph} (resp. a \emph{subgraph}) of $H$.

The \emph{circumference} of a graph $G$ is the biggest $i \in \mathbb{N}$ such that $C_i \subseteq G$.

\looseness=-1
For a set of graphs $\mathcal{F}$, we write
$\free_\prec(\mathcal{F})$ (resp. $\free_\subseteq(\mathcal{F})$)
to denote the set of graphs where $G \in \free_\prec(\mathcal{F})$ (resp. $G \in \free_\subseteq(\mathcal{F})$) if and only if
$\emb_\prec(F, G) = \varnothing$ (resp.  $\emb_\subseteq(F, G) = \varnothing$) for all $F \in \mathcal{F}$.
Set $\mathcal{F}$ is the set of \emph{forbidden induced subgraphs} (resp. \emph{forbidden subgraphs})
that \emph{characterize} the \emph{hereditary (graph) property} $\free_\prec(\mathcal{F})$ (resp. $\free_\subseteq(\mathcal{F})$).

\subparagraph{Miscellaneous}
For a function $f \colon A \to B$, its domain $\dom(f)$ is $A$, and its range, $\range(f)$, is $\set{b \mid \exists a \in A \colon f(a) = b}$.
For a set $X$, we let $\mathcal{P}(X)$ be its power set.

\ifshort
\smallskip
\noindent \textit{Due to space constraints, statements marked with}~\appsymb~\textit{are proved in the Appendix}.%
\fi

\section{Properties Characterized by Small Forbidden Induced Subgraphs: Outline}
\label{section:leq3}
\appendixsection{section:leq3}

\looseness=-1
We now give an outline of the characterization of \pPVSlong~(\pPVS) for families \F\ of forbidden subgraphs with at most 3 vertices, the full version is given in \cref{section:leq3-full}.
First, we can make several trivial observations: If one of $K_0$, $K_1$, $K_2$, or $\overline{K_2}$ is forbidden and it is present in the input graph, then there is no way to destroy these forbidden subgraphs with vertex splitting and hence we can immediately return a failure symbol.
This gives a trivial algorithm if $K_0 \in \F$ or $K_1 \in \F$.
Moreover, if $\overline{K_2} \in \F$, then the input graph is a clique or we can return failure.
Since splitting introduces a $\overline{K_2}$, instance $(G, k)$ is positive if and only if $(G, 0)$ is positive, which we can check in polynomial time.
Similarly, if $K_2 \in \F$, then the input graph is an independent set or we can return failure.
Through splitting, we can only introduce more independent vertices and thus $(G, k)$ is positive if and only if $(G, 0)$ is positive.

It follows that we can focus on families \F\ that contain subgraphs with exactly 3 vertices, that is, $\F \subseteq \{P_3, \overline{P_3}, K_3, \overline{K_3}\}$.
If \F\ contains $\overline{P_3}$ or $\overline{K_3}$ but neither $P_3$ nor $K_3$, then we have a similar observation as above: $\overline{P_3}$ and $\overline{K_3}$ cannot be destroyed by vertex splits and thus $(G, k)$ is positive if and only if $(G, 0)$ is, which is checkable in polynomial time.

It thus remains to classify families $\F \subseteq \{P_3, \overline{P_3}, K_3, \overline{K_3}\}$ that contain $P_3$ or $K_3$.
If $\F = \{P_3\}$ then \pFVS{\F}\ is \NP-complete by a result of Firbas et al.~\cite[Theorem 4.4]{firbas_cluster_2023}.
If $\F = \{K_3\}$ then \NP-completeness follows from \cref{theorem:single_biconnected_np_complete} or \cref{theorem:paraNPhard}, which we prove below.
However, if we add $\overline{P_3}$ and/or $\overline{K_3}$ then the problems turn polynomial-time again for subtle and different reasons:

\looseness=-1
In the case where $\{K_3, \overline{K_3}\} \subseteq \F$ we can apply Ramsey-type arguments to show that an algorithm only needs to check for a constant number of different yes-instances.
If $\overline{P_3} \in \F$ we can observe that destroying any $P_3$ or $K_3$ necessarily introduces a $\overline{P_3}$, which cannot be removed afterwards.
This takes care of all cases for \F\ (see \cref{table:leq3} in \cref{section:leq3-full}) except $\F = \{P_3, \overline{K_3}\}$.
For this case we can observe that the graphs resulting from a splitting solution are \emph{cluster graphs}, disjoint unions of cliques, with at most two clusters (cliques).
As $\overline{K_3}$ cannot be destroyed by vertex splitting, the input graph may only contain $P_3$s.
Furthermore, $P_3$s can only be destroyed by splitting their midpoints.
It is thus intuitive that the input graph of a yes-instance must consist of two cliques that may overlap and, furthermore, the overlap must not exceed the number~$k$ of allowed splits.
This is indeed what we can show.
We use the following characterization of $P_3$-free vertex splittings in terms of so-called sigma clique covers by Firbas et al.~\cite{firbas_cluster_2023}:
\begin{definition}\label{definition:sccshortversion}
  Let $G$ be a graph. Then, $\mathcal{C} \subseteq \mathcal{P}(V)$ is called a \emph{sigma clique cover} of 
  $G$ if
  \iflong    
    \begin{enumerate}
    \item $G[C]$ is a clique for all $C \in \mathcal{C}$ and
    \item for each $e \in E(G)$, there is $C \in \mathcal{C}$ such that $e \in E(G[C])$, that is, all edges of $G$ are ``covered'' by some clique of $\mathcal{C}$.
    \end{enumerate}
    The \emph{weight} of a sigma clique cover $\mathcal{C}$ is denoted by $\wgt(\mathcal{C})$, where
    \[
      \wgt(\mathcal{C}) \coloneqq \sum_{C \in \mathcal{C}} |C|.    
    \]  

  \else
    \begin{inparaenum}
    \item $G[C]$ is a clique for all $C \in \mathcal{C}$ and
    \item for each $e \in E(G)$, there is $C \in \mathcal{C}$ such that $e \in E(G[C])$, that is, all edges of $G$ are ``covered'' by some clique of $\mathcal{C}$.
    \end{inparaenum}
    The \emph{weight} of a sigma clique cover $\mathcal{C}$ is $\wgt(\mathcal{C}) \coloneqq \sum_{C \in \mathcal{C}} |C|$.
  \fi      
\end{definition}

\begin{lemma}[Firbas et al.~\cite{firbas_cluster_2023}, Lemma 4.3]
  Let $G = (V,E)$ be a graph, and let $I \coloneqq \{v \in V \mid d_G(v) = 0\}$.
  Then, there are at most $k$ vertex splits that turn $G$ into a cluster graph if and only if $G$ admits a sigma clique cover with weight at most $|V| - |I| + k$.
\end{lemma}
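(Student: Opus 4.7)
The plan is to prove both directions of the equivalence by setting up an explicit correspondence between splitting sequences turning $G$ into a cluster graph and sigma clique covers of $G$, and then matching weights by counting descendants per vertex.

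For the forward direction, suppose $G = G_0, G_1, \ldots, G_\ell$ with $\ell \le k$ is a splitting sequence whose final graph $G_\ell$ is a cluster graph with clusters $B_1, \ldots, B_p$. I would set $C_j := \{v \in V(G) : v \text{ has a descendant in } B_j\}$ and take $\mathcal{C}$ to consist of those $C_j$ with $|C_j| \ge 2$. Two observations yield the cover property: (i) for $u, v \in C_j$, picking descendants $u', v' \in B_j$, we have $u'v' \in E(G_\ell)$ since $B_j$ is a clique, and this edge must be a descendant of $uv \in E(G)$, so $G[C_j]$ is a clique; and (ii) each edge of $G$ is inherited by some descendant edge in $G_\ell$, which lies in some cluster $B_j$, hence is covered by $C_j$. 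For the weight bound, the key observation is that any two descendants of a single vertex $v$ are pairwise non-adjacent, which follows inductively as each split creates two non-adjacent copies and later splits only refine within one side. Hence distinct descendants of $v$ lie in distinct clusters of $G_\ell$. After observing that we may assume no isolated vertex of $G$ is ever split (such splits cannot help), each $v \notin I$ contributes $1 + k_v$ to $\wgt(\mathcal{C})$, where $k_v$ is the number of splits in $v$'s lineage, so $\wgt(\mathcal{C}) = (|V|-|I|) + \ell \le |V| - |I| + k$.

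For the reverse direction, given $\mathcal{C}$ of weight at most $|V| - |I| + k$, I would first preprocess by removing isolated vertices of $G$ from all members of $\mathcal{C}$, which does not increase the weight and preserves coverage. Let $d(v) := |\{C \in \mathcal{C} : v \in C\}|$; then $d(v) \ge 1$ for every $v \notin I$, since any incident edge is covered by some clique containing $v$. For each such $v$, I would apply $d(v) - 1$ successive splits to produce copies $\{v^{C} : C \ni v\}$, distributing neighbors so that $v^{C}$ ends up adjacent precisely to $\{u^{C} : u \in C \setminus \{v\}\}$. The total split count is $\sum_{v \notin I}(d(v) - 1) = \wgt(\mathcal{C}) - (|V|-|I|) \le k$, and the resulting graph is the disjoint union of cliques $\{v^{C} : v \in C\}$ over $C \in \mathcal{C}$ together with the original isolated vertices, i.e., a cluster graph.

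The main technical hurdle will be the bookkeeping in the reverse direction: realizing the ``multi-split'' of $v$ into $d(v)$ copies as a sequence of binary splits whose edge assignments are globally consistent with the analogous splits of $v$'s neighbors. I would exploit the option to assign a neighbor to both copies in a single split, so that an edge $uv$ covered by several cliques of $\mathcal{C}$ is duplicated across the relevant copies on both sides, yielding exactly the edges $u^{C}v^{C}$ for each $C$ containing both $u$ and $v$ and no others. Correctness then reduces to two independent facts: (i) copies of the same vertex remain non-adjacent by the splitting rule, and (ii) copies $u^{C}$ and $v^{C'}$ with $C \ne C'$ are non-adjacent because the only potential source for such an edge is the original edge $uv$, which is never assigned across distinct clique indices.
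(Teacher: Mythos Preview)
The paper does not prove this lemma; it is quoted verbatim as Lemma~4.3 of Firbas et al.~\cite{firbas_cluster_2023} and used as a black box. So there is no in-paper proof to compare against.

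That said, your argument is sound and matches the way the present paper \emph{uses} the lemma: in the proof of \cref{lemma:splitting_polynomial_for_cluster_graphs_k3_complement} the authors explicitly rely on the fact that ``the constructive proof of \cref{lemma:cvs_scc_reduction} never changes the cardinality of any sigma clique covers involved,'' which is exactly what your bijection between clusters of $G_\ell$ and members of $\mathcal{C}$ delivers. One small slip: you write $\wgt(\mathcal{C}) = (|V|-|I|) + \ell$, but this is only an inequality in general---if some split in the sequence is trivial (creating an isolated descendant), that descendant sits in a singleton cluster you have discarded, so the corresponding ancestor contributes strictly less than $1+k_v$. You can either note that only $\le$ is needed, or first prune all trivial splits from the sequence (which preserves the cluster-graph property of $G_\ell$ and only shortens the sequence). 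Your reverse direction is correct as sketched; the key point that makes the bookkeeping go through is precisely the one you identify, namely that non-disjoint splits let you route an edge $uv$ to every pair $(u^C,v^C)$ with $\{u,v\}\subseteq C$ simultaneously.
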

Intuitively, the sets of the sigma clique cover correspond exactly to the clusters of the cluster graph obtained after splitting.
It now follows that, if our input graph indeed consists of two cliques that overlap in at most $k$ vertices, then there is a solution to \pFVS{\{P_3, \overline{K_3}\}}.
The more interesting direction is the reverse one.
That is, all yes-instances indeed look as such.
This is essentially proved in the following lemma.
\begin{lemma}
    Let $G = (V, E) \in \free_\prec(\set{\overline{K_3}})$ without isolated vertices
    and let $M \coloneqq \set{ v \in V \mid \exists f \in \emb_\prec(P_3, G) \colon v \in f(V(P_3)) \land d_{P_3}(f^{-1}(v)) = 2}$, that is, the set of all vertices in $G$ that are a midpoint of some induced $P_3$ in $G$. If $G[M]$ is a non-empty complete graph, then there are $C_1, C_2 \subseteq V$ such that $\set{C_1, C_2}$ is a sigma clique cover of $G$ with $C_1 \cap C_2 = M$ and $\wgt(\set{C_1, C_2}) = |V| + |M|$.
\end{lemma}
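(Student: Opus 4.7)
My plan is to build $C_1$ and $C_2$ by pairing $M$ with the two parts of a natural two-clique decomposition of $G[V \setminus M]$. The strategy has three parts: decompose $G[V \setminus M]$, prove full adjacency between $V \setminus M$ and $M$, and assemble the cover.

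I would begin by showing that $G[V \setminus M]$ is $P_3$-free. Any induced $P_3$ in $G[V \setminus M]$ lifts to an induced $P_3$ in $G$ with the same midpoint, placing that midpoint in $M$ and contradicting its membership in $V \setminus M$. Hence $G[V \setminus M]$ is a disjoint union of cliques, and because $G$ is $\overline{K_3}$-free, there are at most two such cliques (three would yield an independent triple in $G$). So we may partition $V \setminus M$ into two sets $A$ and $B$ (possibly empty) such that $G[A]$ and $G[B]$ are cliques and no edge of $G$ runs between $A$ and $B$.

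The key technical claim is that every $v \in V \setminus M$ is adjacent in $G$ to every $m \in M$. I would argue by contradiction: assume $v \not\sim m$ and fix non-adjacent neighbors $x, y \in N(m)$ witnessing $m \in M$. Since $M$ is a clique, $x$ and $y$ are not both in $M$; moreover they cannot both lie in $V \setminus M$, since $\overline{K_3}$-freeness applied to $\{v, x, y\}$ forces $v$ to be adjacent to one of them (both would place $v$ in $M$), and that vertex then has $m$ and $v$ as non-adjacent neighbors and would itself belong to $M$, contradicting the subcase assumption. The remaining subcase with $x \in M$, $y \in V \setminus M$ is the crux, and I would resolve it by chasing implications using the clique property of $N(v)$ (since $v \notin M$), the clique property of $M$, and further applications of $\overline{K_3}$-freeness to the triples involving $v$ and carefully chosen vertices in $M$ and in $V \setminus M$ until a forbidden independent triple emerges.

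With the claim in hand, I set $C_1 \coloneqq A \cup M$ and $C_2 \coloneqq B \cup M$. Both are cliques because $A$, $B$, and $M$ are cliques and, by the claim, each of $A$ and $B$ is fully adjacent to $M$. We have $C_1 \cap C_2 = M$ because $A$ and $B$ are disjoint, and $|C_1| + |C_2| = |A| + |B| + 2|M| = |V| + |M|$. Every edge of $G$ is covered: edges within $M$, $A$, or $B$ sit in the obvious $C_i$; edges from $M$ to $A$ (resp.\ $B$) lie in $C_1$ (resp.\ $C_2$); and no edges run between $A$ and $B$ by construction. The main obstacle is the adjacency claim, particularly the residual subcase $x \in M$, which I expect to be the most delicate part of the argument and to require careful simultaneous use of the three structural ingredients noted above.
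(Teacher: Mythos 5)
There is a genuine gap, and it sits exactly where you flagged it. The residual subcase of your adjacency claim ($x \in M$, $y \in V\setminus M$) cannot be closed by further implication-chasing, because the claim is false. Take $G = P_4$ with vertices $p_1p_2p_3p_4$: it is $\overline{K_3}$-free (independence number $2$), has no isolated vertices, and its set of induced-$P_3$ midpoints is $M=\set{p_2,p_3}$, which induces a non-empty clique, so every hypothesis of the lemma holds. Yet $p_1\in V\setminus M$ is not adjacent to $p_3\in M$, and the non-adjacent neighbours of $p_3$ witnessing $p_3\in M$ are $p_2\in M$ and $p_4\in V\setminus M$ --- precisely your unresolved configuration, with no forbidden independent triple anywhere in sight. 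Worse, the conclusion of the lemma itself fails for $P_4$: the only clique of $P_4$ containing $\set{p_2,p_3}$ is $\set{p_2,p_3}$ itself, so no two cliques intersecting in $M$ can cover the edges $p_1p_2$ and $p_3p_4$ (indeed $P_4$ admits no sigma clique cover of size two at all). Hence no assembly of $C_1,C_2$ along your lines --- or any other --- can succeed without strengthening the hypotheses; the ``crux'' you deferred is not a delicate case analysis but the actual point at which the statement breaks.

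For comparison, the paper's own proof follows essentially your route (decompose $G[V\setminus M]$ into at most two cliques, establish full adjacency to $M$, assemble) and stumbles at the same spot: it asserts that \emph{every} vertex of $M$ is adjacent to every vertex of each component $C$ of $G[V\setminus M]$ with $G[M\cup C]$ connected, but the argument given only shows that one particular $v\in M$ (a neighbour of the chosen $u\in C$) is adjacent to all of $C$, and then silently upgrades this to all of $M$. So your decomposition and assembly are faithful to the intended argument; your honesty about the missing subcase is the only thing separating your write-up from a proof with a hidden hole.
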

\begin{proof}
    First, we get some trivial cases out of the way.
    The graph $G$ cannot have more than two connected components, for then we would have $\overline{K_3} \prec G$. If $G$ is empty, $C_1 = C_2 = \varnothing$ fulfill the conditions of this lemma.
    
    If on the other hand, $G$ consists of exactly two components, we notice that $G \in \free_\prec(P_3)$, since the endpoints of one $P_3$ in one component combined with any vertex of the other component would induce $\overline{K_3}$, a contradiction.
    Thus, $G$ is a cluster graph and setting $C_1, C_2$ to the vertex set of one component each fulfills the conditions of this lemma.
    Thus, from now on, we will assume that $G$ is non-empty and consists of exactly one connected component.

    We proceed with deducing the precise structure of $G$ from our premises.
    Notice that $G[V \setminus M]$ is $P_3$-free, since $P_3 \prec G[V \setminus M]$ implies that $M \cap (V \setminus M) \neq \varnothing$, a contradiction.

    Since $\overline{K_3} \prec G[V \setminus M]$ would imply $\overline{K_3} \prec G$, we obtain that $G[V \setminus M] \in \free_\prec(\set{P_3, \overline{K_3}})$, i.e., it is a cluster graph of at most two clusters.
    We henceforth use $\mathcal{C}$ to denote the vertex sets of all connected components of $G[V \setminus M]$.

    Furthermore, we derive $M \neq V$, since if $M = V$, $G$ would be a non-empty clique because $G[M]$ is, yet, since $M$ would be empty, $G$ would also be an empty graph, a contradiction.
    
    We can use $M \neq V$ to show $\set{v_1v_2 \mid v_1 \in M, v_2 \in C} \subseteq E$ for all $C \in \mathcal{C}$ such that $G[M \cup C]$ is connected:
    Let $C \in \mathcal{C}$ such that $G[M \cup C]$ is connected.
    Since $C \neq M$ (because $V \neq M$) and $M \neq \varnothing$, we can select $u \in C$, such that $uv \in E$, where $v \in M$. 
    Suppose there is $w \in C$, such that $vw \not\in E$.
    Since $G[C]$ is a clique, we know that $uw \in E$.
    Thus, $\set{u, v, w}$ induce $P_3$ in $G$ and its middle point $u$ is an element of $M$, a contradiction to $C \subseteq V \setminus M$;
    see \cref{figure:lemma_p3_midpoints_induce_scc} for an illustration.
    Thus, it is indeed the case that $\set{v_1v_2 \mid v_1 \in M, v_2 \in C} \subseteq E$ for all $C \in \mathcal{C}$ such that $G[M \cup C]$ is connected.

    Since $G[V \setminus M]$ is non-empty, we know that $|\mathcal{C}| \ge 1$. Also, $|\mathcal{C}| \leq 2$, since if $|\mathcal{C}| \ge 3$, we would obtain $G \prec \overline{K_3}$. We will now show that $|\mathcal{C}| = 2$ by deriving an absurdity from the other remaining possibility:

    Towards a contradiction, suppose $G[V \setminus M]$ consists of exactly one connected component, that is $\mathcal{C} = \set{V \setminus M}$.
    Since $G$ is connected, $V \neq M$, and $M \neq \varnothing$, we can select $u \in G[V \setminus M]$ such that $uv \in E$ where $v \in M$.
    Because $v$ is the middle-point of a $P_3$ in $G$, there is $w \in V \setminus \set{u, v}$ such that $vw \in E$ and $uw \not\in E$.

    \looseness=-1
    Suppose that $w \in M$. Since $\set{v_1v_2 \mid v_1 \in M, v_2 \in V \setminus M} \subseteq E$, we then have $uw \in E$, contradicting $uw \not\in E$.
    Now, suppose the opposite, i.e.,\ $w \in V \setminus M$. Thus, $w$ and $u$ are both part of the same connected component $G[V \setminus M]$. But this component is a clique, hence $uw \in E$, contradicting $uw \not\in E$.
    Therefore, in total, we conclude that $|\mathcal{C}| = 2$; we denote its elements by $C_1$ and $C_2$ and claim that $\set{C_1 \cup M, C_2 \cup M}$ a sigma clique cover of the desired properties.

    To prove this claim, we check both conditions of \cref{definition:sccshortversion}.
    For the first condition, we need to establish that $C_1 \cup M$ and $C_2 \cup M$ both induce cliques in $G$: Without loss of generality, we only consider $C_1 \cup M$. Since $G[C_1 \cup M]$ is connected, we know that $\set{v_1v_2 \mid v_1 \in M, v_2 \in C_1} \subseteq E$.
    By precondition, we have that $G[M]$ is a clique.
    Also, $G[C_1]$ is a clique. Hence, we conclude that $G[C_1 \cup M]$ is a clique too.

    Now, we need to establish that all edges of $G$ are covered by our supposed sigma clique cover. Let $v_1v_2 \in E$.
    If $v_1, v_2 \in C_1 \cup M$ or $v_1, v_2 \in C_2 \cup M$, then $v_1v_2$ is covered because $G[C_1 \cup M]$ and $G[C_2 \cup M]$ are cliques.
    Since $(C_1 \cup M) \cup (C_1 \cup M) = V$, only the case (without loss of generality) $v_1 \in C_1$ and $v_2 \in C_2$ is left to consider.
    If $v_1 \in M$ (resp. $v_2 \in M$), then $v_1 \in C_2$ (resp. $v_2 \in C_1$) and both vertices are covered by the clique $G[C_2 \cup M]$ (resp. $G[C_1 \cup M]$).
    Otherwise, $v_1, v_2 \in G[V \setminus M]$. But then $G[C_1]$ and $G[C_2]$ are connected, thus $|\mathcal{C}| \neq 2$, a contradiction to $|\mathcal{C}| = 2$.

    Using these premises, we can also establish the required condition on the weight.
    Since $M \cap C_1 = \varnothing$, $M \cap C_2 = \varnothing$, $C_1 \cap C_2 = \varnothing$, and $\set{C_1, C_2}$ is a partition of $V \setminus M$, we obtain $\wgt(\set{C_1 \cup M, C_2 \cup M}) = |C_1| + |C_2| + 2|M| = |V \setminus M| + 2|M| = |V| + |M|$.

    Finally, we need to show that $(C_1 \cup M) \cap (C_2 \cup M) = M$, which can be done in a direct manner: $(C_1 \cup M) \cap (C_2 \cup M) = (C_1 \cap (C_2 \cup M)) \cup (M \cap (C_2 \cup M))$ = $\varnothing \cup M = M$.
\end{proof}
This now yields a polynomial-time algorithm for \pFVS{\{P_3, \overline{K_3}\}}: Check whether the input graph consists of two cliques that overlap in at most $k$ vertices.
This finishes the outline of our characterization and we obtain:
\begin{restatable}[\appsymb]{theorem}{theoremleqthreecomplexity}\label{theorem:leq3_complexity}
  Let $\mathcal{F}$ be a set of graphs containing graphs of at most three vertices each,
  then \textsc{$\free_\prec(\mathcal{F})$-Vertex Splitting} is \NP-complete if $\mathcal{F} = \set{P_3}$ or $\mathcal{F} = \set{K_3}$ and admits a polynomial-time algorithm otherwise.
\end{restatable}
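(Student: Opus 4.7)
The plan is to organize the case analysis around a simple dichotomy: which members of $\mathcal{F}$ can be destroyed by a vertex split and which cannot. Call a graph $F$ \emph{split-indestructible} if $F \prec G$ implies $F \prec G'$ for every $G'$ obtainable from $G$ by one vertex split. I would first verify that $K_0, K_1, K_2, \overline{K_2}, \overline{P_3}, \overline{K_3}$ are all split-indestructible: the cases of size at most $2$ are immediate from the definition of splitting (and in fact splitting a vertex with a neighbor always creates a $\overline{K_2}$ on the two copies), and for $\overline{P_3}$ and $\overline{K_3}$ the three relevant vertices can be chosen to avoid the split vertex, so the induced subgraph on them survives unchanged. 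Hence whenever $\mathcal{F}$ contains only split-indestructible graphs, $(G,k)$ is positive iff $G \in \free_\prec(\mathcal{F})$, which is polynomial-time decidable. This dispatches all families containing some graph of size at most two as well as the families $\mathcal{F} \subseteq \{\overline{P_3}, \overline{K_3}\}$.

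After this, only families $\mathcal{F} \subseteq \{P_3, \overline{P_3}, K_3, \overline{K_3}\}$ with $\mathcal{F} \cap \{P_3, K_3\} \neq \emptyset$ remain. The hardness entries $\mathcal{F} = \{P_3\}$ and $\mathcal{F} = \{K_3\}$ follow from the cited result of Firbas et al.\ and from \cref{theorem:single_biconnected_np_complete}, respectively. For the remaining tractable entries I would proceed as follows. If $\overline{P_3} \in \mathcal{F}$, a direct case check on a split that destroys an induced $P_3$ or $K_3$ shows that the split necessarily creates a fresh $\overline{P_3}$ between the two copies and a suitable third vertex; since $\overline{P_3}$ is split-indestructible, the instance reduces to the $k=0$ check. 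If $\{K_3, \overline{K_3}\} \subseteq \mathcal{F}$, Ramsey's bound $R(3,3) = 6$ limits the target graph to at most five vertices, reducing the problem to enumerating a constant number of candidate target shapes and checking splitting-reachability. For $\mathcal{F} = \{P_3, K_3\}$, every target graph is an induced matching plus isolated vertices, so every non-isolated vertex $v$ requires exactly $d_G(v) - 1$ splits, yielding the polynomial-time test $2|E(G)| - |\{v : d_G(v) > 0\}| \le k$. The remaining subfamilies (for example $\{P_3, K_3, \overline{K_3}\}$) collapse into one of the above arguments.

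The remaining and genuinely non-trivial case is $\mathcal{F} = \{P_3, \overline{K_3}\}$, which is the main obstacle. Here I would combine the structural lemma above with the sigma-clique-cover characterization of Firbas et al.\ to show that $(G, k)$ is positive iff $V(G) = C_1 \cup C_2$ for two cliques $C_1, C_2$ with $|C_1 \cap C_2| \le k$. The forward direction follows because a yes-instance yields a sigma clique cover of weight at most $|V(G)| + k$ which, by $\overline{K_3}$-freeness of the target, has at most two cliques; the reverse direction is precisely the content of the structural lemma applied to the midpoint set $M$ of induced $P_3$s in $G$, after separately dispatching the trivial cases of empty $G$ or two connected components. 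The resulting characterization is testable in polynomial time, for instance by computing $M$, verifying that $G[M]$ is a clique, and checking that $V(G) \setminus M$ decomposes into at most two cliques each fully adjacent to $M$.
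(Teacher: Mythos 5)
Your proposal is correct in outline and follows essentially the same route as the paper: indestructibility of the size-$\le 2$ graphs and of $\overline{P_3},\overline{K_3}$; the forced creation of a $\overline{P_3}$ whenever a $P_3$ or $K_3$ is destroyed; the Ramsey argument for $\{K_3,\overline{K_3}\}\subseteq\mathcal{F}$; the degree-counting formula for $\{P_3,K_3\}$; and sigma clique covers plus the midpoint lemma for $\{P_3,\overline{K_3}\}$. The one statement to tighten is the characterization for $\{P_3,\overline{K_3}\}$: the two cliques must cover all \emph{edges} (i.e., form a sigma clique cover), not merely satisfy $V(G)=C_1\cup C_2$ with $|C_1\cap C_2|\le k$ --- otherwise $C_4$ with $C_1=\{1,2\}$, $C_2=\{3,4\}$, $k=0$ would be a false positive. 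Your final $M$-based test is the correct formulation and coincides with the paper's algorithm.
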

\looseness=-1
The polynomial-time results for split- and threshold graphs use the observation that destroying some of their forbidden subgraphs by splitting, namely $P_4$, $C_4$, or $C_5$, necessarily creates another forbidden subgraph $\overline{C_4}$, reducing the problem to checking whether the input graph has the respective property.
This seems to be a general principle worthy of further exploration.

\toappendix{

  \section{Properties Characterized by Small Forbidden Induced Subgraphs: Full Proof}
  \label{section:leq3-full}

In this section, we provide a dichotomy regarding the classical complexity of \textsc{$\Pi$-Vertex Splitting} for properties characterized by sets of forbidden induced subgraphs, each containing no more than three vertices.
For each such property $\Pi$, we either demonstrate that \textsc{$\Pi$-Vertex Splitting} is in \P{}, or show that the problem is \NP-complete (\cref{theorem:leq3_complexity}).

The set of graphs of at most three vertices is given by $\set{K_0, K_1, K_2, \overline{K_2}, K_3, \overline{K_3}, P_3, \overline{P_3}}$.
Hence, we need to cover $2^8 - 2 = 254$ cases.
The following is an attempt to do so using a minimal number of lemmas, each dealing with a set of \textsc{$\Pi$-Vertex Splitting} problems of structurally similar $\Pi$.

The majority of cases will be covered by dealing with sets of forbidden induced subgraphs such that,
as soon as one forbidden subgraph is present, it is impossible to reach the desired graph class via vertex splitting (\cref{section:indestructible_fisgs}).
Afterward, three more involved cases remain.
Of these three, two concern polynomial-time solvable restrictions of the \NP-complete \textsc{Cluster Vertex Splitting} problem (\cref{section:poly_cvs}),
and finally, the last case is solved using a short excursion to Ramsey Theory (\cref{section:poly_cvs}).
We integrate all of these cases into a complete dichotomy in \cref{section:dichotomy}.

Finally, we show that \textsc{Split-} and \textsc{Threshold-Vertex Splitting} admit polynomial-time algorithms in \cref{section:split_thres}.

Throughout the section, we will commonly make use of the fact
that a graph property characterized by a finite set of forbidden induced subgraphs
can be recognized in polynomial time:
\begin{proposition}\label{lemma:k_is_zero_polynomial}
    Let $G$ be a graph and let $\mathcal{F}$ be a fixed, finite set of graphs.
    Then,
    the instance $(G, 0)$ of \textsc{$\free_\prec(\mathcal{F})$-Vertex Splitting} 
    can be decided in polynomial time.
\end{proposition}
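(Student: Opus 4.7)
The plan is to reduce the question to testing induced-subgraph containment. When $k=0$, no vertex splits are permitted, so the only graph reachable from $G$ is $G$ itself. Hence $(G,0)$ is a \textsc{Yes}-instance of \textsc{$\free_\prec(\mathcal{F})$-Vertex Splitting} if and only if $G \in \free_\prec(\mathcal{F})$, i.e., $\emb_\prec(F,G) = \varnothing$ for every $F \in \mathcal{F}$. The whole task therefore collapses to recognizing $\free_\prec(\mathcal{F})$ in polynomial time.

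Concretely, I would set $c \coloneqq \max_{F \in \mathcal{F}} |V(F)|$; this is a constant because $\mathcal{F}$ is fixed and finite. Then, for each $F \in \mathcal{F}$ and each ordered tuple of $|V(F)|$ distinct vertices of $G$, check in $O(|V(F)|^2)$ time whether the map sending a fixed enumeration of $V(F)$ to the chosen tuple is an induced embedding, that is, whether it preserves both edges and non-edges. Return \textsc{No} as soon as one such embedding is found, and \textsc{Yes} otherwise. The total running time is $O\bigl(|\mathcal{F}| \cdot c^2 \cdot |V(G)|^c\bigr) = O(n^c)$ with $n \coloneqq |V(G)|$, which is polynomial since $|\mathcal{F}|$ and $c$ are constants.

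Correctness follows immediately: the algorithm returns \textsc{Yes} exactly when no $F \in \mathcal{F}$ has an induced embedding into $G$, which by definition means $G \in \free_\prec(\mathcal{F})$, and this is in turn equivalent to $(G,0)$ being a \textsc{Yes}-instance. There is no genuine obstacle here; the argument is a direct brute-force enumeration over tuples of bounded size. The only minor point worth stating carefully in a formal write-up is that the constants hidden in the $O$-notation depend on $\mathcal{F}$, which is fine because $\mathcal{F}$ is part of the problem specification rather than the input.
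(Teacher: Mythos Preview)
Your proof is correct and follows essentially the same approach as the paper: reduce $(G,0)$ to membership testing in $\free_\prec(\mathcal{F})$, then brute-force over all size-$|V(F)|$ vertex tuples of $G$ for each $F\in\mathcal{F}$. The only cosmetic difference is that the paper enumerates unordered subsets and then permutations thereof, whereas you enumerate ordered tuples directly; these are equivalent.
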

\begin{proof}

We observe that $(G, 0)$ is a positive instance of our problem if and only if $G \in \free_\prec(\mathcal{F})$.
Since $\free_\prec(\mathcal{F}) = \bigcap_{H \in \mathcal{F}} \free_\prec(\set{H})$ and $\mathcal{F}$ is finite, the problem is reduced to checking each forbidden induced subgraph individually.
This can be accomplished with the following brute-force approach:
Let $H \in \mathcal{F}$.
Any $X \subseteq V(G)$ with $G[X] \simeq H$ is of size $|V(H)|$; thus, there are $\binom{|V(G)|}{|V(H)|} \in \mathcal{O}(|V(G)|^{|V(H)})$ candidates to consider for $X$.
A given candidate set $X$ can be checked as follows: For each of the possible $|V(H)|!$ permutations of $X$, build an incidence matrix of $G[X]$ with respect to the current ordering and compare it to a fixed incidence matrix of $H$. Crucially, $|V(H)|$ is constant.

Thus, in total, the running time of the complete procedure is bounded by a polynomial in $|V(G)|$. %
\end{proof}

Also, the class $\free_\prec(\set{P_3, K_3})$ will arise frequently.
Hence, it will be convenient to have a simple description of the graphs that constitute this class:
\begin{lemma}\label{lemma:k3_p3_characterization_1}
    Let $G$ be a graph. Then, $G \in \free_\prec(\set{P_3, K_3})$ if and only if
    each component of $G$ is composed of at most two vertices.
\end{lemma}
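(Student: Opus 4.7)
The plan is to prove both directions by a short direct argument, with the forward direction being the only one requiring any real case analysis.

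For the reverse direction, assume every component of $G$ has at most two vertices. I would observe that both $P_3$ and $K_3$ are connected graphs on three vertices. Any three vertices in $G$ must therefore include two vertices lying in distinct components of $G$, so the graph induced on them is disconnected. Hence neither $P_3$ nor $K_3$ can appear as an induced subgraph of $G$, giving $G \in \free_\prec(\{P_3, K_3\})$.

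For the forward direction, I would argue the contrapositive. Assume some connected component $C$ of $G$ contains at least three vertices. Since $C$ is connected, I would pick two vertices $u, w \in V(C)$ at distance exactly $2$ in $C$ (such a pair exists: start from any edge and extend by one step using a third vertex in $C$, shortening if necessary). Let $v$ be an internal vertex of a shortest $u$--$w$ path, so $uv, vw \in E(G)$ and $uw \notin E(G)$. Then $G[\{u, v, w\}] \simeq P_3$, contradicting $G \in \free_\prec(\{P_3, K_3\})$.

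If finding a distance-$2$ pair feels delicate, the alternative is to pick any three vertices of $C$ and use that $C$ being connected forces at least two edges on these three vertices (or else $C$ would split), then split into the cases ``two edges'' (giving $P_3$) and ``three edges'' (giving $K_3$). I do not anticipate any serious obstacle; the entire proof should fit in a few lines, and the main thing to be careful about is explicitly justifying the existence of the three vertices $u, v, w$ with $uv, vw \in E(G)$ rather than treating it as obvious.
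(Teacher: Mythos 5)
Your reverse direction is correct. The forward direction, however, contains a genuine gap: a connected component $C$ with at least three vertices need \emph{not} contain two vertices at distance exactly $2$. If $C$ is a complete graph on three or more vertices, every pair of its vertices is at distance $1$, so the pair $u,w$ you want does not exist, and the parenthetical justification (``extend by one step \dots shortening if necessary'') cannot produce it. This is not a removable technicality: such a component contains no induced $P_3$ at all, and the contradiction must instead come from the forbidden $K_3$ --- which your main argument never invokes, a sign that something is off, since a correct proof of this direction must use both forbidden graphs. Your fallback is also misstated: connectedness of $C$ does not force at least two edges among an \emph{arbitrary} triple of its vertices (in a path $v_1v_2v_3v_4v_5$ the triple $v_1,v_3,v_5$ spans no edges, yet removing it does not disconnect anything relevant). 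The repair is to choose the triple deliberately: since $C$ is connected and $|V(C)|\ge 3$, some vertex $v$ has two distinct neighbours $u,w$ in $C$ (e.g.\ the middle vertex of a path between two other vertices, or any vertex if $C$ is complete); then $G[\set{u,v,w}]$ is isomorphic to $P_3$ or to $K_3$ according to whether $uw$ is a non-edge or an edge, and either case is forbidden.

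For comparison, the paper takes a different and shorter route for this direction: it quotes the standard fact that $P_3$-free graphs are exactly the cluster graphs (disjoint unions of cliques) and then observes that $K_3$-freeness forces every clique, hence every component, to have at most two vertices. Your elementary argument, once repaired as above, is equally valid and has the advantage of not relying on that characterization.
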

\begin{proof}
    We observe that $\free_\prec(\set{P_3, K_3}) = \free_\prec(\set{P_3}) \cap \free_\prec(\set{K_3})$.
    In other words, $G$ is a cluster graph that is also triangle-free.
    Since all cliques of size at least three contain triangles as induced subgraphs, all connected components of $G$ do not contain more than two vertices.
\end{proof}

\subsection{Indestructible Forbidden Induced Subgraphs}
\label{section:indestructible_fisgs}

In this section, we address sets of forbidden induced subgraphs $\mathcal{F}$ that are \emph{indestructible}.
We say a set of forbidden induced subgraphs $\mathcal{F}$ is indestructible if it satisfies the following condition: Given any splitting sequence $G_0, \dots, G_\ell$, if a graph $G_i$ does not belong to $\free_\prec(\mathcal{F})$, then none of the subsequent graphs $G_j$ with $j \geq i$ belong to $\free_\prec(\mathcal{F})$ either.

Thus, each instance $(G, k)$ of \textsc{$\free_\prec(\mathcal{F})$-Vertex Splitting}
is equivalent to the instance $(G, 0)$,
meaning it suffices to determine whether $G \in \free_\prec(\mathcal{F})$.

\subsection{Cluster Graphs With Clusters of at Most Two Vertices as Forbidden Induced Subgraphs}

If a graph $G$ contains a certain number of isolated vertices, call it $i$, and a matching of, say, $m$ edges,
any graph obtainable from $G$ via vertex splitting will also contain at least $i$ isolated vertices and a matching of size at least $m$.
To show this, we introduce two simple lemmas.
The first lemma states that non-edges are preserved when splitting a vertex:
\begin{lemma}\label{lemma:splits_maintain_independent_pairs}
Let $G$ be a non-empty graph and $G'$ be obtained from $G$ by splitting some vertex $w \in V(G)$ into $w_1, w_2 \in V(G')$; furthermore let $v_1,v_2 \in V(G)$.
If $v_1v_2 \not\in E(G)$, then for all descendants $v_1', v_2' \in V(G')$ of $v_1$ and $v_2$ respectively, it holds that $v_1'v_2' \not\in E(G')$.
\end{lemma}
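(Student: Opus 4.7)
The plan is to carry out a brief case analysis on whether each of $v_1, v_2$ coincides with the split vertex $w$. The key observation underlying the proof is that a single vertex split is a purely local operation: for every vertex $v \in V(G) \setminus \{w\}$, the graph $G'$ contains exactly one descendant of $v$, namely $v$ itself, and the only edges that differ between $G$ and $G'$ are those incident to $w$ in $G$ or to $w_1, w_2$ in $G'$. Once this is stated explicitly, the conclusion is a direct consequence of the definition of a vertex split.

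I would then distinguish three cases. First, if neither $v_1$ nor $v_2$ equals $w$, the unique descendants are $v_1, v_2$ themselves, and their adjacency in $G'$ agrees with their adjacency in $G$, which is non-adjacency by hypothesis. Second, if exactly one of them equals $w$, say $v_1 = w$ and $v_2 \neq w$, then $v_2 \notin N_G(w)$ by the non-edge assumption; the splitting definition stipulates $N_{G'}(w_1), N_{G'}(w_2) \subseteq N_G(w)$, so $v_2$ is adjacent to neither descendant $w_1$ nor $w_2$ of $v_1$ in $G'$. Third, if $v_1 = v_2 = w$ (a case that only arises if $v_1 = v_2$ is permitted in the statement), the only relevant descendant pair is $\{w_1, w_2\}$, and the definition of the split explicitly introduces $w_1, w_2$ as non-adjacent in $G'$; self-loops are ruled out by the paper's global assumption.

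I do not anticipate any genuine obstacle, as the lemma is essentially a direct unpacking of the definition of a single vertex split. The only subtlety worth writing down carefully is the case $v_1 = w$ (and symmetrically $v_2 = w$), since this is the unique situation in which a vertex of $G$ has more than one descendant in $G'$, and where the argument depends on the property $V_1 \cup V_2 = N_G(w)$ of the split rather than on the untouched part of the graph.
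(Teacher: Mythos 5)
Your proposal is correct and follows essentially the same route as the paper: a case analysis on whether the split vertex $w$ coincides with $v_1$ or $v_2$, using the containments $N_{G'}(v_i) \subseteq (N_G(v_i)\setminus\{w\})\cup\{w_1,w_2\}$ for unsplit vertices and $N_{G'}(w_j)\subseteq N_G(w)$ for the descendants of the split vertex. Your third case ($v_1=v_2=w$) is a small completeness addition the paper handles only implicitly, but it does not change the argument.
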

\begin{proof}
    \emph{Case $w \not\in \set{v_1, v_2}$}:
    Neither $v_1$ nor $v_2$ are split in this case, so the descendants $v_1', v_2'$ of $v_1, v_2$ are uniquely determined: $v_1' = v_1$ and $v_2' = v_2$.
    The neighborhood of $v_1$ changes only insofar, as that $w$ is possibly exchanged for some subset of $\set{w_1, w_2}$, thus
    \begin{equation*}
        N_{G'}(v_1') \subseteq \left ( N_G(v_1) \setminus \set{w} \right ) \cup \set{w_1, w_2}.
    \end{equation*}
    Since $v_2' \not\in N_G(v_1)$ and $v_2' \not\in \set{w_1, w_2}$, we conclude that $v_2'$ is not
    a member of the superset on the right-hand side, therefore, it is also not included in $N_{G'}(v_1')$. Thus, we obtain $v_1'v_2' \not\in E(G')$.

    \smallskip
    \emph{Case $w \in \set{v_1, v_2}$}:
    Without loss of generality, $w = v_1$.
    Therefore, for any descendants $v_1', v_2'$ of $v_1, v_2$ respectively, it holds that $v_1' \neq v_1$ and $v_2' = v_2$.
    Vertices that are split maintain a subset of their original neighborhood; in our case this means
    \begin{equation*}
        N_{G'}(v_1') \subseteq N_G(v_1).
    \end{equation*}
    Since $v_2' \not\in N_G(v_1)$, we obtain $v_2' \not\in N_{G'}(v_1')$, thus $v_1'v_2' \not\in E(G')$.
\end{proof}
There is an analog for edges as well.
In this case, only at least one descendant edge is preserved, in contrast to the case of non-edges, where all descendant non-edges are preserved.
\begin{lemma}\label{lemma:splits_maintain_edges}
    Let $G$ be a non-empty graph and $G'$ be obtained from $G$ by splitting some vertex $w \in V(G)$ into $w_1, w_2 \in V(G')$; furthermore let $v_1,v_2 \in V(G)$.
    If $v_1v_2 \in E(G)$, then there are two descendants $v_1', v_2' \in V(G')$ of $v_1$ and $v_2$ respectively, such that $v_1'v_2' \in E(G')$.
\end{lemma}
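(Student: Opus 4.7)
The plan is to mirror the case analysis used in the preceding \cref{lemma:splits_maintain_independent_pairs}, splitting on whether the split vertex $w$ coincides with one of the endpoints of the edge $v_1v_2$ or not. The main work is simply bookkeeping about which vertex is a descendant of which and tracking how the neighborhood of $w$ is redistributed between $w_1$ and $w_2$.

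First, I would handle the case $w \notin \{v_1, v_2\}$. Here neither endpoint of the edge is split, so by definition of splitting the only descendants of $v_1$ and $v_2$ in $G'$ are $v_1$ and $v_2$ themselves. The operation only modifies adjacencies of $w$, so the edge $v_1v_2$ is preserved verbatim in $G'$. Taking $v_1' = v_1$ and $v_2' = v_2$ then gives the required descendants with $v_1'v_2' \in E(G')$.

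Next, I would handle the case $w \in \{v_1, v_2\}$. Without loss of generality assume $w = v_1$, so that $v_2$ is not split and its unique descendant is itself. By the definition of the splitting operation, the neighborhood $N_G(w) = N_G(v_1)$ is partitioned (up to overlap) into the sets $V_1 = N_{G'}(w_1)$ and $V_2 = N_{G'}(w_2)$ with $V_1 \cup V_2 = N_G(v_1)$. Since $v_1v_2 \in E(G)$, we have $v_2 \in N_G(v_1)$, so $v_2 \in V_1$ or $v_2 \in V_2$; in either case at least one of $w_1v_2$ or $w_2v_2$ lies in $E(G')$. Choosing the corresponding $w_i$ as $v_1'$ and $v_2' = v_2$ yields descendants of $v_1$ and $v_2$ whose connecting edge is present in $G'$.

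I do not anticipate a real obstacle here; the statement is a direct consequence of the definition of vertex splitting, and the only subtlety is that (unlike the non-edge case) only \emph{some} descendant pair is guaranteed to remain adjacent, since an edge incident to $w$ may be assigned to only one of $w_1$ or $w_2$. This asymmetry is precisely why the conclusion is an existential rather than a universal statement over descendants, and it is captured already by the case distinction above.
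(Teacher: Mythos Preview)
Your proposal is correct and follows essentially the same approach as the paper: the same case split on whether $w \in \{v_1, v_2\}$, with the first case observing that the edge $v_1v_2$ is untouched and the second using $N_{G'}(w_1) \cup N_{G'}(w_2) = N_G(v_1)$ to locate $v_2$ in one of the descendant neighborhoods. The paper writes out the neighborhood equality in the first case a bit more explicitly, but the argument is the same.
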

\begin{proof}
    \emph{Case $w \not\in \set{v_1, v_2}$}:
    In this case, neither $v_1$ nor $v_2$ is split. By definition of vertex splitting, it holds that
    \begin{equation*}
        N_{G'}(v_1) = \left( N_G(v_1) \setminus \set{w} \right) \cup W,
    \end{equation*}
    where $W \subseteq \set{w_1, w_2}$.
    Since $v_2 \in N_G(v_1)$ but $v_2 \not\in W \cup \set{w}$, it follows that
    $v_2 \in N_{G'}(v_1)$.
    Thus, we obtain $v_1v_2 \in E(G')$ and observe that $v_1$ originates from itself, as does $v_2$.

    \smallskip
    \emph{Case $w \in \set{v_1, v_2}$}:
    Without loss of generality, we may assume that $w = v_1$, that is, $v_1$ is split.
    By definition of vertex splitting, we have
    \begin{equation*}
        N_{G'}(w_1) \cup N_{G'}(w_2) = N_G(v_1).
    \end{equation*}
    Thus, since $v_2 \in N_G(v_1)$, either $v_2 \in N_{G'}(w_1)$, or $v_2 \in N_{G'}(w_2)$, or both.
    Without loss of generality, we assume the former case.
    We observe that $w_1v_2 \in E(G')$ and that $v_2$ originates from itself, whereas $w_1$ originates from $v_1$.
\end{proof}

With these two lemmas, we prove that sets of graphs of the class in question are indestructible:
\begin{lemma}\label{lemma:splits_maintain_cluster_graphs_with_tiny_clusters}
    Let $G$ be a graph, $H \in \free_\prec(\set{K_3, P_3})$, and $H \prec G$.
    Then, $H \prec G'$ for any $G'$ obtainable from $G$ via a vertex split. 
\end{lemma}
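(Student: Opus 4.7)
The plan is to exploit the characterization from \cref{lemma:k3_p3_characterization_1}, which tells us that $H \in \free_\prec(\set{K_3, P_3})$ means every connected component of $H$ has at most two vertices. Hence $E(H)$ forms a matching and every vertex of $V(H)$ is incident to at most one edge of $H$. Given an induced embedding $f \in \emb_\prec(H, G)$ and a split of some $w \in V(G)$ into $w_1, w_2$ yielding $G'$, I will construct a new injective map $f' \colon V(H) \to V(G')$ such that each $f'(v)$ is a descendant of $f(v)$, and then verify that $f' \in \emb_\prec(H, G')$.

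First I would define $f'$ component by component. For each edge $uv \in E(H)$, we have $f(u)f(v) \in E(G)$, so by \cref{lemma:splits_maintain_edges} there exist descendants $f'(u)$ of $f(u)$ and $f'(v)$ of $f(v)$ such that $f'(u)f'(v) \in E(G')$; fix any such choice. For each isolated vertex $v$ of $H$, choose $f'(v)$ to be an arbitrary descendant of $f(v)$ (either $f(v)$ itself if $f(v) \neq w$, or $w_1$ if $f(v) = w$). Because the components of $H$ are vertex-disjoint and each has size at most two, these local choices never conflict, so $f'$ is well-defined on all of $V(H)$.

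Next I would verify injectivity: if $v_1 \neq v_2$ in $V(H)$, then $f(v_1) \neq f(v_2)$ by injectivity of $f$. Since at most one of $f(v_1), f(v_2)$ can equal $w$, and the fresh vertices $w_1, w_2$ differ from every vertex of $V(G) \setminus \{w\}$, the images $f'(v_1)$ and $f'(v_2)$ are distinct. Finally, to check the induced-embedding condition, fix $u, v \in V(H)$. If $uv \in E(H)$, then $f'(u)f'(v) \in E(G')$ by construction. If $uv \notin E(H)$, then $f(u)f(v) \notin E(G)$, and \cref{lemma:splits_maintain_independent_pairs} guarantees that every pair of descendants of $f(u), f(v)$ is non-adjacent in $G'$; in particular $f'(u)f'(v) \notin E(G')$.

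The only potential pitfall is consistency of the descendant choices across edges sharing a vertex, but the matching structure of $E(H)$ rules this out, so the construction carries through without any global coordination. This yields $f' \in \emb_\prec(H, G')$, hence $H \prec G'$, completing the proof.
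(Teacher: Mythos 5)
Your proposal is correct and follows essentially the same route as the paper's proof: decompose $H$ into its components (each an isolated vertex or an isolated edge, by the characterization), map edges to adjacent descendant pairs via the edge-preservation lemma, map isolated vertices to arbitrary descendants, and use the non-edge-preservation lemma to confirm the embedding is induced. Your explicit checks of injectivity and of the non-conflict of local choices are slightly more detailed than the paper's write-up but add nothing substantively different.
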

\begin{proof}
    Let $X \subseteq V(G)$ such that $G[X] \simeq H$.
    We construct a new set $X' \subseteq V(G')$ for which $G'[X'] \simeq H$ will hold as follows:
    Map all vertices of $X$ that are isolated in $G[X]$ to any of their descendant vertices in $G'$,
    and map all vertex pairs of $X$ that are adjacent in $G[X]$ to some adjacent pair of their descendant vertices, as is possible by \cref{lemma:splits_maintain_edges}.
    Using the characterization given in \cref{lemma:k3_p3_characterization_1}, we see that this suffices to map each connected component of $G[X]$, that is, either an isolated vertex or an isolated edge,
    to either a distinct vertex or a distinct edge in $G'[X']$, respectively.

    Using \cref{lemma:splits_maintain_independent_pairs}, we observe that each edge not present in $G[X]$ forces that the corresponding edge (composed of the two corresponding descendant vertices) is also non-existent in $G'[X']$.
    Ergo, the number of connected components of $G[X]$ equals the number of connected components of $G'[X']$, implying $G'[X'] \simeq H$.
\end{proof}

Now we immediately obtain a polynomial-time algorithm:

\begin{proposition}\label{lemma:splitting_polynomial_for_cluster_graphs_at_most_two}
    Let $\mathcal{F} \subseteq \free_\prec(\set{K_3, P_3})$ where $\mathcal{F}$ is finite.
    Then, \textsc{$\free_\prec(\mathcal{F})$-Vertex Splitting} admits a polynomial-time algorithm. 
\end{proposition}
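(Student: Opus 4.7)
The plan is to combine the indestructibility lemma (\cref{lemma:splits_maintain_cluster_graphs_with_tiny_clusters}) with the polynomial-time recognizability of $\free_\prec(\mathcal{F})$ for finite $\mathcal{F}$ (\cref{lemma:k_is_zero_polynomial}).

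First, I will argue that for any $(G, k)$ the instance is equivalent to $(G, 0)$. Suppose $(G, k)$ is a positive instance witnessed by a splitting sequence $G = G_0, G_1, \ldots, G_\ell$ with $\ell \le k$ and $G_\ell \in \free_\prec(\mathcal{F})$. Assume for contradiction that $G \notin \free_\prec(\mathcal{F})$, so some $F \in \mathcal{F}$ satisfies $F \prec G_0$. Since $F \in \mathcal{F} \subseteq \free_\prec(\set{K_3, P_3})$, we may apply \cref{lemma:splits_maintain_cluster_graphs_with_tiny_clusters} inductively along the splitting sequence: if $F \prec G_i$, then $F \prec G_{i+1}$. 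Hence $F \prec G_\ell$, contradicting $G_\ell \in \free_\prec(\mathcal{F})$. Thus $G \in \free_\prec(\mathcal{F})$ and $(G, 0)$ is already positive. Conversely, every positive instance of $(G, 0)$ trivially yields a positive instance of $(G, k)$ by using the empty splitting sequence.

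Second, because $\mathcal{F}$ is finite, \cref{lemma:k_is_zero_polynomial} decides $(G, 0)$ in polynomial time. Composing these two observations gives the desired polynomial-time algorithm: on input $(G, k)$, ignore $k$ and simply test whether $G \in \free_\prec(\mathcal{F})$.

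There is no substantive obstacle here; all the technical work has been absorbed into the preceding indestructibility lemma. The only point demanding care is the inductive step along the splitting sequence, but this is immediate from \cref{lemma:splits_maintain_cluster_graphs_with_tiny_clusters} applied once per split.
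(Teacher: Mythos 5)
Your proposal is correct and follows the same route as the paper: invoke \cref{lemma:splits_maintain_cluster_graphs_with_tiny_clusters} inductively to show every $F \in \mathcal{F}$ is indestructible under splitting, conclude $(G,k)$ is equivalent to $(G,0)$, and decide the latter via \cref{lemma:k_is_zero_polynomial}. The only cosmetic difference is that the paper separately dispatches the case $\mathcal{F} = \varnothing$, which your argument handles vacuously anyway.
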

\begin{proof}
  Let $(G, k)$ be an instance of \textsc{$\free_\prec(\mathcal{F})$-Vertex Splitting}.
  If $\F = \emptyset$ then the algorithm may always return yes.
  Otherwise, there is at least one graph in \F.
  By applying \cref{lemma:splits_maintain_cluster_graphs_with_tiny_clusters},
  we know that none of the forbidden induced subgraphs of $\mathcal{F}$ can be removed via splitting in any splitting sequence.
  Therefore, $(G, k)$ is a positive instance if and only if $(G, 0)$ is.
  This we can check in polynomial time by \cref{lemma:k_is_zero_polynomial}.
\end{proof}

\subsection{Forbidden Induced Subgraphs That When Destroyed Introduce a Forbidden Induced \texorpdfstring{$\overline{P_3}$}{P3-complement} }

By the last subsection, we know that $\set{\overline{P_3}}$ is indestructible.
In general, it is not necessary that a superset of an indestructible set is indestructible too.
But note that some graphs, for example $P_3$ and $K_3$, necessarily introduce an induced $\overline{P_3}$ when they get destroyed in a splitting sequence.
Thus, a set of indestructible graphs $\mathcal{F}$ containing $\overline{P}$, augmented with either $P_3, K_3$, or both, forms an indestructible set.
Hence, we obtain the following lemma:

\begin{proposition}\label{lemma:p3_complement_introduction}
    Let $\mathcal{F} \subseteq \free_\prec(\set{P_3, K_3})$ with $\overline{P_3} \in \mathcal{F}$ as well as $\mathcal{F}$ finite,
    and let $\mathcal{G} \subseteq \set{P_3, K_3}$ with $\mathcal{G} \neq \varnothing$.
    Then, \textsc{$\free_\prec(\mathcal{F} \cup \mathcal{G})$-Vertex Splitting} admits a polynomial-time algorithm. 
\end{proposition}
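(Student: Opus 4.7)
The plan is to show that any instance $(G, k)$ of \textsc{$\free_\prec(\mathcal{F} \cup \mathcal{G})$-Vertex Splitting} is positive if and only if $(G, 0)$ is positive, so that \cref{lemma:k_is_zero_polynomial} immediately yields a polynomial-time algorithm. The ``if'' direction is trivial, so the work lies in showing that whenever $G$ contains a forbidden induced subgraph, no splitting sequence can remove all of them. I split into two cases according to which kind of forbidden subgraph is present.

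First, if $G$ contains an induced copy of some $F \in \mathcal{F}$, then since $\mathcal{F} \subseteq \free_\prec(\set{P_3, K_3})$, \cref{lemma:splits_maintain_cluster_graphs_with_tiny_clusters} guarantees that this copy of $F$ persists through every vertex split, so no graph obtainable from $G$ lies in $\free_\prec(\mathcal{F})$. The interesting case is when $G \in \free_\prec(\mathcal{F})$ but $G \notin \free_\prec(\mathcal{G})$. Since $\overline{P_3} \in \mathcal{F}$, the graph $G$ is $\overline{P_3}$-free, equivalently a complete multipartite graph, and since $\mathcal{G} \neq \varnothing$ and $\mathcal{G} \subseteq \set{P_3, K_3}$, the graph $G$ contains at least one induced $P_3$ or $K_3$. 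I claim that after any single vertex split, the resulting graph $G'$ either still contains an induced $P_3$ or $K_3$, or it contains an induced $\overline{P_3}$; iterating this together with the indestructibility of $\overline{P_3}$ then yields that no splitting sequence starting from $G$ reaches $\free_\prec(\mathcal{F}\cup\mathcal{G})$.

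The core structural claim is proved by contraposition: assume $G$ is complete multipartite with parts $V_1,\dots,V_p$, $v \in V_1$ is split into $v_1,v_2$ with neighborhoods $W_1,W_2 \subseteq N_G(v) = V \setminus V_1$, and $G'$ is also $\overline{P_3}$-free, i.e.\ complete multipartite. Since $v_1 v_2 \notin E(G')$, the two descendants lie in a common part $V'_1$ of $G'$. Every pair $u,u' \in V_i$ ($i \ge 2$) is still a non-edge in $G'$, and every pair from distinct $V_i,V_j$ ($i,j \ge 2$) is still an edge, so the parts $V_2,\dots,V_p$ are preserved in $G'$; moreover every $u \in V_1 \setminus \set{v}$ retains its neighborhood $V \setminus V_1$ and so is non-adjacent to both $v_1$ and $v_2$, placing it in $V'_1 = (V_1 \setminus \set{v}) \cup \set{v_1,v_2}$. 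The complete-multipartite condition on $G'$ then forces $W_1 = W_2 = N_G(v)$, i.e.\ the split is ``redundant'', and in this case replacing $v$ by $v_1$ in any induced $P_3$ or $K_3$ of $G$ yields an induced $P_3$ or $K_3$ of $G'$. Hence if no $\overline{P_3}$ was introduced, every $P_3$ or $K_3$ of $G$ survives.

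The main obstacle is carrying out this structural case analysis cleanly, in particular correctly handling the interaction between ``trivial'' splits (one of $W_1,W_2$ empty, which introduces an isolated vertex that together with any edge forms a $\overline{P_3}$) and non-trivial ones; everything else is bookkeeping on top of \cref{lemma:k_is_zero_polynomial} and \cref{lemma:splits_maintain_cluster_graphs_with_tiny_clusters}.
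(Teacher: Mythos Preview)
Your argument is correct, and it follows the same high-level scheme as the paper: reduce to showing that $(G,k)$ is positive if and only if $(G,0)$ is, by arguing that once a forbidden $P_3$ or $K_3$ is present, any attempt to remove it via splitting necessarily produces an indestructible $\overline{P_3}$. Where you diverge is in how you establish this last implication.

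The paper argues locally: it looks at the first moment a copy of $H \in \{P_3, K_3\}$ disappears along a splitting sequence, observes that the split must hit a degree-$2$ vertex of that copy and send its two incident edges to different descendants, and reads off an induced $\overline{P_3}$ directly from those three vertices. Your route is global: you use that $\overline{P_3}$-free is exactly complete multipartite, and you classify \emph{all} splits of a complete multipartite graph that keep the result complete multipartite, showing they must be the fully redundant ones with $W_1 = W_2 = N_G(v)$, which of course preserve every induced $P_3$ and $K_3$. Both arguments are sound; the paper's is a couple of lines shorter and avoids the multipartite bookkeeping, while yours gives a slightly stronger structural statement (a characterization of $\overline{P_3}$-preserving splits) that is not needed here but is pleasant to have. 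Your handling of the trivial-split case and the induction along the splitting sequence is fine; the one place you are terse is the sentence ``the complete-multipartite condition on $G'$ then forces $W_1 = W_2 = N_G(v)$'', but the missing line is just: any $u \in N_G(v)$ is adjacent to at least one of $v_1,v_2$, hence lies in a different part, hence is adjacent to both.
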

\begin{figure}
    \begin{center}
        \includegraphics{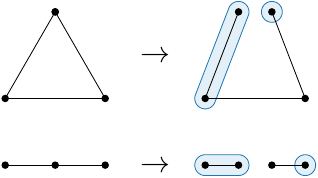}
    \end{center}
    \caption[The graphs $K_3$ and $P_3$ are split to break isomorphism.]{The graphs $K_3$ and $P_3$ are split to break isomorphism. An induced embedding of $\overline{P_3}$ is highlighted in each graph after the split.}
    \label{figure:K_3_compl_introduced}
\end{figure}
\begin{proof}
Let $(G, k)$ be an instance of \textsc{$\free_\prec(\mathcal{F})$-Vertex Splitting}.
If there is $H \in \mathcal{F}$ with $H \prec G$, then $(G, k)$ is a negative instance of 
\textsc{$\free_\prec(\mathcal{F} \cup \mathcal{G})$-Vertex Splitting}, since by \cref{lemma:splits_maintain_cluster_graphs_with_tiny_clusters},
$H$ cannot be destroyed by splitting vertices.
Since $\mathcal{F}$ is finite, this check can be performed in polynomial-time using \cref{lemma:k_is_zero_polynomial}.
Otherwise, again with \cref{lemma:k_is_zero_polynomial},
we can decide the instance $(G, 0)$ of \textsc{$\free_\prec(\mathcal{G})$-Vertex Splitting}.
If the result is positive, then so is the instance $(G, k)$ of \textsc{$\free_\prec(\mathcal{F} \cup \mathcal{G})$-Vertex Splitting}.
Otherwise, there is $H \in \mathcal{G}$ with $H \prec G$.

Suppose there is a splitting sequence $G_0, \dots, G_\ell$ with $G_0 = G$ such that $H \not\prec G_\ell$.
Then, there is $i \in \set{0, \dots, \ell-1}$ such that $H \prec G_i$ and $H \not\prec G_{i+1}$.
Irrespective of whether $H = P_3$ or $H = K_3$, to destroy the copy of $H$ in $G_i$, a vertex that has degree two in the copy must have been split in $G_i$ to produce $G_{i+1}$.
Also, the two edges incident to the split vertex in the copy must have been assigned to different descendants, for otherwise, the copy would persist.
But then, the split introduces a new induced $\overline{P_3}$ in $G_{i+1}$, which cannot be removed via vertex splitting by \cref{lemma:splits_maintain_cluster_graphs_with_tiny_clusters};
see \cref{figure:K_3_compl_introduced} for an illustration.
Hence, we have derived a contradiction and can conclude that the original instance is negative.
\end{proof}

\subsection{Properties With a Forbidden Induced Subgraph of Size at Most Two}
The last observation that we make in this section is that when a finite set of forbidden induced subgraphs contains a graph of at most two vertices, then
the associated vertex splitting problem becomes trivial:

\begin{proposition}\label{lemma:splitting_polynomial_for_leq2}
    Let $\mathcal{F}$ be a finite set of graphs with $\set{K_0, K_1, K_2, \overline{K_2}} \cap \mathcal{F} \neq \varnothing$.
    Then, \textsc{$\free_\prec(\mathcal{F})$-Vertex Splitting} admits a polynomial-time algorithm. 
\end{proposition}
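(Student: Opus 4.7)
The plan is to split into four sub-cases according to which of $K_0, K_1, K_2, \overline{K_2}$ lies in $\mathcal{F}$, and in each case to argue that the instance $(G, k)$ is equivalent to the instance $(G, 0)$; since $\mathcal{F}$ is finite, the latter is decidable in polynomial time via \cref{lemma:k_is_zero_polynomial}. The two ``at most one vertex'' cases are essentially bookkeeping: the empty map witnesses $K_0 \prec G$ for every graph $G$, so $K_0 \in \mathcal{F}$ forces an unconditional ``no''; and $K_1 \in \mathcal{F}$ forces the target class to contain only the empty graph, which together with the fact that vertex splits only ever add vertices yields that $(G, k)$ is positive iff $V(G) = \varnothing$, iff $(G, 0)$ is positive.

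For $K_2 \in \mathcal{F}$, the target class is the edgeless graphs. Applying \cref{lemma:splits_maintain_edges} inductively along a splitting sequence $G_0, \ldots, G_\ell$ shows that each edge of $G_0$ has some descendant edge in $G_\ell$. Hence $G$ being edgeless is both necessary and sufficient, reducing $(G, k)$ to $(G, 0)$.

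The case I expect to be the main obstacle is $\overline{K_2} \in \mathcal{F}$, since here splitting does not merely \emph{preserve} but \emph{creates} a forbidden induced subgraph: by the very definition of a vertex split, the two descendants $v_1, v_2$ of the split vertex are non-adjacent, so they induce $\overline{K_2}$ in the resulting graph. To close the argument I would then invoke \cref{lemma:splits_maintain_cluster_graphs_with_tiny_clusters} with the (easy) observation that $\overline{K_2} \in \free_\prec(\set{P_3, K_3})$ to conclude that this freshly introduced copy cannot be destroyed by any later split. Consequently, any splitting sequence of length at least one leaves a graph outside $\free_\prec(\mathcal{F})$, and $(G, k)$ is positive iff $(G, 0)$ is. Combining the four sub-cases, each decidable in polynomial time by \cref{lemma:k_is_zero_polynomial}, finishes the proof.
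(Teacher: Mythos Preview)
Your approach matches the paper's proof: a four-way case split, each reducing $(G,k)$ to $(G,0)$ and then invoking \cref{lemma:k_is_zero_polynomial}. The $K_0$, $K_1$, and $\overline{K_2}$ cases are handled just as in the paper (the paper cites \cref{lemma:splits_maintain_independent_pairs} rather than \cref{lemma:splits_maintain_cluster_graphs_with_tiny_clusters} for the persistence of $\overline{K_2}$, but either works).

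There is one slip in your $K_2$ case. You write that ``the target class is the edgeless graphs'' and that ``$G$ being edgeless is both necessary and sufficient''; but $\mathcal{F}$ may contain other graphs besides $K_2$, so $\free_\prec(\mathcal{F})$ is in general a \emph{proper} subclass of the edgeless graphs, and edgelessness of $G$ is only necessary, not sufficient, for $(G,k)$ to be positive. What you still owe is: if $G$ is already edgeless but $(G,0)$ is negative, why is $(G,k)$ negative too? The paper closes this by observing that any $F \in \mathcal{F}$ witnessing $F \prec G$ must itself be edgeless (hence $F \in \free_\prec(\{P_3,K_3\})$), and then \cref{lemma:splits_maintain_cluster_graphs_with_tiny_clusters} shows such an $F$ survives every split. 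Adding this one line fixes the argument.
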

\begin{proof}
We perform a case analysis.

\smallskip
\emph{Case $K_0 \in \mathcal{F}$}:
The empty graph is an induced subgraph of all graphs; hence $(G, k)$ is a negative instance.

\smallskip
\emph{Case $K_1 \in \mathcal{F}$}:
If $K_1 \prec G$, then $(G, k)$ is a negative instance since vertex splitting never reduced the number of vertices.
If otherwise, $G$ does not contain vertices, it cannot be split.
Hence $(G, k)$ is a positive instance if and only if $(G, 0)$ is.
This can be decided in polynomial time (\cref{lemma:k_is_zero_polynomial}).

\smallskip
\emph{Case $K_2 \in \mathcal{F}$}:
If $K_2 \prec G$, then $(G, k)$ is a negative instance, for $K_2$ cannot be removed via vertex splitting by \cref{lemma:splits_maintain_edges}.
Otherwise, $G$ is edge-less and all induced subgraphs of $G$ are independent sets.
Through splitting, only more independent sets of higher cardinality can be introduced, but none removed (\cref{lemma:splits_maintain_cluster_graphs_with_tiny_clusters}).
Hence, $(G, k)$ is a positive instance if and only if $(G, 0)$ is. This can be decided in polynomial time (\cref{lemma:k_is_zero_polynomial}).

\smallskip
\emph{Case $\overline{K_2} \in \mathcal{F}$}:
Similarly, if $\overline{K_2} \prec G$, then $(G, k)$ is a negative instance, for $\overline{K_2}$ cannot be removed via vertex splitting by \cref{lemma:splits_maintain_independent_pairs}.
Otherwise, $G$ contains at least one edge. Since any split introduces a new $\overline{K_2}$ as the two descendants of the vertex that is split are always independent, $(G, k)$ is a positive instance if and only if $(G, 0)$ is.
This can be decided in polynomial time (\cref{lemma:k_is_zero_polynomial}).
\end{proof}

\subsection[Two tractable Restrictions of \textsc{Cluster Vertex Splitting}]{Two Polynomial-Time Solvable Restrictions of \textsc{Cluster Vertex Splitting}}
\label{section:poly_cvs}

Firbas et al.~\cite{firbas_cluster_2023} showed that \textsc{Cluster Vertex Splitting} (\textsc{CVS}), that is, \textsc{$\free_\prec(P_3)$-Vertex Splitting}, is \NP-complete.
Here, we study two restrictions of said problem that render it polynomial-time solvable.

\subsection{Cluster graphs with clusters of at most two vertices}

\begin{table}
    \caption[A graph is transformed to a member of $\free_\prec(\set{P_3, K_3})$.]{A minimal-length splitting sequence transforming $K_3 \cupdot K_1$ into a graph belonging to $\free_\prec(\set{P_3, K_3})$.
    For each graph, the number of non-isolated vertices is counted, as well as the number of edges multiplied by two.
    These two numbers are equal in the final graph, and thus, according to \cref{lemma:k3_p3_characterization_2}, it is a member of the desired class.
    }
    \label{table:splitting_seq_table}
    \begin{center}
        \begin{tabular}{@{}c Sc Sc Sc Sc@{}} \toprule
            & \cincludegraphics{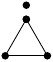} & \cincludegraphics{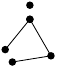} & \cincludegraphics{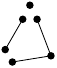} & \cincludegraphics{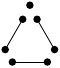} \\ \midrule
           $|V \setminus I|$ & 3 & 4 & 5 & 6 \\
           $2|E|$ & 6 & 6 & 6 & 6 \\
           \bottomrule
       \end{tabular}
    \end{center}
\end{table}

The first restriction of \textsc{Cluster Vertex Splitting} deals with cluster graphs consisting only of isolated vertices and isolated edges, that is, the class $\free_\prec(\set{K_3, P_3})$.
Intuitively, the optimal strategy to turn a graph into this shape is to 
``split away'' edge after edge, until all edges are isolated.
Consult \cref{table:splitting_seq_table} for an example.

We characterize the class in question in terms of an equation relating the number of edges, isolated vertices, and non-isolated vertices.
Then, using this characterization, we will be able to deduce the optimal strategy to solve the associated Vertex Splitting problem.

\begin{lemma}\label{lemma:k3_p3_characterization_2}
    Let $G = (V, E)$ be a graph and let $I \coloneqq \set{v \in V \mid d_G(v) = 0}$, that is, all isolated vertices of $G$.
    Then, $G \in \free_\prec(\set{K_3, P_3})$ if and only if $2|E| = |V \setminus I|$.
\end{lemma}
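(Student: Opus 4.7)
The plan is to bridge the desired equation with the structural characterization of $\free_\prec(\set{P_3, K_3})$ already obtained in \cref{lemma:k3_p3_characterization_1}, namely that a graph lies in this class iff each of its connected components has at most two vertices. Given this, the lemma reduces to showing that each component has at most two vertices if and only if $2|E| = |V \setminus I|$.

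For the forward direction, assuming each component has at most two vertices, every non-isolated vertex lies in a component of exactly two vertices connected by a single edge. The set $V \setminus I$ is then partitioned into edge-endpoint pairs, one pair per edge, giving $|V \setminus I| = 2|E|$ directly.

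For the reverse direction, I would rely on the handshake lemma: $\sum_{v \in V} d_G(v) = 2|E|$, and since isolated vertices contribute nothing, $\sum_{v \in V \setminus I} d_G(v) = 2|E|$. Combined with $d_G(v) \geq 1$ for every $v \in V \setminus I$, this yields $|V \setminus I| \leq \sum_{v \in V \setminus I} d_G(v) = 2|E|$. The hypothesis $2|E| = |V \setminus I|$ forces equality, hence $d_G(v) = 1$ for all $v \in V \setminus I$. A graph in which every non-isolated vertex has degree exactly one is a disjoint union of isolated vertices and single edges, so every component has at most two vertices, as desired.

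There is no real obstacle here; this is a direct double counting plus degree argument. The only care needed is in pointing to \cref{lemma:k3_p3_characterization_1} at the right moment to translate the structural characterization into the equational one, and in being explicit that equality in $|V \setminus I| \leq \sum_{v \in V \setminus I} d_G(v)$ forces $d_G(v) = 1$ throughout $V \setminus I$.
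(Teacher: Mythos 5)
Your proof is correct and matches the paper's argument essentially step for step: the forward direction counts endpoints of isolated edges via the structural characterization from \cref{lemma:k3_p3_characterization_1}, and the reverse direction uses the handshake lemma together with $d_G(v)\ge 1$ on $V\setminus I$ to force degree one everywhere. No differences worth noting.
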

\begin{proof}
$(\Rightarrow)\colon$
Let $G = (V, E) \in \free_\prec(\set{K_3, P_3})$. By \cref{lemma:k3_p3_characterization_1}
we get that $G[V \setminus I]$ is a disjoint union of $K_2$'s.
Furthermore, $E = E(G[V \setminus I])$. Hence, $2|E| = |V \setminus I|$.

$(\Leftarrow)\colon$
Let $G = (V, E)$ be a graph with $2|E| = |V \setminus I|$.
By the handshaking lemma, we get
\begin{DispWithArrows*}[fleqn,mathindent=25pt,displaystyle,wrap-lines]
    |V \setminus I| &= \sum_{v \in V} d_G(v)  \Arrow{$d_G(v) = 0$ for all $v \in I$}\\
                    &= \sum_{v \in V \setminus I} d_G(v) \Arrow{$d_G(v) \ge 1$ for all $v \in V \setminus I$}\\
                    d_G(v) = 1 &\text{ for all } v \in V \setminus I.
\end{DispWithArrows*}

We conclude that all connected components that are not isolated are isomorphic to $K_2$.
Thus, by \cref{lemma:k3_p3_characterization_1}, we derive $G \in \free_\prec(\set{K_3, P_3})$.
\end{proof}

Consider a graph that does not fulfill the characterization given in the last lemma.
Then, we have $|V \setminus I| < 2|E|$. To transform this inequality to an equality using vertex splitting,
we need to increase the left-hand side of the inequality, that is,  $|V \setminus I|$, while not increasing the right-hand side, that is, $2|E|$.
The next lemma states that this is always possible.

\begin{lemma}\label{lemma:split_preserving_edge_cardinality}
    Let $G$ be a graph and $I \subseteq V(G)$ the set of isolated vertices of $G$.
    If $|V(G) \setminus I| < 2|E(G)|$, then there is a vertex split producing $G'$ such that
    $G'$ has the same number of edges as $G$, and the same set of isolated vertices as~$G$.
\end{lemma}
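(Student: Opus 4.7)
The plan is to find a vertex $v$ of degree at least two in $G$ and perform a disjoint split of $v$ into two non-trivial parts; I then verify that this split preserves both the edge count and the isolated-vertex set.

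First I would establish the existence of such a vertex. By the handshaking lemma, $\sum_{v \in V(G) \setminus I} d_G(v) = 2|E(G)|$. The assumption $|V(G) \setminus I| < 2|E(G)|$ thus gives
\[
  \sum_{v \in V(G) \setminus I} d_G(v) \;>\; |V(G) \setminus I|,
\]
so the average degree among non-isolated vertices exceeds $1$. Since every $v \in V(G) \setminus I$ satisfies $d_G(v) \geq 1$, there must exist some $v^\star \in V(G) \setminus I$ with $d_G(v^\star) \geq 2$.

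Next I would specify the split. Partition $N_G(v^\star)$ into two non-empty sets $V_1, V_2$ with $V_1 \cap V_2 = \varnothing$ (possible because $d_G(v^\star) \geq 2$), and let $G'$ be obtained by the disjoint split of $v^\star$ into $v^\star_1, v^\star_2$ with neighborhoods $V_1, V_2$ respectively. Because the split is disjoint, each edge $v^\star w \in E(G)$ corresponds to exactly one descendant edge (either $v^\star_1 w$ or $v^\star_2 w$), and all edges not incident to $v^\star$ are untouched; hence $|E(G')| = |E(G)|$.

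Finally I would verify that the isolated-vertex set is preserved. Since $d_G(v^\star) \geq 2 > 0$, we have $v^\star \notin I$, so $I \subseteq V(G) \setminus \{v^\star\} \subseteq V(G')$. Every vertex of $I$ has its neighborhood unchanged by the split, so all of $I$ remains isolated in $G'$. Conversely, no new vertex is isolated in $G'$: vertices in $V(G) \setminus (I \cup \{v^\star\})$ have at least one non-$v^\star$ neighbor or, if their only neighbor was $v^\star$, they inherit exactly one of $v^\star_1, v^\star_2$ as a neighbor; and the two descendants $v^\star_1, v^\star_2$ are non-isolated by construction since $V_1, V_2 \neq \varnothing$. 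Thus the set of isolated vertices of $G'$ equals $I$, completing the argument. The proof is essentially a direct computation, with no real obstacle beyond correctly invoking handshaking and choosing the split to be disjoint and non-trivial on both sides.
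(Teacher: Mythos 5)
Your proof is correct and follows exactly the same route as the paper's: apply the handshaking lemma to conclude that some non-isolated vertex has degree at least two, then perform a non-trivial disjoint split of that vertex. You simply spell out the final verification (edge count and isolated-vertex set preserved) in more detail than the paper, which leaves that step implicit.
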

\begin{proof}
    Using an argument like in the proof of \cref{lemma:k3_p3_characterization_2}, we deduce that
    \begin{equation*}
        |V(G) \setminus I| < \sum_{v \in V \setminus I} d_G(v).
    \end{equation*}
    Since we are summing over non-isolated vertices, we observe that there is $v \in V \setminus I$ with $d_G(v) \ge 2$.
    Now, each non-trivial disjoint split of $v$ yields a graph satisfying the conditions of this lemma.
\end{proof}

\begin{algorithm}[t]
    \SetKwInput{KwInput}{Input}
    \SetKwInput{KwOutput}{Output}
    \DontPrintSemicolon 
    \SetAlgoLined
    \KwInput{$(G, k)$, instance of \textsc{$\free_\prec(\set{P_3, K_3})$-Vertex Splitting}}
    \KwOutput{\texttt{true} if $G$ can be made a member of $\free_\prec(\set{P_3, K_3})$ using at most $k$ vertex splits, \texttt{false} otherwise}
    \BlankLine
    $I \gets \set{v \in V(G) \mid d_G(v) = 0}$\;
    \Return{$k \geq 2\M{G} + |I| - \N{G}$}
    \caption{\textsc{$\free_\prec(\set{P_3, K_3})$-Vertex Splitting}}
    \label{algorithm:p3k3_vertex_splitting}
\end{algorithm}

Finally, we make our reasoning rigorous and derive an appropriate algorithm:
\begin{proposition}\label{lemma:p3_k3_splitting_polynomial}
    There is a polynomial-time algorithm for \textsc{$\free_\prec(\set{P_3, K_3})$-Vertex Splitting}.
\end{proposition}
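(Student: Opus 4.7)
The plan is to justify \cref{algorithm:p3k3_vertex_splitting} via a potential-function argument; the polynomial runtime is immediate since $|E(G)|$, $|V(G)|$, and the isolated-vertex count can be computed in linear time. Define $\phi(G) \coloneqq 2|E(G)| - |V(G) \setminus I(G)|$, where $I(G)$ denotes the isolated vertices of $G$. By the handshake lemma, $2|E(G)| = \sum_{v \in V \setminus I} d_G(v) \geq |V \setminus I|$, so $\phi(G) \geq 0$, and by \cref{lemma:k3_p3_characterization_2} equality holds precisely when $G \in \free_\prec(\set{P_3, K_3})$. Since the algorithm returns true iff $k \geq \phi(G)$, it suffices to prove that the minimum number of splits needed to turn $G$ into a member of $\free_\prec(\set{P_3, K_3})$ is exactly $\phi(G)$.

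For the lower bound, I would show that a single vertex split decreases $\phi$ by at most $1$. When splitting $v$ via $V_1, V_2$, the number of edges weakly increases (edges $vu$ with $u \in V_1 \cap V_2$ get duplicated, all others are preserved). For $|V \setminus I|$: the vertex $v$ is removed (contributing $-1$ if $v$ was non-isolated, else $0$); $v_1, v_2$ are added, each contributing $+1$ to $|V \setminus I|$ iff its side is non-empty; and every neighbor of $v$ stays non-isolated because it is adjacent to at least one descendant. Hence $|V \setminus I|$ grows by at most $1$ per split, so $\phi$ drops by at most $1$, and any splitting sequence ending at $\phi = 0$ has length at least $\phi(G)$.

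For the upper bound, I would iterate \cref{lemma:split_preserving_edge_cardinality}: whenever $\phi(G) > 0$, i.e., $|V \setminus I| < 2|E|$, there is a non-trivial disjoint split producing $G'$ with the same edge count and isolated set. Since the split is non-trivial and disjoint, both descendants lie in $V \setminus I$, so $|V \setminus I|$ rises by exactly $1$ and $\phi(G') = \phi(G) - 1$. Iterating $\phi(G)$ times yields a graph in the desired class. The main technical point is the accounting in the lower bound; in particular, the observation that no neighbor of $v$ can become newly isolated ensures that no split can ever remove more than one unit of $\phi$ at a time.
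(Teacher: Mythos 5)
Your proposal is correct and follows essentially the same route as the paper: both establish that the optimal number of splits equals $2|E|+|I|-|V|$ (your $\phi(G)$), using \cref{lemma:k3_p3_characterization_2} for the target characterization and iterating \cref{lemma:split_preserving_edge_cardinality} for the upper bound. The only difference is cosmetic: you phrase the lower bound as a per-split potential decrease of at most one, whereas the paper counts globally over a minimum-length sequence after arguing it creates no new isolated vertices; your local accounting is a slightly more self-contained way to get the same inequality.
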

\begin{proof}
    Let $(G, k)$ with $G = (V, E)$ be an instance of \textsc{$\free_\prec(\set{P_3, K_3})$-Vertex Splitting}.
    Furthermore, let $I \subseteq V$ be the set of isolated vertices of $V$.
    We claim that \cref{algorithm:p3k3_vertex_splitting} decides the problem correctly,
    that is,
    $(G, k)$ is a positive instance of \textsc{$\free_\prec(\set{P_3, K_3})$-Vertex Splitting} if and only if $k \geq 2|E| + |I| - |V|$.
    To show the correctness of \cref{algorithm:p3k3_vertex_splitting}, 
    we prove that there is a splitting sequence of length $2|E| + |I| - |V|$ such that the last graph is in $\free_\prec(\set{P_3, K_3})$, and that there is no shorter such sequence.
    
    \smallskip\emph{Existence:}
    By applying \cref{lemma:split_preserving_edge_cardinality} repeatedly for as long as it is applicable,
    we obtain a splitting sequence where the number of edges, as well as the number of isolated vertices, remains invariant throughout the sequence, while the number of vertices increases by one for each subsequent graph.
    This process cannot continue indefinitely and for the last graph of the sequence, call it $G_\ell$, we find that $|V(G_\ell) \setminus I| = 2|E(G_\ell)|$.
    Using \cref{lemma:k3_p3_characterization_2}, we conclude that $G_\ell$ is a member of $\free_\prec(\set{P_3, K_3})$.
    Since the left-hand side of the inequality in \cref{lemma:split_preserving_edge_cardinality} increases by one for each graph in the sequence,
    while the right-hand side remains constant, we can deduce via subtraction that $\ell = 2|E| + |I| - |V|$.

    \smallskip\emph{Minimality:}
    Let $G_0, \dots, G_\ell$ be a minimum-length sequence of graphs generated by successive vertex splits with $G_0 = G$
    such that $G_\ell \in \free_\prec(\set{K_3, P_3})$.
    The sequence does not introduce any new isolated vertices, for if it did, a shorter sequence satisfying our conditions could be obtained by removing such operations.

    We note that $|E(G_\ell)| \ge |E|$ since the number of edges in a graph cannot decrease by vertex splitting. 
    Moreover, $|V(G_\ell)| = |V| + \ell$, since each of the $\ell$ splits introduces exactly one new vertex.
    By \cref{lemma:k3_p3_characterization_2}, it holds that $2|E(G_\ell)| = |V(G_\ell)| - |I|$.
    Using these premises, we obtain
    \begin{equation*}
        2|E| \leq 2|E(G_\ell)| = |V| + \ell - |I|. 
    \end{equation*}
    Finally, by subtracting $|V| - |I|$ from both sides, we conclude that $\ell \ge 2|E| + |I| - |V|$.
\end{proof}

\subsection{Cluster Graphs of at Most Two Clusters}

Now, instead of cluster graphs with clusters of size at most two, we consider cluster graphs of arbitrarily sized clusters, but the restriction that there shall be at most two of them.
This class is described by $\free_\prec(\set{P_3, \overline{K_3}})$.
Intuitively, to solve \textsc{$\free_\prec(\set{P_3, \overline{K_3}})$-Vertex Splitting}, we need to recognize graphs that consist of at most two possibly overlapping clusters, such that their overlap spans at most $k$ vertices.
See \cref{figure:duo_cluster} for an example.

To capture the locations of the clusters (cliques) of the solution cluster graph, 
we use the notion of sigma clique covers:
\begin{definition}\label{definition:scc}
  Let $G$ be a graph. Then, $\mathcal{C} \subseteq \mathcal{P}(V)$ is called a \emph{sigma clique cover} of 
  $G$ if
  \iflong    
    \begin{enumerate}
    \item $G[C]$ is a clique for all $C \in \mathcal{C}$ and
    \item for each $e \in E(G)$, there is $C \in \mathcal{C}$ such that $e \in E(G[C])$, that is, all edges of $G$ are ``covered'' by some clique of $\mathcal{C}$.
    \end{enumerate}
    The \emph{weight} of a sigma clique cover $\mathcal{C}$ is denoted by $\wgt(\mathcal{C})$, where
    \[
      \wgt(\mathcal{C}) \coloneqq \sum_{C \in \mathcal{C}} |C|.    
    \]  

  \else
    \begin{inparaenum}
    \item $G[C]$ is a clique for all $C \in \mathcal{C}$ and
    \item for each $e \in E(G)$, there is $C \in \mathcal{C}$ such that $e \in E(G[C])$, that is, all edges of $G$ are ``covered'' by some clique of $\mathcal{C}$.
    \end{inparaenum}
    The \emph{weight} of a sigma clique cover $\mathcal{C}$ is $\wgt(\mathcal{C}) \coloneqq \sum_{C \in \mathcal{C}} |C|$.
  \fi      
\end{definition}
Intuitively, the sets in the sigma clique cover correspond to the clusters of the solution cluster graph.
The following lemma captures this relation: 
\begin{lemma}[Firbas et al.~\cite{firbas_cluster_2023}, Lemma 4.3]\label{lemma:cvs_scc_reduction}
  Let $G = (V,E)$ be a graph, and let $I \coloneqq \{v \in V \mid d_G(v) = 0\}$.
  Then, there are at most $k$ vertex splits that turn $G$ into a cluster graph if and only if $G$ admits a sigma clique cover with weight at most $|V| - |I| + k$.
\end{lemma}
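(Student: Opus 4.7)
The plan is to prove both directions by explicit constructions that convert between splitting sequences and sigma clique covers, matching the split count to the weight.

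For the $(\Rightarrow)$ direction I would take any splitting sequence $G_0 = G, \ldots, G_\ell = G'$ with $\ell \leq k$ and $G'$ a cluster graph, extending ancestry transitively so that every $v \in V$ has a unique set of descendants in $V(G')$; let $d(v)$ denote its size. For each non-singleton cluster $K$ of $G'$, define $C_K \coloneqq \set{v \in V \mid v \text{ has a descendant in } K}$ and take $\mathcal{C} \coloneqq \set{C_K \mid K \text{ non-singleton}}$. Iterating \cref{lemma:splits_maintain_independent_pairs} shows that distinct descendants of the same vertex remain pairwise nonadjacent throughout the sequence, so each descendant has a unique ancestor, and any adjacent pair $u', v' \in K$ must trace back to an edge $uv \in E$ (splits never create edges from scratch); hence $G[C_K]$ is a clique. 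Coverage follows by iterating \cref{lemma:splits_maintain_edges}: every edge of $G$ has a surviving descendant edge in $G'$, which lies inside some non-singleton cluster $K$, so $uv \in E(G[C_K])$.

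The weight accounting is the crux. For each $v \in V$ let $n(v)$ be the number of non-singleton clusters of $G'$ containing a descendant of $v$; since distinct descendants land in distinct clusters, $n(v) \leq d(v)$. A vertex $v \in I$ has no neighbors in $G$, so none of its descendants can acquire a neighbor, forcing each descendant into a singleton cluster and giving $n(v) = 0$. Using $d(v) \geq 1$ for all $v$,
\[
\wgt(\mathcal{C}) \;=\; \sum_{v \in V \setminus I} n(v) \;\leq\; \sum_{v \in V \setminus I} d(v) \;=\; |V(G')| - \sum_{v \in I} d(v) \;\leq\; (|V| + \ell) - |I| \;\leq\; |V| - |I| + k.
\]

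For the $(\Leftarrow)$ direction, given a cover $\mathcal{C}$ of weight at most $|V| - |I| + k$ (discarding singleton cliques only decreases weight), let $c(v) \coloneqq |\set{C \in \mathcal{C} \mid v \in C}|$ and construct the target graph $G'$ with vertices $\set{v_C \mid v \in C \in \mathcal{C}} \cup \set{v \mid c(v) = 0}$ and edges $\set{u_C v_C \mid C \in \mathcal{C}, u \neq v, \set{u,v} \subseteq C}$. Since each $C$ forms an isolated cluster and vertices with $c(v) = 0$ stay isolated, $G'$ is a cluster graph. For each edge $uv \in E$ fix a covering clique $C(u,v) \in \mathcal{C}$, then process vertices $v$ with $c(v) \geq 2$ one by one, performing $c(v) - 1$ successive splits to replace $v$ with descendants $v_{C_1}, \ldots, v_{C_{c(v)}}$; at each split, each current neighbor $w$ (either an unprocessed vertex of $V$ or a descendant $u_{C_i}$) is routed only to the descendant of $v$ indexed by $C_i$ if $w = u_{C_i}$, or by $C(v,w)$ if $w$ is still an original vertex, ensuring the final adjacencies match $G'$. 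The split count is
\[
\sum_{v : c(v) \geq 1} (c(v) - 1) \;=\; \wgt(\mathcal{C}) - |\set{v \mid c(v) \geq 1}| \;\leq\; \wgt(\mathcal{C}) - (|V| - |I|) \;\leq\; k,
\]
using that every non-isolated $v$ must lie in at least one covering clique.

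The main obstacle is the bookkeeping around isolated vertices and singleton clusters, which is exactly what introduces the $-|I|$ term; the rest — clique verification, edge coverage, and realizing the target via splits — is conceptually routine but must be spelled out to confirm that neighbor routing during successive splits is always consistent.
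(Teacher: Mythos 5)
The paper does not actually prove this lemma itself (it is imported verbatim from Firbas et al.), so your argument has to stand on its own. Your forward direction does: reading off one clique $C_K$ per non-singleton cluster via the ancestor map, verifying cliqueness from the preservation of non-edges, coverage from the preservation of edges, and the weight bound via the double count $\sum_K |C_K| = \sum_v n(v) \le \sum_{v \in V\setminus I} d(v) = |V| + \ell - \sum_{v \in I} d(v) \le |V| - |I| + k$ is exactly the right bookkeeping.

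The backward direction has a genuine gap. Routing each original edge $uv$ \emph{only} to the descendant indexed by a single, arbitrarily fixed covering clique $C(u,v)$ does not produce your target $G'$, and the graph it does produce need not be a cluster graph. Concretely, take $G = K_4 - vx$ on $\set{u,v,w,x}$ and the cover $\mathcal{C} = \set{C_1, C_2}$ with $C_1 = \set{u,w,x}$, $C_2 = \set{u,v,w}$, of weight $6 = |V| + 2$, so $k = 2$. The edge $uw$ is assigned to exactly one of $C_1, C_2$, say $C_1$; then the component on $\set{u_{C_2}, v, w_{C_2}}$ in the final graph is the path $u_{C_2}\,v\,w_{C_2}$, not a clique. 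No choice of $C(\cdot,\cdot)$ avoids this, and in fact no two \emph{disjoint} splits can succeed here at all: disjoint splits preserve the number of edges, and no disjoint union of cliques on $6$ vertices has exactly $5$ edges (the achievable counts are $0,1,2,3,4,6,7,10,15$). The repair is to use non-disjoint splits, which the definition of vertex splitting explicitly permits: when splitting $v$, assign the (current descendant of the) edge $uv$ to \emph{every} descendant $v_C$ with $\set{u,v} \subseteq C$. Then the final graph is exactly your $G'$, which is a cluster graph by construction, and the split count $\sum_{v\colon c(v)\ge 1}(c(v)-1) \le \wgt(\mathcal{C}) - (|V|-|I|) \le k$ — which you computed correctly — is unaffected. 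So the skeleton of your $(\Leftarrow)$ direction is salvageable, but as written the claim that ``the final adjacencies match $G'$'' is false and the construction does not prove the statement.
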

To obtain a polynomial-time algorithm, we develop necessary and sufficient conditions for a graph 
to admit a sigma clique cover of size at most two and of certain weight.

In the next lemma, we essentially prove the following:
Consider the set of induced $P_3$ in a graph of independence number at most two.
If the midpoints of all induced $P_3$'s form a clique, then 
we are able to extract a sigma clique cover of size two from the graph.

\begin{figure}
    \begin{center}
        \includegraphics{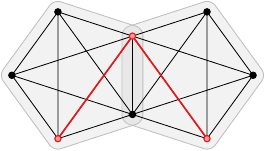}
    \end{center}
    \caption[Example for a positive instance \textsc{$\free_\prec(\set{P_3, \overline{K_3}})$-Vertex Splitting}.]{A graph $G$ for which $(G, 2)$ is a positive instance of \textsc{$\free_\prec(\set{P_3, \overline{K_3}})$-Vertex Splitting}.
    A sigma clique cover of size two and weight ten is highlighted in gray. One of the $3 \cdot 2 \cdot 3 = 18$ induced $P_3$'s is marked in red.
    The set of midpoints for all induced $P_3$'s is given by the intersection of the two cliques. Note that this set induces a clique of size two.
    }
    \label{figure:duo_cluster}
\end{figure}

\begin{lemma}\label{lemma:p3_midpoints_induce_scc}
    Let $G = (V, E) \in \free_\prec(\set{\overline{K_3}})$ without isolated vertices
    and let $M \coloneqq \set{ v \in V \mid \exists f \in \emb_\prec(P_3, G) \colon v \in f(V(P_3)) \land d_{P_3}(f^{-1}(v)) = 2}$, that is, the set of all vertices in $G$ that are a midpoint of some induced $P_3$ in $G$. If $G[M]$ is a non-empty complete graph, then there are $C_1, C_2 \subseteq V$ such that $\set{C_1, C_2}$ is a sigma clique cover of $G$ with $C_1 \cap C_2 = M$ and $\wgt(\set{C_1, C_2}) = |V| + |M|$.
\end{lemma}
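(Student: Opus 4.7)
The plan is to exploit the structure of $\overline{K_3}$-free graphs (independence number at most two) together with the hypothesis that all $P_3$-midpoints form a clique, so as to build the two desired sets as the union of $M$ with each component of $G[V \setminus M]$. First I would dispose of corner cases: if $G$ has at least three connected components, three vertices one from each give $\overline{K_3}$; if $G$ has exactly two components, each must in fact be a clique (an endpoint of any induced $P_3$ together with a vertex in the other component would induce $\overline{K_3}$), forcing $M = \emptyset$ and contradicting the hypothesis. So I may assume $G$ is connected, and likewise rule out $M = V$ since a clique contains no induced $P_3$ while $M \neq \emptyset$.

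Next, I would show that $G[V \setminus M]$ is $\{P_3, \overline{K_3}\}$-free: any induced $P_3$ in it would have its midpoint in $M \cap (V \setminus M) = \emptyset$, and $\overline{K_3}$-freeness is inherited from $G$. By \cref{lemma:k3_p3_characterization_1}, $G[V \setminus M]$ is a cluster graph with one or two connected components; call this set of components $\mathcal{C}$. The candidates for the cover will be $C_1 \coloneqq C_1' \cup M$ and $C_2 \coloneqq C_2' \cup M$ for $\mathcal{C} = \{C_1', C_2'\}$, once I rule out $|\mathcal{C}| = 1$.

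The crux is the following bridging claim: for every component $C \in \mathcal{C}$ whose vertices have some neighbor in $M$, the bipartite graph between $M$ and $C$ is complete. Indeed, pick any $u \in C$ with a neighbor $v \in M$; if some $w \in C$ had $vw \notin E$, then since $G[C]$ is a clique we have $uw \in E$, so $\{u, v, w\}$ would induce a $P_3$ with midpoint $u$, putting $u \in M$ and contradicting $u \in C \subseteq V \setminus M$. I expect this clique-extension argument to be the main obstacle, because it is the step that actually ties the $P_3$-midpoint hypothesis to the sigma-clique structure.

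Finally, I would rule out $|\mathcal{C}| = 1$. Using connectivity of $G$ together with $M \neq V$, pick $u \in V \setminus M$ adjacent to some $v \in M$. Because $v$ is a $P_3$-midpoint, it has two non-adjacent neighbors $a, b$; if both $ua, ub \in E$, then $\{u, a, b\}$ induces a $P_3$ with midpoint $u \notin M$, a contradiction, so there is $w \in \{a, b\}$ with $vw \in E$ and $uw \notin E$. Then $w \in M$ contradicts the bridging claim (which would give $uw \in E$), whereas $w \in V \setminus M$ places $u$ and $w$ in the same (unique) clique component of $G[V \setminus M]$ and hence forces $uw \in E$, again a contradiction. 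Therefore $|\mathcal{C}| = 2$, and verifying the four required properties of $\{C_1, C_2\}$ (clique-inducing, edge-covering, $C_1 \cap C_2 = M$, and $\wgt(\{C_1,C_2\}) = |V| + |M|$) reduces to straightforward bookkeeping using the disjoint decomposition $V = C_1' \cupdot C_2' \cupdot M$ and the bridging claim applied to each of $C_1', C_2'$.
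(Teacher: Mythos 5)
Your proposal is correct and follows essentially the same route as the paper's proof: the same decomposition $C_i = C_i' \cup M$, the same bridging claim that $M$ is completely joined to each component of $G[V \setminus M]$, and the same contradiction argument ruling out a single component (you even make explicit a step the paper glosses over, namely why a neighbor $w$ of $v$ with $uw \notin E$ must exist). The only blemish is your citation of \cref{lemma:k3_p3_characterization_1}, which characterizes $\free_\prec(\set{P_3, K_3})$ rather than $\free_\prec(\set{P_3, \overline{K_3}})$; the fact you actually need (cluster graph with at most two clusters) follows directly from $P_3$- and $\overline{K_3}$-freeness, as the paper argues inline.
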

\begin{proof}
    First, we get some trivial cases out of the way.
    The graph $G$ cannot have more than two connected components, for then we would have $\overline{K_3} \prec G$. If $G$ is empty, $C_1 = C_2 = \varnothing$ fulfill the conditions of this lemma.
    
    If on the other hand, $G$ consists of exactly two components, we notice that $G \in \free_\prec(P_3)$, since the endpoints of one $P_3$ in one component combined with any vertex of the other component would induce $\overline{K_3}$, a contradiction. Thus, $G$ is a cluster graph and setting $C_1, C_2$ to the vertex set of one component each fulfills the conditions of this lemma.
    Thus, from now on, we will assume that $G$ is non-empty and consists of exactly one connected component, i.e., it is connected.

    We proceed with deducing the precise structure of $G$ from our premises.
    Notice that $G[V \setminus M]$ is $P_3$-free, since $P_3 \prec G[V \setminus M]$ implies that $M \cap (V \setminus M) \neq \varnothing$, a contradiction.

    Since $\overline{K_3} \prec G[V \setminus M]$ would imply $\overline{K_3} \prec G$, we obtain that $G[V \setminus M] \in \free_\prec(\set{P_3, \overline{K_3}})$, i.e., it is a cluster graph of at most two clusters.
    We henceforth use $\mathcal{C}$ to denote the vertex sets of all connected components of $G[V \setminus M]$.

    Furthermore, we derive $M \neq V$, since if $M = V$, $G$ would be a non-empty clique because $G[M]$ is, yet, since $M$ would be empty, $G$ would also be an empty graph, a contradiction.
    
    We can use $M \neq V$ to show $\set{v_1v_2 \mid v_1 \in M, v_2 \in C} \subseteq E$ for all $C \in \mathcal{C}$ such that $G[M \cup C]$ is connected:
    Let $C \in \mathcal{C}$ such that $G[M \cup C]$ is connected.
    Since $C \neq M$ (because $V \neq M$) and $M \neq \varnothing$, we can select $u \in C$, such that $uv \in E$, where $v \in M$. 
    Suppose there is $w \in C$, such that $vw \not\in E$.
    Since $G[C]$ is a clique, we know that $uw \in E$.
    Thus, $\set{u, v, w}$ induce $P_3$ in $G$ and its middle point $u$ is an element of $M$, a contradiction to $C \subseteq V \setminus M$;
    reference \cref{figure:lemma_p3_midpoints_induce_scc} for an illustration.
    Thus, it is indeed the case that $\set{v_1v_2 \mid v_1 \in M, v_2 \in C} \subseteq E$ for all $C \in \mathcal{C}$ such that $G[M \cup C]$ is connected.
    \begin{figure}
        \begin{center}
            \includegraphics{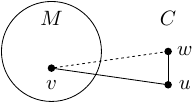}
        \end{center}
        \caption[Illustration used in the proof of \cref{lemma:p3_midpoints_induce_scc}.]{Illustration used in the proof of \cref{lemma:p3_midpoints_induce_scc}. The dashed line indicates a non-edge. }
        \label{figure:lemma_p3_midpoints_induce_scc}
    \end{figure}

    Since $G[V \setminus M]$ is non-empty, we know that $|\mathcal{C}| \ge 1$. Also, $|\mathcal{C}| \leq 2$, since if $|\mathcal{C}| \ge 3$, we would obtain $G \prec \overline{K_3}$. We will now show that $|\mathcal{C}| = 2$ by deriving an absurdity from the other remaining possibility:

    Towards a contradiction, suppose $G[V \setminus M]$ consists of exactly one connected component, that is $\mathcal{C} = \set{V \setminus M}$.
    Since $G$ is connected, $V \neq M$, and $M \neq \varnothing$, we can select $u \in G[V \setminus M]$ such that $uv \in E$ where $v \in M$.
    Because $v$ is the middle-point of a $P_3$ in $G$, there is $w \in V \setminus \set{u, v}$ such that $vw \in E$ and $uw \not\in E$.

    Suppose that $w \in M$. Since $\set{v_1v_2 \mid v_1 \in M, v_2 \in V \setminus M} \subseteq E$, we then have $uw \in E$, contradicting $uw \not\in E$.
    Now, suppose the opposite, i.e.,\ $w \in V \setminus M$. Thus, $w$ and $u$ are both part of the same connected component $G[V \setminus M]$. But this component is a clique, hence $uw \in E$, contradicting $uw \not\in E$.
    Therefore, in total, we conclude that $|\mathcal{C}| = 2$; we denote its elements by $C_1$ and $C_2$ and claim that $\set{C_1 \cup M, C_2 \cup M}$ a sigma clique cover of the desired properties.

    To prove this claim, we check both conditions of \cref{definition:scc}.
    For the first condition, we need to establish that $C_1 \cup M$ and $C_2 \cup M$ both induce cliques in $G$: Without loss of generality, we only consider $C_1 \cup M$. Since $G[C_1 \cup M]$ is connected, we know that $\set{v_1v_2 \mid v_1 \in M, v_2 \in C_1} \subseteq E$.
    By precondition, we have that $G[M]$ is a clique.
    Also, $G[C_1]$ is a clique. Hence, we conclude that $G[C_1 \cup M]$ is a clique too.

    Now, we need to establish that all edges of $G$ are covered by our supposed sigma clique cover. Let $v_1v_2 \in E$.
    If $v_1, v_2 \in C_1 \cup M$ or $v_1, v_2 \in C_2 \cup M$, then $v_1v_2$ is covered because $G[C_1 \cup M]$ and $G[C_2 \cup M]$ are cliques.
    Since $(C_1 \cup M) \cup (C_1 \cup M) = V$, only the case (without loss of generality) $v_1 \in C_1$ and $v_2 \in C_2$ is left to consider.
    If $v_1 \in M$ (resp. $v_2 \in M$), then $v_1 \in C_2$ (resp. $v_2 \in C_1$) and both vertices are covered by the clique $G[C_2 \cup M]$ (resp. $G[C_1 \cup M]$).
    Otherwise, $v_1, v_2 \in G[V \setminus M]$. But then $G[C_1]$ and $G[C_2]$ are connected, thus $|\mathcal{C}| \neq 2$, a contradiction to $|\mathcal{C}| = 2$.

    Using these premises, we can also establish the required condition on the weight.
    Since $M \cap C_1 = \varnothing$, $M \cap C_2 = \varnothing$, $C_1 \cap C_2 = \varnothing$, and $\set{C_1, C_2}$ is a partition of $V \setminus M$, we obtain $\wgt(\set{C_1 \cup M, C_2 \cup M}) = |C_1| + |C_2| + 2|M| = |V \setminus M| + 2|M| = |V| + |M|$.

    Finally, we need to show that $(C_1 \cup M) \cap (C_2 \cup M) = M$, which can be done in a direct manner: $(C_1 \cup M) \cap (C_2 \cup M) = (C_1 \cap (C_2 \cup M)) \cup (M \cap (C_2 \cup M))$ = $\varnothing \cup M = M$.
\end{proof}

Next, we prove that if a graph admits a sigma clique cover of two cliques that are incomparable with respect to the subset relation,
then the set of midpoints of all induced $P_3$ is given by the intersection of the two cliques.

\begin{lemma}\label{lemma:scc_induces_p3_midpoints}
    Let $G = (V, E)$ be a graph without isolated vertices and let
    $M \coloneqq \set{ v \in V \mid \exists f \in \emb_\prec(P_3, G) \colon v \in f(V(P_3)) \land d_{P_3}(f^{-1}(v)) = 2}$, that is, the set of all vertices in $G$ that are a midpoint of some $P_3 \prec G$.
    If $\set{C_1, C_2}$ 
    is a sigma clique cover of $G$
    with $C_1 \not\subseteq C_2$ and $C_2 \not\subseteq C_1$, then $M = C_1 \cap C_2$.
\end{lemma}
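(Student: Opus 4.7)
The plan is to prove $M = C_1 \cap C_2$ by double inclusion, using only the defining properties of sigma clique covers from \cref{definition:scc} together with the incomparability hypothesis.

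For the inclusion $M \subseteq C_1 \cap C_2$, I would take an arbitrary $v \in M$ and pick an induced $P_3$ on vertices $u, v, w$ where $v$ is the midpoint, so $uv, vw \in E$ and $uw \notin E$. Since $\{C_1, C_2\}$ is a sigma clique cover, each of the edges $uv$ and $vw$ must be covered by one of $C_1$ or $C_2$. If both were covered by the same $C_i$, then $u, v, w \in C_i$, and since $C_i$ induces a clique this would force $uw \in E$, contradicting that we have an induced $P_3$. Hence $uv$ and $vw$ are covered by different cliques of the cover, and since $v$ is incident to both edges, $v$ lies in both $C_1$ and $C_2$.

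For the reverse inclusion $C_1 \cap C_2 \subseteq M$, I would use the incomparability assumption: pick $u \in C_1 \setminus C_2$ and $w \in C_2 \setminus C_1$, which exist by hypothesis. Given any $v \in C_1 \cap C_2$, the pairs $uv$ and $vw$ are edges because they lie in the cliques $C_1$ and $C_2$ respectively. It remains to argue that $uw \notin E$: if $uw$ were an edge, it would have to be covered by $C_1$ (forcing $w \in C_1$, a contradiction) or by $C_2$ (forcing $u \in C_2$, a contradiction). Thus $\{u, v, w\}$ induces a $P_3$ with midpoint $v$, giving $v \in M$. Distinctness of the three vertices is immediate from the set memberships ($u \notin C_2 \ni v, w$ and $w \notin C_1 \ni u, v$).

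There is no real obstacle here; the argument is a direct unpacking of definitions, and the only subtle point is remembering to invoke the incomparability assumption to produce the required witnesses $u$ and $w$ in the second direction (otherwise the containment can fail, for instance if $C_1 = C_2 = V$ in a clique, where $M = \varnothing$ but $C_1 \cap C_2 = V$). The hypothesis that $G$ has no isolated vertices is not needed for this particular lemma but is inherited from the surrounding setup.
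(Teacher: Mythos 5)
Your proof is correct and follows essentially the same route as the paper's: both directions are a double inclusion argument using the edge-covering property of the sigma clique cover and the incomparability of $C_1$ and $C_2$. Your reverse inclusion is in fact a slight streamlining (you obtain the witnesses $u \in C_1 \setminus C_2$ and $w \in C_2 \setminus C_1$ directly from incomparability and the clique property, where the paper first derives $N_G(v) \not\subseteq C_1$ by contradiction), and your observation that the no-isolated-vertices hypothesis is not actually needed here is accurate.
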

\begin{proof}
    We begin with deriving $M \subseteq C_1 \cap C_2$:
    Let $v \in M$. By $M$'s defining property, there are $u, w \in V$
    such that $uv, vw \in E$ but $uw \not\in E$.
    Towards a contradiction, assume that $v \not\in C_1 \cap C_2$.
    Since $G$ contains no isolated vertices, each vertex is covered by either $C_1$ or $C_2$.
    Without loss of generality, $v \in C_1$, but $v \not\in C_2$.
    Because $uw \not\in E$, $u$ and $w$ cannot be covered by the same clique of $\set{C_1, C_2}$.
    Thus, without loss of generality, $u \in C_1 \setminus C_2$ and $w \in C_2 \setminus C_1$.
    Consider the edge $vw \in E$: $vw$ is not covered by $C_1$ since $w \not\in C_1$, but $vw$ can neither be covered by $C_2$, since $v \not\in C_2$.
    This contradicts the fact that $\set{C_1, C_2}$ covers all edges of $G$.
    Thus, we have $v \in C_1 \cap C_2$ and $M \subseteq C_1 \cap C_2$.

    We will now show the converse, that is, $C_1 \cap C_2 \subseteq M$:
    Let $v \in C_1 \cap C_2$. Towards a contradiction, suppose $N_G(v) \subseteq C_1$. 
    Because $v$ is a member of the clique $C_2$, we obtain $C_2 \setminus \set{v} \subseteq N_G(v)$,
    and by our assumption, $C_2 \setminus \set{v} \subseteq C_1$ follows.
    By building the union with $\set{v}$ on both sides,
    we derive $C_2 \subseteq C_1$, contradicting our choice of $C_1$ and $C_2$.
    Thus, we have $N_G(v) \not\subseteq C_1$, that is, there is $w \in N_G(v) \setminus C_1$. Also, since $w$ must be covered by some clique (as $G$ is free of isolated vertices), $w \in C_2$ holds.
    By a completely symmetric argument, we additionally derive that there is $u \in N_G(v) \setminus C_2$ with $u \in C_1$.
    Thus, we conclude that $\set{u, w} \not\subseteq C_1$ and $\set{u, w} \not\subseteq C_2$, which
    forces $uw \not\in E$ since $\set{C_1, C_2}$ covers all edges of $G$.
    Remembering that $uv, vw \in E$, we conclude that $G[\set{u, v, w}] \simeq P_3$.
    As $v$ acts as the midpoint in this induced $P_3$, we finish our proof and conclude that $v \in M$, $C_1 \cap C_2 \subseteq M$, and finally: $M = C_1 \cap C_2$.
\end{proof}

Using these prerequisites, we are finally able to formulate the algorithm:

\begin{algorithm}[t]
    \SetKwInput{KwInput}{Input}
    \SetKwInput{KwOutput}{Output}
    \DontPrintSemicolon 
    \SetAlgoLined
    \KwInput{$(G, k)$, instance of \textsc{$\free_\prec(\set{P_3, \overline{K_3}})$-Vertex Splitting}}
    \KwOutput{\texttt{true} if $G$ can be made a member of $\free_\prec(\set{P_3, \overline{K_3}})$ using at most $k$ vertex splits, \texttt{false} otherwise}
    \BlankLine
    \uIf{$\overline{K_3} \prec G$}{
        \Return \texttt{false}\;
    }
    \uElseIf{$P_3 \not\prec G$}{
        \Return \texttt{true}\; \label{line:p3_k3_compl_else}
    }
    $M \gets \set{ v \in V(G) \mid \exists f \in \emb_\prec(P_3, G) \colon v \in f(V(P_3)) \land d_{P_3}(f^{-1}(v)) = 2}$\; \label{line:p3_k3_compl_m}
    \Return{$\overline{K_2} \not\prec G[M] \land |M| \leq k$} \label{line:p3_k3_compl_last}
    \caption{\textsc{$\free_\prec(\set{P_3, \overline{K_3}})$-Vertex Splitting}}
    \label{algorithm:splitting_polynomial_for_cluster_graphs_k3_complement}
\end{algorithm}

\begin{proposition}\label{lemma:splitting_polynomial_for_cluster_graphs_k3_complement}
   \textsc{$\free_\prec(\set{P_3, \overline{K_3}})$-Vertex Splitting} admits a polynomial-time algorithm. 
\end{proposition}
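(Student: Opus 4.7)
My plan is to verify both the correctness and polynomial running time of \cref{algorithm:splitting_polynomial_for_cluster_graphs_k3_complement}. The first two branches handle easy cases: if $\overline{K_3} \prec G$, then by \cref{lemma:splits_maintain_cluster_graphs_with_tiny_clusters} (applicable since $\overline{K_3} \in \free_\prec(\set{P_3, K_3})$) this induced subgraph persists through every splitting sequence, so the instance must be negative. If instead $\overline{K_3} \not\prec G$ and $P_3 \not\prec G$, then $G$ already lies in $\free_\prec(\set{P_3, \overline{K_3}})$ and the empty splitting sequence certifies a positive instance.

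The substantive case is the third branch, where $\overline{K_3} \not\prec G$ and $P_3 \prec G$; I will show that $(G, k)$ is positive if and only if $G[M]$ is complete and $|M| \leq k$. My main tool is \cref{lemma:cvs_scc_reduction}, refined to the observation that $G$ admits a splitting sequence of at most $k$ splits producing a cluster graph with \emph{at most two} clusters if and only if $G$ admits a sigma clique cover of at most two cliques whose total weight is at most $|V(G)|-|I|+k$, where $I$ denotes the set of isolated vertices. I will briefly justify this refinement by noting that the correspondence inside the proof of \cref{lemma:cvs_scc_reduction} maps clusters of the resulting cluster graph bijectively to cliques of the cover.

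For the forward implication, I will assume such a two-clique sigma clique cover $\set{C_1, C_2}$. Neither $C_i$ can contain the other, since otherwise a single-clique cover would exist, contradicting $P_3 \prec G$; thus \cref{lemma:scc_induces_p3_midpoints} applies and gives $M = C_1 \cap C_2$. Since $M \subseteq C_1$ and $C_1$ is a clique, $G[M]$ is complete. Moreover, by inclusion-exclusion $|C_1|+|C_2| = |C_1 \cup C_2| + |M|$, and since every non-isolated vertex must be covered, $|C_1 \cup C_2| = |V(G)|-|I|$; the weight bound then yields $|M| \leq k$. For the reverse implication, $P_3 \prec G$ forces $M$ to be non-empty, so $G[M]$ is a non-empty clique. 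Applying \cref{lemma:p3_midpoints_induce_scc} to $G$ after discarding isolated vertices (permissible since \cref{definition:scc} does not require covering them) yields a two-clique sigma clique cover of weight $|V(G)|-|I|+|M| \leq |V(G)|-|I|+k$, and the refined version of \cref{lemma:cvs_scc_reduction} certifies that $(G, k)$ is positive.

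The polynomial running time is routine: all induced-subgraph tests reduce to \cref{lemma:k_is_zero_polynomial}, while computing $M$ amounts to enumerating vertex triples and collecting the middle vertices of induced $P_3$s. The one step I anticipate needing careful attention is the refinement of \cref{lemma:cvs_scc_reduction} to the two-cluster case, since the stated version concerns general cluster graphs; however, this should follow directly by inspecting the lemma's original proof and tracking the number of clusters through the established correspondence.
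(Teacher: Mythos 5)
Your proposal is correct and follows essentially the same route as the paper's proof: the same algorithm, the same use of \cref{lemma:p3_midpoints_induce_scc}, \cref{lemma:scc_induces_p3_midpoints}, and \cref{lemma:cvs_scc_reduction} together with the observation that the latter's constructive proof preserves the number of cliques/clusters. The only local deviation is that you derive $|M|\le k$ from the weight bound via inclusion--exclusion (which works because in this branch $\overline{K_3}\not\prec G$ and $P_3\prec G$ force $I=\varnothing$, a fact worth stating explicitly), whereas the paper argues directly that destroying every induced $P_3$ requires splitting each of its midpoints; both are valid.
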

\begin{proof}
    We propose that \cref{algorithm:splitting_polynomial_for_cluster_graphs_k3_complement} is a polynomial-time algorithm for
    \textsc{$\free_\prec(\set{P_3, \overline{K_3}})$-Vertex Splitting}.

    \smallskip\emph{Correctness}:
    We claim that \texttt{true} is returned by \cref{algorithm:splitting_polynomial_for_cluster_graphs_k3_complement} when operating on an instance $(G, k)$ of \textsc{$\free_\prec(\set{P_3, \overline{K_3}})$-Vertex Splitting} if and only if $(G, k)$ is a positive instance of said problem.

    If $\overline{K_3} \prec G$, then $(G, k)$ is a negative instance,
    for $\overline{K_3}$ cannot be removed by way of vertex splitting (\cref{lemma:splits_maintain_cluster_graphs_with_tiny_clusters}).
    On the other hand, if we enter the else-branch on line \ref{line:p3_k3_compl_else},
    we know that $G \in \free_\prec{(\set{P_3, \overline{K_3}})}$, i.e., it already is a member of the desired class.

    Continuing onward, we may thus assume that $\overline{K_3} \not\prec G$ and $P_3 \prec G$.
    This already implies that $G$ is free of isolated vertices: An isolated vertex combined with the two endpoints of one embedded $P_3$ 
    would induce a $\overline{K_3}$ in $G$, a possibility already excluded. Furthermore, $K_3 \prec G$ gives that $G$ is non-empty and that $M$ constructed on line \ref{line:p3_k3_compl_m} will be non-empty too.

    We will now prove the correctness of the last statement on line \ref{line:p3_k3_compl_last}, i.e., \texttt{true} is returned if and only of $(G, k)$ is a positive instance.
    To prove this equivalence, we begin with the forwards direction, that is, if \texttt{true} is returned, then $(G, k)$ is a positive instance:
    Suppose $\overline{K_2} \not\prec G[M]$, meaning $G[M]$ is a clique, and $|M| \leq k$.
    We apply \cref{lemma:p3_midpoints_induce_scc} and obtain a sigma clique cover
    $\mathcal{C}$ of $G$ with $\wgt(\mathcal{C}) = |V(G)| + |M|$ and $|\mathcal{C}| \leq 2$.
    Using \cref{lemma:cvs_scc_reduction} and the fact that $G$ admits a sigma clique cover of weight $|V(G)| + k$,
    we conclude that 
    $(G,k)$ is a positive instance of \textsc{CVS}.
    We observe that the constructive proof of \cref{lemma:cvs_scc_reduction} never changes the cardinality
    of any sigma clique covers involved.
    Thus, more strongly, we know that
    any certificate (i.e., a splitting sequence) obtainable for \textsc{CVS} via \cref{lemma:cvs_scc_reduction} and $\mathcal{C}$
    must end in a cluster graph of at most two clusters.
    This implies it is not only free of $P_3$, but also free of $\overline{K_3}$.
    Thus, any certificate obtained in said manner also serves as a certificate of the
    more restricted \textsc{$\free_\prec(\set{P_3, \overline{K_3}})$-Vertex Splitting} problem
    considered here.

We will now prove the converse, that is, if $(G, k)$ is a positive instance,
then indeed \texttt{true} will be returned on line \ref{line:p3_k3_compl_last}:
Suppose $(G, k)$ is a positive instance of
\textsc{$\free_\prec(\set{P_3, \overline{K_3}})$-Vertex Splitting};
let $G_0, \dots, G_\ell$ with $G_0 = G$ be a splitting sequence certifying that fact.
Since $\free_\prec(\set{P_3, \overline{K_3}}) \subseteq \free_\prec(\set{P_3})$,
$G_0, \dots, G_\ell$ also serves as certificate for the instance $(G, k)$ of \textsc{CVS}.
Using \cref{lemma:cvs_scc_reduction}, we can build a sigma clique cover $\mathcal{C}$ of $G$ with $\wgt(\mathcal{C}) \leq |V(G)| + k$.
As already observed previously, the constructive proof of \cref{lemma:cvs_scc_reduction} never changes the cardinality
of any sigma clique covers involved. Thus, since $G_\ell$ consists of at most two connected components, $|\mathcal{C}| \leq 2$ follows.
Since $G$ is non-empty, we have that $|\mathcal{C}| \neq 0$.
If $|\mathcal{C}| = 1$, then $G$ is a clique, contradicting our assumption that $P_3 \prec G$.
Thus, we have $\mathcal{C} = \set{C_1, C_2}$ such that $C_1 \neq C_2$. If $C_1 \subseteq C_2$ or $C_2 \subseteq C_1$, then $G$ is again a clique, yielding the same contradiction as in the previous case.
Hence, we are allowed to apply \cref{lemma:scc_induces_p3_midpoints}
and conclude that $M = C_1 \cap C_2$.
Since the clique property is closed under intersection,
we conclude that $G[M]$ is a clique, implying $\overline{K_2} \not\prec G$.

It is straightforward to observe
that in order to destroy an embedding of a $P_3$ via vertex splitting, its embedded midpoint has to be split, and all vertices, where midpoints of some $P_3$
are embedded other than the vertex that is split, remain as such.
Hence, were $|M|$ to exceed $k$, no splitting sequence of length at most $k$ could
ever produce a graph free of $P_3$, which is a prerequisite for any positive instance.
Thus, we obtain $|M| \leq k$. We have derived both necessary conditions for \texttt{true} to be returned on line \ref{line:p3_k3_compl_last}. Hence, the proof is complete.

    \smallskip\emph{Running time}:
    We observe that all relevant operations performed in the algorithm reduce to enumerating, $\emb_\prec(H, G)$ where $H$ is fixed.
    This can be done in time polynomial with respect to $|V(G)|$ using the procedure
    employed in \cref{lemma:k_is_zero_polynomial}.
\end{proof}

\subsection{A Short Excursion to Ramsey Theory}
\label{section:ramsey}

Ramsey's Theorem is a fundamental result in the field of combinatorics that implies the following:
For every natural number $n \in \mathbb{N}^+$, there exists an integer $R(n)$ with the property that all graphs with at least $R(n)$ vertices must contain either a complete subgraph of $n$ vertices or an independent set of $n$ vertices \cite{ramsey}.

For our purposes, this means that the set of positive instances of 
\textsc{$\free_\prec(K_n, \overline{K_n})$-Vertex Splitting} forms a finite set for every $n \in \mathbb{N}^+$.

\begin{proposition}\label{lemma:k3_k3_compl_polynomial}
    Let $\mathcal{F}$ be a set of graphs with $\set{K_n, \overline{K_n}} \subseteq \mathcal{F}$ where $n \in \mathbb{N}^+$.
    Then, there is a polynomial-time algorithm for \textsc{$\free_\prec(\mathcal{F})$-Vertex Splitting}.
\end{proposition}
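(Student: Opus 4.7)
The plan is to exploit Ramsey's theorem exactly as hinted by the preceding paragraph. Let $R \coloneqq R(n,n)$ be the Ramsey number such that every graph on at least $R$ vertices contains either $K_n$ or $\overline{K_n}$ as an induced subgraph. Since $\set{K_n, \overline{K_n}} \subseteq \mathcal{F}$, any graph in $\free_\prec(\mathcal{F})$ has fewer than $R$ vertices. Combined with the fact that each vertex split increases the vertex count by exactly one, any splitting sequence $G = G_0, G_1, \ldots, G_\ell$ with $G_\ell \in \free_\prec(\mathcal{F})$ must satisfy $\N{G} + \ell < R$.

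With that observation in hand, the algorithm proceeds in two steps. First, if $\N{G} \ge R$, return \texttt{false}, since no splitting sequence can possibly land in $\free_\prec(\mathcal{F})$. Otherwise, $\N{G} < R$, and it suffices to consider splitting sequences of length at most $k' \coloneqq \min(k, R - \N{G} - 1)$, a quantity bounded by the constant $R$. I would then simply brute-force enumerate every splitting sequence of length at most $k'$ starting at $G$, and for each resulting graph $G_\ell$, test whether $G_\ell \in \free_\prec(\mathcal{F})$, returning \texttt{true} iff the answer is positive for at least one sequence.

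To see that this runs in polynomial time, observe that once the initial size check passes every relevant quantity is bounded by a constant depending solely on $n$: the graph has $\N{G} < R$ vertices, every intermediate graph has fewer than $2R$ vertices, and the number of splitting sequences enumerated is a constant (bounded branching factor and bounded depth). The only subtle point, which I expect to be the main obstacle of the write-up, is that $\mathcal{F}$ is not assumed to be finite, so checking $G_\ell \in \free_\prec(\mathcal{F})$ needs justification. This is handled by noting that, since $\N{G_\ell} < 2R$, only forbidden induced subgraphs of $\mathcal{F}$ with fewer than $2R$ vertices can ever be witnesses; up to isomorphism there are only finitely many such graphs, so the restriction of $\mathcal{F}$ to this bounded size range is a fixed finite set hardcoded into the algorithm for the fixed problem $\textsc{$\free_\prec(\mathcal{F})$-VS}$. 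The overall running time is then dominated by reading the input and performing the constant-time brute force, fitting comfortably within the polynomial-time requirement.
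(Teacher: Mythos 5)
Your proposal is correct and follows essentially the same route as the paper: invoke Ramsey's theorem to conclude that every graph in $\free_\prec(\mathcal{F})$ has fewer than $R(n)$ vertices, deduce that both the input size and the number of useful splits are bounded by a constant, and then brute-force over the constantly many reachable graphs. Your explicit handling of possibly infinite $\mathcal{F}$ (restricting to the finitely many forbidden subgraphs of bounded size) is a welcome extra precision that the paper's proof leaves implicit.
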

\begin{proof}
    By invoking Ramsey's Theorem,
    we conclude that for any graph $G$ of at least $R(n)$ vertices, we have $K_n \prec G$ or $\overline{K_n} \prec G$, that is, $G \not\in \free_\prec(\mathcal{F})$.

    Let $G$ be a graph and $k \in \mathbb{N}$. Consider the families of graphs $\mathcal{G}_0, \mathcal{G}_1, \dots$
    where $\mathcal{G}_i$ is the set of graphs that can be obtained from $G$ by performing at most $i$ vertex splits.
    Clearly, $(G, k)$ is a positive instance of \textsc{$\free_\prec(\mathcal{F})$-Vertex Splitting}
    if and only if $\mathcal{G}_{k} \cap \free_\prec(\mathcal{F}) \neq \varnothing$.
    Observe that $\mathcal{G}_k \cap \free_\prec(\mathcal{F}) \subseteq  \mathcal{G}_{R(n)} \cap \free_\prec(\mathcal{F})$,
    as all graphs that were split at least $R(n)$ times have at least $R(n)$ vertices and are hence not in $\free_\prec(\mathcal{F})$.

    With this, we can formulate an algorithm:
    Let $(G, k)$ be an instance of \textsc{$\free_\prec(\mathcal{F})$-Vertex Splitting}.
    If $\N{G} \ge R(n)$, we can safely reject the instance.
    Otherwise, we find that $G$ can only be isomorphic to one of finitely many graphs
    and can recognize the case applicable to $G$ in polynomial-time (\cref{lemma:k_is_zero_polynomial}).
    For each case, we can precompute $\mathcal{G}_{R(n)}$.
    To decide the instance $(G, k)$, it now suffices to check whether there is $G' \in \mathcal{G}_{R(n)}$, such that $G' \in \free_\prec(\mathcal{F})$.
    This can be done in polynomial-time by \cref{lemma:k_is_zero_polynomial}.
\end{proof}

\subsection{Forbidden Induced Subgraphs of at Most Three Vertices}
\label{section:dichotomy}

\begin{table}[t]
    \caption[Case distinction performed in the proof of \cref{theorem:leq3_complexity}.]{
        All cases for the set $\mathcal{F}$ of forbidden induced subgraphs of exactly three vertices used in the proof of \cref{theorem:leq3_complexity}.
        For each case, we list the complexity class \textsc{$\free_\prec(\mathcal{F})$-Vertex Splitting} falls into, and also reference the applicable proof.
    }
    \label{table:leq3}
    \begin{center}
        \begin{tabular}{@{}ccc@{}} \toprule
            $\mathcal{F}$ & Complexity & Proof \\ \midrule
            $\set{}$ & \P & \cref{lemma:splitting_polynomial_for_cluster_graphs_at_most_two} \\
            $\set{K_3}$ & \NP-Complete & \cref{theorem:single_biconnected_np_complete} \\
            $\set{\overline{K_3}}$ & \P & \cref{lemma:splitting_polynomial_for_cluster_graphs_at_most_two} \\
            $\set{P_3}$ & \NP-Complete & \cite{firbas_cluster_2023} \\
            $\set{\overline{P_3}}$ & \P & \cref{lemma:splitting_polynomial_for_cluster_graphs_at_most_two} \\
            $\set{K_3, \overline{K_3}}$ & \P & \cref{lemma:k3_k3_compl_polynomial} \\
            $\set{K_3, P_3}$ & \P & \cref{lemma:p3_k3_splitting_polynomial} \\
            $\set{\overline{K_3}, P_3}$ & \P & \cref{lemma:splitting_polynomial_for_cluster_graphs_k3_complement} \\
            $\set{K_3, \overline{P_3}}$ & \P & \cref{lemma:p3_complement_introduction} \\
            $\set{\overline{K_3}, \overline{P_3}}$ & \P & \cref{lemma:splitting_polynomial_for_cluster_graphs_at_most_two} \\
            $\set{P_3, \overline{P_3}}$ & \P & \cref{lemma:p3_complement_introduction} \\
            $\set{K_3, \overline{K_3}, P_3}$ & \P & \cref{lemma:k3_k3_compl_polynomial} \\
            $\set{K_3, \overline{K_3}, \overline{P_3}}$ & \P & \cref{lemma:k3_k3_compl_polynomial} \\
            $\set{K_3, P_3, \overline{P_3}}$ & \P & \cref{lemma:p3_complement_introduction} \\
            $\set{\overline{K_3}, P_3, \overline{P_3}}$ & \P & \cref{lemma:p3_complement_introduction} \\
            $\set{K_3, \overline{K_3}, P_3, \overline{P_3}}$ & \P & \cref{lemma:k3_k3_compl_polynomial} \\
            \bottomrule
        \end{tabular}
    \end{center}
\end{table}

In this very brief section, we can finally integrate the results of the preceding sections
and obtain the dichotomy we have worked towards:

\theoremleqthreecomplexity*
\begin{proof}
    It holds that $\mathcal{F} \subseteq \{K_0, K_1, K_2, \overline{K_2}, K_3, \overline{K_3}, P_3, \overline{P_3}\}$,
    for these are all graphs constructable using at most three vertices.
    If $\mathcal{F} \cap \{K_0, K_1, K_2, \overline{K_2}\} \neq \varnothing$,
    our problem admits a polynomial-time algorithm by \cref{lemma:splitting_polynomial_for_leq2}.
    Otherwise, refer to \cref{table:leq3} for a case distinction covering all remaining subsets not addressed by the former case.
\end{proof}

\subsection{\textsc{Split-} and \textsc{Threshold-Vertex Splitting}}
\label{section:split_thres}

In \cref{section:indestructible_fisgs},
we observed that vertex splitting is not suitable for transforming a graph into certain classes due to ``indestructible'' forbidden induced subgraphs.
In this section, we learn that this phenomenon not only concerns classes arising when systematically enumerating properties characterized by forbidden induced subgraphs of at most three vertices,
but rather also applies to well-known graph classes, such as split and threshold graphs.

Split and threshold graphs are two well-studied hereditary graph classes that are both characterizable via a set of forbidden induced subgraphs:
The class of \emph{threshold graphs} is given by $\free_\prec(\set{P_4, C_4, \overline{C_4}})$ \cite{trivially_perfect}, while the class of \emph{split graphs} is given by $\free_\prec(\set{C_4, C_5, \overline{C_4}})$ \cite{split_graphs}.
While both \textsc{Threshold-Node Deletion} and \textsc{Split-Node Deletion} are \NP-complete \cite{node_deletion},
the picture differs for the vertex splitting problem:

\begin{theorem}\label{theorem:threshold_split_poly}
\textsc{Threshold-} and \textsc{Split-Vertex Splitting} are in \P{}.
\end{theorem}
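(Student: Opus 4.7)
The plan is to show that $(G,k)$ is a yes-instance of \textsc{Threshold-VS} (resp.\ \textsc{Split-VS}) if and only if $G$ is already a threshold (resp.\ split) graph. Since this latter check can be performed in polynomial time by \cref{lemma:k_is_zero_polynomial}, the desired algorithm follows. The ``if'' direction is immediate, since $k=0$ splits suffice. For the converse I establish an analogue of \cref{lemma:p3_complement_introduction}, with $\overline{C_4}$ playing the role that $\overline{P_3}$ played there: it is indestructible under splitting, and any split that destroys an induced copy of $P_4$, $C_4$, or $C_5$ introduces one.

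For indestructibility of $\overline{C_4} \simeq 2K_2$, note that $2K_2 \in \free_\prec(\{P_3, K_3\})$, so \cref{lemma:splits_maintain_cluster_graphs_with_tiny_clusters} applies directly: once a graph in a splitting sequence contains $\overline{C_4}$, every subsequent graph does too.

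For the creation claim, consider an induced copy of $P_4$, $C_4$, or $C_5$ on a vertex set $X$ in $G$, and a single vertex split producing $G'$. A split at a vertex outside $X$ leaves the induced subgraph on $X$ unchanged, and if $v \in X$ has both its neighbors-in-$X$ assigned to the same descendant $v_i$, then by \cref{lemma:splits_maintain_edges,lemma:splits_maintain_independent_pairs} the set $(X \setminus \{v\}) \cup \{v_i\}$ still induces the same graph. Hence a destructive split must take some $v \in X$ having two neighbors-in-$X$, say $u$ and $w$, assigned to distinct descendants $v_1$ and $v_2$. In each of $P_4$, $C_4$, $C_5$ these two neighbors are at distance two in the forbidden subgraph and hence non-adjacent, so $uw \notin E(G)$. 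Combined with $v_1 v_2 \notin E(G')$ (by definition of the split) and using \cref{lemma:splits_maintain_independent_pairs} to preserve the remaining non-edges among $\{u, v_1, v_2, w\}$, this set induces exactly the two edges $uv_1, v_2w$, i.e., a copy of $\overline{C_4}$.

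Combining the two observations via induction on splitting sequences, starting from a non-threshold $G$ (which by assumption contains $P_4$, $C_4$, or $\overline{C_4}$), every graph reachable from $G$ contains one of $P_4$, $C_4$, or $\overline{C_4}$: at each step, either a descendant copy of the original $P_4$ or $C_4$ persists, or an indestructible $\overline{C_4}$ appears (and then remains present forever). The same argument works for split graphs with $C_5$ in place of $P_4$. Thus no splitting sequence can leave the corresponding forbidden class, yielding the required equivalence. The main obstacle is a clean execution of the case analysis behind the creation claim; the unifying fact is simply that in each of $P_4$, $C_4$, and $C_5$ the two graph-neighbors of any potentially destructive split vertex are non-adjacent, which is straightforward.
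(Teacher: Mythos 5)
Your proposal is correct and follows essentially the same route as the paper: the paper's proof likewise observes that $\overline{C_4}\simeq 2K_2$ is indestructible via \cref{lemma:splits_maintain_cluster_graphs_with_tiny_clusters} and that any split destroying an induced $P_4$, $C_4$, or $C_5$ creates a new $\overline{C_4}$, reducing the problem to recognizing threshold/split graphs via \cref{lemma:k_is_zero_polynomial}. Your write-up merely fills in the details of the creation claim (which the paper leaves as ``easy to observe''), and does so correctly.
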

\begin{proof}
We proceed exactly as we did in \cref{lemma:p3_complement_introduction}. By \cref{lemma:splits_maintain_cluster_graphs_with_tiny_clusters}, we deduce that $\overline{C_4}$ cannot be destroyed via vertex splitting. However, it is easy to observe that the destruction of either $P_4$, $C_4$, or $C_5$ using a vertex split necessarily introduces a new $\overline{C_4}$. Thus, to decide an instance $(G, k)$ of \textsc{Threshold-Vertex Splitting} (resp. \textsc{Split-Vertex Splitting}), it suffices to determine whether $G$ is already a threshold graph (resp. a split graph). This can be done in polynomial time by \cref{lemma:k_is_zero_polynomial}.
\end{proof}

}                               %

\section{Biconnected Forbidden (Induced) Subgraphs and Beyond}
\label{section:biconnected}
\appendixsection{section:biconnected}

We now introduce a reduction framework and use it to show hardness for well-connected subgraphs.
The source problems of the reduction are a special family of vertex cover problems (\cref{section:special_vertex_cover}).
In \cref{section:a_construction}, we formulate the central construction on which our reductions are based.
In \cref{section:generic_reduction_technique}, we use said construction to introduce a generic reduction technique that can be applied to a large class of hereditary properties.
However, its correctness will depend on finding so-called \emph{admissible splitting configurations}.
In \cref{section:biconnected_fisgs}, we deal with finding such admissible splitting configurations for properties characterized by biconnected forbidden (induced) subgraphs.
Then, in \cref{section:higher_connected_fisgs}, we progress to higher levels of connectivity.

\subsection{A Special Flavor of Vertex Cover}
\label{section:special_vertex_cover}
For each fixed $\ell \in \mathbb{N}$, consider the \textsc{$2\ell$-Subdivided Cubic Vertex Cover} problem:

\begin{problem}[framed]{\textsc{$2\ell$-Subdivided Cubic Vertex Cover}}
    Input: & A tuple $(G^*, k)$, where $G^*$ is a $2\ell$-subdivision of a cubic graph $G$ and $k \in \mathbb{N}$. \\
    Question: & Is there a vertex cover $C$ of $G^*$ with $|C| \le k$?
\end{problem}
The \NP-hardness of this problem for each $\ell \in \mathbb{N}$ follows easily from a result by Uehara~\cite{vertex_cover} and ``folklore'' techniques.
\ifshort
For self-containedness we provide a formal proof in \cref{app:section:biconnected}.
\else
Nevertheless, to ensure comprehensiveness, we provide a formal proof.
\fi
\toappendix{
To show that \textsc{$2\ell$-Subdivided Cubic Vertex Cover} is \NP-hard for each $\ell \in \mathbb{N}$, we first examine the \textsc{Cubic Vertex Cover} problem.

\begin{problem}[framed]{\textsc{Cubic Vertex Cover}}
    Input: & A tuple $(G, k)$, where $G$ is a cubic graph and $k \in \mathbb{N}$. \\
    Question: & Is there a vertex cover $C$ of $G$ with $|C| \le k$?
\end{problem}

Uehara \cite{vertex_cover} showed that \textsc{Cubic Vertex Cover}, with the additional constraint
that the input graph is 3-connected, planar, and of girth at least four, is \NP-hard.
Since the set of problem instances for \textsc{Cubic Vertex Cover} is a superset of the problem instances of this strengthened variant,
we immediately know that \textsc{Cubic Vertex Cover} is \NP-hard too.

\begin{problem}[framed]{\textsc{$2\ell$-Subdivided Cubic Vertex Cover}}
    Input: & A tuple $(G^*, k)$, where $G^*$ is a $2\ell$-subdivision of a cubic graph $G$ and $k \in \mathbb{N}$. \\
    Question: & Is there a vertex cover $C$ of $G^*$ with $|C| \le k$?
\end{problem}

\begin{figure}
    \begin{center}
        \includegraphics[]{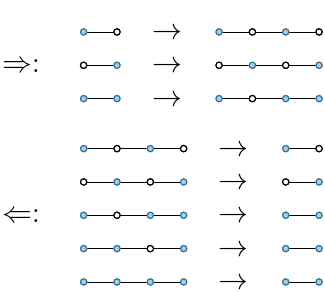}
    \end{center}
    \caption[Illustration of the replacements performed in the proof of \cref{lemma:subdivided_vc_reduction}.]{Illustration of all possible replacements performed in both directions of the proof of \cref{lemma:subdivided_vc_reduction}.
            The coloring indicates which vertices are part of the respective vertex cover.}
    \label{figure:subdivided_vc}
\end{figure}

The following lemma gives a straightforward reduction from \textsc{Cubic Vertex Cover} to \textsc{$2\ell$-Subdivided Cubic Vertex Cover} for any fixed $\ell$.
In the forward direction, we subdivide a graph that comes with a vertex cover step by step, each time replacing an edge with a path of four vertices.
A vertex cover for the new graph can then be obtained by including one of the two subdivision vertices into the vertex cover.
In the opposite direction, we convert a subdivided graph that comes with a vertex cover into its contracted equivalent by substituting paths consisting of four vertices with individual edges. We observe that, when removing vertices that underwent contraction, the vertex cover of the subdivided graph transforms into a reduced-size vertex cover of the contracted graph.
Reference \cref{figure:subdivided_vc} for an illustration of both transformations.

\begin{lemma}\label{lemma:subdivided_vc_reduction}
    Let $G$ be a cubic graph, $\ell,k \in \mathbb{N}$
    and $G^*$ be a $2\ell$-subdivision of $G$.
    Then, $(G, k)$ is a positive instance of
    \textsc{Cubic Vertex Cover} if and only if
    $(G^*, k + \ell \M{G})$ is a positive instance
    of \textsc{$2\ell$-Subdivided Cubic Vertex Cover}.
\end{lemma}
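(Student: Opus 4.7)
The plan is to prove both directions of the biconditional by exploiting the precise structure of the subdivision paths. For each edge $uv \in E(G)$, the induced path $u, p_1^{uv}, \ldots, p_{2\ell}^{uv}, v$ is an induced $P_{2\ell+2}$ in $G^*$, whose minimum vertex cover has size $\ell+1$. The essential observation is that this minimum can be split between one endpoint (already paid for by a vertex cover of $G$) and $\ell$ interior vertices, which accounts for the additive $\ell \M{G}$ term in the parameter.

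For the forward direction $(\Rightarrow)$, I would take a vertex cover $C$ of $G$ with $|C| \le k$ and extend it to $C^* \coloneqq C \cup \bigcup_{uv \in E(G)} S_{uv}$, where each $S_{uv}$ is a fixed set of $\ell$ interior subdivision vertices of the path for $uv$. Concretely, after possibly swapping $u$ and $v$ so that $u \in C$ (always possible since $C$ is a vertex cover), I would set $S_{uv} \coloneqq \{p_2^{uv}, p_4^{uv}, \ldots, p_{2\ell}^{uv}\}$. Then $u$ covers the edge $u\, p_1^{uv}$, the even-indexed interior vertices cover each remaining path edge in alternation, and $p_{2\ell}^{uv}$ covers $p_{2\ell}^{uv}\, v$. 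Thus $C^*$ is a vertex cover of $G^*$ of size at most $k + \ell \M{G}$.

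For the backward direction $(\Leftarrow)$, the crux is the following \emph{per-path bound}: for every $uv \in E(G)$, at least $\ell$ interior vertices of its subdivision path lie in $C^*$, and at least $\ell+1$ if neither $u$ nor $v$ belongs to $C^*$. I would prove this by case analysis on how many of $u, v$ belong to $C^*$; in each case the argument reduces to the minimum vertex cover of a standard path ($P_{2\ell}$, $P_{2\ell+1}$, or $P_{2\ell+2}$), noting that when $u, v \notin C^*$ the two boundary subdivision vertices $p_1^{uv}$ and $p_{2\ell}^{uv}$ are forced into $C^*$ (yielding the extra $+1$). Summing over all edges, with $M \coloneqq \{uv \in E(G) : u, v \notin C^*\}$, gives
\[
  |C^*| \ge |C^* \cap V(G)| + \ell \M{G} + |M|,
\]
whence $|C^* \cap V(G)| + |M| \le k$. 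I would then define $C$ to be $C^* \cap V(G)$ together with one arbitrarily chosen endpoint per edge in $M$; this set covers every edge of $G$ (uncovered edges of $G[C^* \cap V(G)]$ are exactly those in $M$) and has size at most $k$.

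The main obstacle is carrying out the per-path bound cleanly: one has to handle the three sub-cases based on how many of $u, v$ lie in $C^*$, and in each one establish the tight count of interior vertices forced into the cover, paying particular attention to the forced inclusions of $p_1^{uv}$ and $p_{2\ell}^{uv}$ when neither endpoint is available. The forward direction, the final summation, and the extraction of $C$ are then routine bookkeeping.
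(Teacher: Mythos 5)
Your proposal is correct. The forward direction is essentially the paper's argument (pick the endpoint of each edge that lies in the cover of $G$ and add $\ell$ interior vertices per subdivision path, for a total of $k+\ell|E(G)|$), only carried out in one shot rather than by iteratively subdividing one edge at a time. The backward direction, however, takes a genuinely different route: the paper repeatedly contracts a degree-two path $v_1v_2v_3v_4$ back to the edge $v_1v_4$, dropping $\{v_2,v_3\}$ from the cover and arguing the size shrinks by at least one per contraction, whereas you prove a global per-path lower bound ($\ge\ell$ interior cover vertices per subdivided edge, $\ge\ell+1$ when neither original endpoint is covered, via the minimum vertex cover of $P_{2\ell}$ with or without forced endpoints) and then sum. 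Your counting argument is slightly longer to set up but is tighter in one respect: it explicitly handles the case where a subdivision path is covered entirely by interior vertices with neither original endpoint in $C^*$ (your set $M$, repaired at the end by adding one endpoint per such edge), a case the paper's contraction step glosses over when it asserts that the contracted edge $v_1v_4$ remains covered. Both approaches yield the same bound, and your case analysis on $|\{u,v\}\cap C^*|$ is routine as you say.
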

\begin{proof}
    $(\Rightarrow)\colon$
    Let $C$ be a vertex cover of $G$ with $|C| \leq k$.
    Consider $v_1v_4 \in E(G)$.
    Without loss of generality, we have $v_1 \in C$.
    Create $G'$ by subdividing $v_1v_4$ into the path $v_1v_2v_3v_4$
    and set $C' \coloneqq C \cup \set{v_3}$.
    We observe that $C'$ is a vertex cover of $G'$.
    Now, starting with $G$, by $\ell$ such operations per edge of $G$, totaling $\ell \M{G}$ operations,
    we can subdivide all edges of $G$ to construct $G^*$ with an accompanying vertex cover $C^*$ where
    \begin{equation*}
        |C^*| = |C| + \ell \M{G} \leq k + \ell \M{G}.
    \end{equation*}

    $(\Leftarrow)\colon$
    Let $C^*$ be a vertex cover of $G^*$ with $|C^*| \leq k + \ell \M{G}$.
    Consider a path $v_1v_2v_3v_4$ in $G^*$ where $\deg_{G^*}(v_2) = \deg_{G^*}(v_3) = 2$.
    Create ${G^{*}}'$ by replacing this path by the single edge $v_1v_4$ and set ${C^{*}}' \coloneqq C^* \setminus \set{v_2, v_3}$.
    Observe that $|{C^*}'| \leq |C^*| - 1$ and that ${C^*}'$ is a vertex cover of ${G^*}'$.
    Starting with $G^*$, by $\ell \M{G}$ operations as just described,
    we can construct $G$ by ``undoing'' the subdivisions, and additionally obtain an accompanying vertex cover $C$ where
    \begin{equation*}
        |C| \leq |C^*| - \ell \M{G} \leq k.\qedhere
    \end{equation*}
\end{proof}
Using this reduction the \NP-hardness proof is immediate:
\begin{lemma}\label{lemma:two_l_subdivided_cubic_vc_np_hard}
    \textsc{$2\ell$-Subdivided Cubic Vertex Cover} is \NP-hard for each $\ell \in \mathbb{N}$.
\end{lemma}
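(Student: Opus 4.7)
The plan is to derive this as a direct corollary of the two ingredients established immediately above: the \NP-hardness of \textsc{Cubic Vertex Cover} (inherited from Uehara's strengthened result on 3-connected, planar, girth-$\geq 4$ cubic graphs, since restricting instance families only narrows the class) and the reduction given by \cref{lemma:subdivided_vc_reduction}. First I would observe that \cref{lemma:subdivided_vc_reduction} asserts the equivalence between instances $(G, k)$ of \textsc{Cubic Vertex Cover} and $(G^*, k + \ell \M{G})$ of \textsc{$2\ell$-Subdivided Cubic Vertex Cover}, whenever $G^*$ is a $2\ell$-subdivision of $G$. Since $\ell$ is a fixed constant in the problem definition, this equivalence gives a many-to-one reduction.

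Next I would check that the reduction is polynomial-time computable: given $G$, constructing a $2\ell$-subdivision $G^*$ requires replacing each edge with a path of $2\ell$ inner vertices, which can be performed in time $O(\ell \cdot (\N{G} + \M{G}))$, i.e., linear for fixed $\ell$. Computing the new parameter $k + \ell \M{G}$ is likewise trivial. Thus, the entire construction maps \textsc{Cubic Vertex Cover} instances to \textsc{$2\ell$-Subdivided Cubic Vertex Cover} instances in polynomial time, and preserves yes/no answers by \cref{lemma:subdivided_vc_reduction}.

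Concluding, since \textsc{Cubic Vertex Cover} is \NP-hard and reduces in polynomial time to \textsc{$2\ell$-Subdivided Cubic Vertex Cover} for every fixed $\ell \in \mathbb{N}$, the latter is \NP-hard for every such $\ell$. There is essentially no obstacle here — the work is already encapsulated in \cref{lemma:subdivided_vc_reduction}; the only thing to verify is that applying that lemma polynomially many times (once, in fact, parameterized by the constant $\ell$) yields a polynomial-time reduction, which is immediate because $\ell$ is a fixed constant and not part of the input.
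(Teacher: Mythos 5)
Your proposal is correct and matches the paper's proof exactly: both invoke \cref{lemma:subdivided_vc_reduction} as a polynomial-time many-one reduction from the \NP-hard \textsc{Cubic Vertex Cover} problem (inherited from Uehara's restricted variant) and conclude immediately. The only difference is that you spell out the routine check that the subdivision is computable in polynomial time for fixed $\ell$, which the paper leaves implicit.
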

\begin{proof}
    By \cref{lemma:subdivided_vc_reduction}, for each $\ell \in \mathcal{N}$ we have
    \begin{equation*}
        \text{\textsc{Cubic Vertex Cover}} \leq_{\P} \text{\textsc{$2\ell$-Subdivided Cubic Vertex Cover}}.
    \end{equation*}
    Furthermore, \textsc{Cubic Vertex Cover} is \NP-hard.
    Therefore, \textsc{$2\ell$-Subdivided Cubic Vertex Cover} is \NP-hard for all $\ell \in \mathbb{N}$.
\end{proof}
}

\subsection{Splitting Configurations and the Central Construction}
\label{section:a_construction}
We start by introducing some notation for describing a particular split and subsequently formalize the notion of a \emph{splitting configuration} that encodes a strategy of how vertices of a given graph are to be split.
\begin{definition}\label{definition:split_shorthand}
    Let $H$ be a graph, $v \in V(H)$,
    $X_1, X_2 \subseteq N_{H}(v)$ with $X_1 \cup X_2 = N_{H}(v)$,
    and $v_1, v_2$ two distinct vertices.
    Further, let $H'$ be the graph obtained by splitting 
    $v$ into $v_1$ and $v_2$ while setting $N_{H'}(v_1) = X_1$, 
    $N_{H'}(v_2) = X_2$.
    Then, we identify $H'$ with the shorthand $\Split(H, v, X_1, X_2, v_1, v_2)$.
\end{definition}

Next, we introduce the concept of a \emph{splitting configuration}.
Intuitively, a splitting configuration consists of a graph $H$,
a selection of two of its vertices ($a$ and $b$),
and an encoding of a specific strategy of how to split $a$ and $b$ in $H$.

\begin{definition}\label{definition:splitting_configuration}
    Let $H$ be a graph,
    $a, b \in V(H)$ distinct vertices,
    $A_1, A_2 \subseteq N_H(a)$, and $B_1, B_2 \subseteq N_H(b)$, such that
    $A_1 \cup A_2 = N_H(a)$, $B_1 \cup B_2 = N_H(b)$, and
    $\set{A_1, A_2, B_1, B_2} \cap \set{\varnothing} = \varnothing$.
    Then, $(H, a, A_1, A_2, b, B_1, B_2)$ is called a
    \emph{splitting configuration}.
    If additionally $A_1 \cap A_2 = B_1 \cap B_2 = \varnothing$, we speak 
    of a \emph{disjoint splitting configuration}.
    Furthermore, we say the splitting configuration is \emph{based upon} $\mathcal{F}$ if $H \in \mathcal{F}$.
\end{definition}

A splitting configuration $C$ serves as the ``atomic'' building block of our construction.
In addition to the splitting configuration, we consider a directed ``skeleton'' graph $\vec{G}$ and a subset $S$ of $\vec{G}$'s vertices.
In total, these three values will determine the graph $\Constr(\vec{G}, C, S)$.

\looseness=-1
Our reductions maps an instance $(G, k)$ of $2\ell$-\textsc{Subdivided Cubic Vertex Cover} to some vertex splitting problem (depending on the graph property in question) and an instance $(\Constr(\vec{G}, C, \varnothing), k)$, where $\vec{G}$ is an orientation of $G$ and $C$ is some suitable splitting configuration.
Thus, we always set the graph $\vec{G}$ to be an orientation of an instance of $2\ell$-\textsc{Subdivided Cubic Vertex Cover} for some $\ell \in \mathbb{N}$;
note that when computing the reduction itself, we simply set $S = \varnothing$.
However, we will set $S$ to non-empty sets when performing the forward direction of the correctness proof.
There, we need to find splitting sequences starting with $\Constr(\vec{G}, C, \varnothing)$ and ending with a graph that is free of forbidden (induced) subgraphs.
Then, by subsequently introducing elements to $S$ that stem from a given vertex cover of $G$,
we can use the $\Constr(\vec{G}, C, S)$ notation to directly construct each member of the sequence. 

\looseness=-1
Informally, building the graph $\Constr(\vec{G}, C, S)$ amounts to the following sequence of steps:
Replace each arc of $\vec{G}$ with a copy of $H$. Here, $a$ and $b$ act as \emph{attachment points} or \emph{ends}, and the orientation of each arc dictates whether $H$ shall be inserted ``forwards'' or ``backwards''.
We often call such copies of $H$ \emph{edge gadget}.
If $S$ is empty, the construction is complete.
Otherwise, for each $s \in S$ we perform a vertex split of the corresponding attachment point.
The number of affected copies of $H$ equals the degree of the vertex corresponding to the attachment point in $\vec{G}$.
In this split, each individual copy of $H$ attached to the attachment point is split according to the splitting configuration $C$.
Consult \cref{figure:constr_exampl} for an example with concrete values.

\begin{figure}[t]
    \begin{center}
        \includegraphics{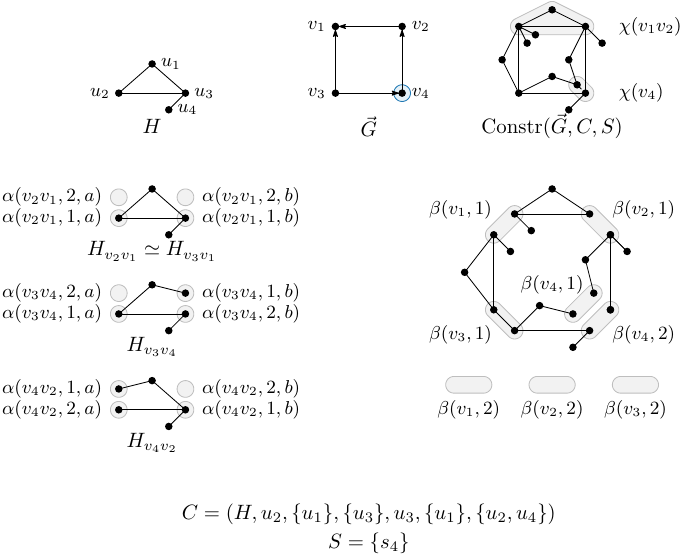}
    \end{center}
    \caption[Example of \cref{definition:constr}.]{Example of \cref{definition:constr}.
    The construction is carried out for the ``skeleton'' graph $\vec{G}$, a splitting configuration $C$, and the set of vertices $S$ marked in blue.
    The edge gadget graph is $H$;
    its ``$a$-end'' is $u_2$ and its ``$b$-end'' is $u_3$.
    The subscript of $\chi$, $\Constr(\vec{G}, C, S)$, is dropped for brevity.}
    \label{figure:constr_exampl}
\end{figure}

Below, whenever we encounter a graph $G'$ that is a copy of a graph $G$, we use $v_{G'}$ to denote the vertex that corresponds to $v \in V(G)$ in $G'$.
We also do likewise for sets of vertices.

\subparagraph{Towards defining \boldmath $\Constr(\vec{G}, C, S)$}
    Let $\vec{G}$ be a directed, oriented graph without loops,
    $C$ a splitting configuration with $C = (H, a, A_1, A_2, b, B_1, B_2)$,
    and $S \subseteq V(G)$.
    We aim to define the graph
    $\Constr(\vec{G}, C, S)$ and the map $\chi_{\Constr(\vec{G}, C, S)}$.
    For this we first need further notation for the gadget graphs $H_e$, and two maps $\alpha$ and $\beta$ that specify attachment points.

    With each arc $e = v_av_b \in E(\vec{G})$, we associate a fresh copy of $H$ and call it $H'_{e}$.
    In correspondence with the notational convention described above, the vertices $a_{H'_{e}}, b_{H'_{e}}$ and the sets of vertices ${A_1}_{H'_{e}}, {A_2}_{H'_{e}}, {B_1}_{H'_{e}}, {B_2}_{H'_{e}}$
    denote the corresponding vertex or set of vertices of $H$ in its copy, $H'_{e}$.
    We obtain $H_e$ by splitting a subset of $\set{a_{H'_{e}}, b_{H'_{e}}}$ in $H'_{e}$.
    Whether we split zero, one, or two vertices is dictated by $S$ ($a_{H'_{e}}$ is split if{}f $v_a \in S$, $b_{H'_{e}}$ is split if{}f $v_b \in S$);
    the precise manner vertices are split is dictated by the splitting configuration $C$. More specifically, the neighborhoods of the descendant vertices of $a_{H'_{e}}$ (resp. $b_{H'_{e}}$) are given by ${A_1}_{H'_{e}}, {A_2}_{H'_{e}}$ (resp. ${B_1}_{H'_{e}}, {B_2}_{H'_{e}}$).
    With this, we can specify formally how $H_e$ is obtained from each $e = v_av_b$ of $E(\vec{G})$:
    \begin{equation*}
        H_{e} \coloneqq
        \begin{cases}
            H'_{e} & \text{if } v_a \not\in S, v_b \not\in S,\\ 
            \Split(H'_{e}, a_{H'_{e}}, {A_1}_{H'_e}, {A_2}_{H'_e}, {a_1}_{H_e}, {a_2}_{H_e} ) & \text{if } v_a \in S, v_b \not\in S,\\ 
            \Split(H'_{e}, b_{H'_{e}}, {B_1}_{H'_e}, {B_2}_{H'_e}, {b_1}_{H_e}, {b_2}_{H_e}) & \text{if } v_a \not\in S, v_b \in S \text{, and}\\
            \Split(\Split(H'_{e}, a_{H'_{e}}, {A_1}_{H'_e}, {A_2}_{H'_e}, {a_1}_{H_e}, {a_2}_{H_e}),\\
            \hphantom{\Split(\Split(H'_{e},\hspace{0.08cm}} b_{H'_{e}}, B_1^*, B_2^*, {b_1}_{H_e}, {b_2}_{H_e}) &  \text{otherwise,}\\
        \end{cases}
    \end{equation*}
    where $B_1^*$ (resp. $B_2^*$) denote the descendant vertices of ${B_1}_{H'_e}$ (resp. ${B_2}_{H'_e}$) with respect to the split described by $\Split(H'_{e}, a_{H'_{e}}, {A_1}_{H'_e}, {A_2}_{H'_e}, {a_1}_{H_e}, {a_2}_{H_e})$.\footnote{This additional care is required to cover the case when $a$ and $b$ are neighbors in $H$.}
    The set $\set{H_e \mid e \in E(\vec{G})}$ of edge gadgets provides the basic building blocks of $\Constr(\vec{G}, C, S)$.
    Note that the vertex sets of all $H_e$ with $e \in E(\vec{G})$ are disjoint;
    to construct the final graph $\Constr(\vec{G}, C, S)$, we need to join the edge gadgets according to the structure of $\vec{G}$.
    
    For this purpose, we designate two numbered \emph{attachment points} for the $a$-end, and two numbered attachment points for the $b$-end of each $H_e$,
    where an attachment point is a possibly empty subset of $H_e$'s vertices.
    Consider some edge gadget $H_e$ and one of its ends, say, the $a$-end: If $a_{H'_e}$ was split when building $H_e$,
    the first attachment point of $H_e$'s $a$-end consists of a singleton set containing the first descendant of $a_{H'_e}$,
    while the second attachment point of $H_e$'s $a$-end consists of a singleton set containing the second descendant of $a_{H'_e}$.
    If otherwise $a_{H'_e}$ was not split when building $H_e$, then the first attachment point of $H_e$'s $a$-end is the set $\set{a_{H_e}}$, and the second attachment point of $H_e$'s $a$-end is the empty set.
    Later, when building $\Constr(\vec{G}, C, S)$, we will select attachment points from different edge gadgets. Then, we will merge all the vertices of the selected attachment points into a single vertex for each such selection of attachment points.

    Formally, we determine the attachment points using the map $\alpha(\cdot, \cdot, \cdot)$.
    We use $e \in E(\vec{G})$ to select an edge gadget, $x \in \set{a, b}$ to select the end of $H_e$, and $i \in \set{1, 2}$ to select either the first or second attachment point of this end of $H_e$.
    Then, we define
    \begin{equation*}
        \alpha(e, i, x) \coloneqq
        \begin{cases}
            \set{ {x_i}_{H_e} } & \text{if } v_x \in S,\\
            \set{ x_{H_e} } & \text{if } v_x \not\in S \land i = 1 \text{, and}\\
            \varnothing & \text{if } v_x \not\in S \land i = 2.\\
        \end{cases}
    \end{equation*}

    \looseness=-1
    Above we have defined the building blocks (edge gadgets) to join, as well as attachment points describing at which vertices edge gadgets can be joined.
    It remains to incorporate the edge orientations and structure of $\vec{G}$ to determine how exactly the edge gadgets are assembled into $\Constr(\vec{G}, C, S)$.
    Intuitively, we replace each arc of $\vec{G}$ with the corresponding edge gadget, and use the orientation of $\vec{G}$ to determine which way the edge gadget is to be inserted, since each edge gadgets has two ``ends''.
    Below we list a series of equivalence classes, that is, sets of vertices.
    The final graph $\Constr(\vec{G}, C, S)$ is then built by composing all
    $(H_e)_{e \in E(\vec{G})}$ into a single graph and merging all equivalent vertices into one representative vertex each.

    \looseness=-1
    Concretely, for each $v \in V(\vec{G})$, we define two equivalence classes, stemming from the circumstance that we have two attachment points per edge gadget end.
    The set of equivalence classes is given by
        $\bigcup_{v \in V(\vec{G})} \set{\beta(v, 1), \beta(v, 2)}$,
    where for each $v \in V(\vec{G})$ and $i \in \set{1, 2}$, we define
    \begin{align*}
        \beta(v, i) \coloneqq \set{
            \parens{\bigcup_{u \in N^-_{\vec{G}}(v)} \alpha(uv, i, b)} \cup \parens{\bigcup_{u \in N^+_{\vec{G}}(v)} \alpha(vu, i, a)}
        }.
    \end{align*}
    Intuitively, $\beta$ does the following:
    Consider $v \in V(\vec{G})$.
    For all incoming arcs, we compute the union of the first (resp. the second) attachment points of the ``$b$-end'' of the edge gadgets corresponding to the incoming arcs,
    and likewise, for all outgoing arcs, we compute the union of the first (resp. the second) attachment points of the ``$a$-end'' of the edge gadgets corresponding to the outgoing arcs.

    With this, all equivalence classes are fully defined.
    Remember that each equivalence class is of the form $\beta(v, 1)$ or $\beta(v, 2)$ where $v \in V(\vec{G})$.
    Thus we can specify the main construction.%
    \begin{definition}\label{definition:constr}
      The graph $\Constr(\vec{G}, C, S)$ is constructed by composing all $(H_e)_{e \in E(\vec{G})}$ into a single graph and merging all equivalent vertices, as defined above, into one representative vertex each.
    \end{definition}
    
    Later, when reasoning about the construction, we will need a way to select the vertices that stem from either a single edge gadget, or from the intersection of multiple edge gadgets. 
    Note that the step where we joined the copies of $H$ defines a function $f$ that assigns to each vertex of an $H$-copy the vertex of $\Constr(\vec{G}, C, S)$ it corresponds to.
    Using $f$, we formulate the auxiliary function 
        $\chi_{\Constr(\vec{G}, C, S)} \colon V(\vec{G}) \cup E(\vec{G}) \to \mathcal{P}(V(\Constr(\vec{G}, C, S)))$
    as follows.
    We map each arc $e \in E(\vec{G})$ to the set of vertices
    that correspond to the vertices of $H_e$ in $\Constr(\vec{G}, C, S)$, that is,
        $\chi(e)_{\Constr(\vec{G}, C, S)} \coloneqq f( V(H_e) )$,
    and each vertex $v \in V(\vec{G})$ to either a set of a single vertex of $\Constr(\vec{G}, C, S)$ if
    $v \not\in S$, or two distinct vertices otherwise, that is,
        $\chi(v)_{\Constr(\vec{G}, C, S)} \coloneqq f(\beta(v, 1) \cup \beta(v, 2))$.
    These vertices ``sit at the intersection'' of different $H$-copies; we will call these vertices either simply \emph{ends}, or more specifically \emph{$a$- or $b$-ends} (of an edge gadget), respectively.
    Reference \cref{figure:constr_exampl} for an example.

Abstracting from a single instantiation of our construction,
we also introduce notation to capture the class of all possible constructions based on a given splitting configuration and an undirected graph together with all of its vertex covers.
\begin{definition}\label{definition:all_constr}
    Let $G$ be a simple graph
    and $C$ a splitting configuration.
    Then, we write $\AllConstr(G, C)$ to describe the set of all
    $\Constr(\vec{G}, C, S)$, where $\vec{G}$ is an orientation of $G$
    and $S \subseteq V(G)$ is a vertex cover of $G$.
\end{definition}

\subsection[A Generic Reduction Technique]{A Generic Reduction Technique and Admissible Splitting Configurations}
\label{section:generic_reduction_technique}
\looseness=-1
Consider a hereditary property characterized by a set of forbidden (induced) subgraphs $\mathcal{F}$.
We devise a general method to show that $\free_{\prec/\subseteq}(\mathcal{F})$-\textsc{Vertex Splitting} is \NP-complete
by calculating $\ell \in \mathbb{N}$ depending on $\mathcal{F}$ and reducing
\textsc{$2\ell$-Subdivided Cubic Vertex Cover} to $\free_{\prec/\subseteq}(\mathcal{F})$-\textsc{Vertex Splitting}.
The instance for the vertex splitting problem will be given by $\Constr(\vec{G}, C, \varnothing)$, where $\vec{G}$ is an arbitrarily chosen orientation of the given $2\ell$-subdivided graph $G$, and $C$ is a splitting configuration based on $\mathcal{F}$.
The correctness of this reduction depends on the choice of $C$, meaning that the approach can fail.
In this subsection, we define the property of \emph{admissibility} for a splitting configuration and show that, when a splitting configuration is admissible, the reduction outlined above is correct.
The remaining sections then deal with finding admissible splitting configurations for various classes of hereditary properties.

The simpler part of the correctness proof is the backward direction, that is, extracting a vertex cover from a splitting sequence that destroys all forbidden subgraphs, which works independently of choice of $C$ or $\ell$:
\begin{restatable}[\appsymb]{lemma}{lemmasplittingsequencetovertexcover}\label{lemma:splitting_sequence_to_vertex_cover}
    Let $G$ be a graph,
    $\vec{G}$ an orientation of $G$,
    $\mathcal{F}$ a family of graphs,
    $C$ a splitting configuration based upon $\mathcal{F}$,
    and $G_0, \ldots, G_k$ a splitting sequence such that
    $G_0 = \Constr(\vec{G}, C, \varnothing)$ and
    $G_k \in \free_\subseteq(\mathcal{F})$ (resp. $G_k \in \free_\prec(\mathcal{F})$).
    Then, there exists a vertex cover of $G$ with size at most $k$.
\end{restatable}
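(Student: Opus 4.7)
The plan is to charge each of the $k$ splits in the sequence to at most one vertex of a candidate cover $S \subseteq V(G)$, then argue that every edge of $G$ must be covered because every gadget copy of $H$ inside $G_0$ must be destroyed before reaching $G_k$.

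First I would fix convenient names in $G_0 = \Constr(\vec{G}, C, \varnothing)$. Since $S = \varnothing$ is used when building $G_0$, for every $v \in V(G)$ the set $\chi(v)$ is a singleton $\{\bar{v}\}$, and for every arc $e \in E(\vec{G})$ the edge gadget $H_e$ is an untouched copy of $H$. Writing $I_e \coloneqq V(H_e) \setminus \{\bar{v_a}, \bar{v_b}\}$ for $e = v_a v_b$, the vertex set of $G_0$ is the disjoint union of $\{\bar{v} : v \in V(G)\}$ and $\bigcup_{e} I_e$, and $V(H_e) = \{\bar{v_a}, \bar{v_b}\} \cup I_e$ for each arc $e$.

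The key claim is that, for every arc $e \in E(\vec{G})$, some split in $G_0, \ldots, G_k$ has its split-vertex ancestor in $V(H_e)$. Suppose not: then no vertex in $V(H_e)$ nor any of its descendants is ever split. Consequently each $w \in V(H_e)$ survives unchanged in $G_k$, and for any two $w, w' \in V(H_e)$ the edge $ww'$ (or its absence) is preserved, because each split only alters edges incident to the vertex being split, and that vertex is never in $V(H_e)$. Hence $G_k[V(H_e)] \simeq H$, so $H \prec G_k$ (and in particular $H \subseteq G_k$), contradicting $G_k \in \free_\prec(\mathcal{F})$ (resp.\ $G_k \in \free_\subseteq(\mathcal{F})$).

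Given the claim, I would build $S$ by iterating through the $k$ splits: for the split at step $i$ with split vertex $x \in V(G_i)$, let $x_0 \in V(G_0)$ be its ancestor; if $x_0 = \bar{v}$ for some $v \in V(G)$, add $v$ to $S$, and if $x_0 \in I_e$ for some $e = v_a v_b$, add $v_a$ to $S$ (the choice being fixed once and for all per $e$). Each split contributes at most one element, so $|S| \leq k$. For any edge $uv \in E(G)$, viewed as an arc $e$ of $\vec{G}$, the key claim supplies a split whose ancestor lies in $V(H_e) = \{\bar{u}, \bar{v}\} \cup I_e$, and by construction this split placed either $u$ or $v$ into $S$; hence $S$ is a vertex cover of $G$. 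The main obstacle is the preservation claim itself: it must be phrased in terms of ancestors in $G_0$ rather than by induction on a single split, so that splits of other gadgets (which can re-partition adjacency structure elsewhere) never touch the edges or non-edges inside $V(H_e)$, making the local nature of vertex splitting directly applicable.
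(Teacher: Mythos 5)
Your proposal is correct and follows essentially the same route as the paper: the paper likewise maps each split's ancestor in $G_0$ to a vertex of $G$ (attachment-point ancestors to their $G$-vertex, gadget-internal ancestors to an arbitrarily fixed endpoint of the corresponding arc) and covers each edge by noting that the intact copy of $H$ in $\chi_{G_0}(uv)$ forces some split ancestor into that gadget. Your explicit preservation argument for the key claim is a detail the paper leaves implicit, but the proof is the same.
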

\appendixproof{lemma:splitting_sequence_to_vertex_cover}{
\ifshort\lemmasplittingsequencetovertexcover*\fi
\begin{proof}
    Let $A \subseteq V(G_0)$ be the set of all ancestors
    of vertices that are split in the splitting sequence in $G_0$.
    Now, define a mapping $f \colon A \to V(G)$ as follows:
    For each $a \in A$, there either is exactly one $v \in V(G)$, such that $a \in \chi_{G_0}(v)$, or no such vertex.
    In the former case, we set $f(a) \coloneqq v$;
    in the latter, we find $e \in E(G)$ with $a \in \chi_{G_0}(e)$,
    arbitrarily choose one of its endpoints, and map $f(a)$ to it.
    We claim that $\range(f)$ is a vertex cover of $G$ of size at most $k$.

    Exactly $k$ vertices were split in the splitting sequence;
    some split vertices may share a common ancestor in $G_0$.
    Thus, $|A| \leq k$. From this and the observation that $|\range(f)| \leq |A|$,
    it follows that $\range(f) \leq k$.

    It remains to show that $\range(f)$ is a vertex cover of $G$:
    Let $uv \in E(G)$.
    We know that $G_0[\chi_{G_0}(uv)] \simeq F$ for some $F \in \mathcal{F}$ because $C$ is based upon $\mathcal{F}$.
    Additionally, we required that $G_k \in \free_\subseteq(\mathcal{F})$ (resp. $G_k \in \free_\prec(\mathcal{F})$).
    Thus, there is some $a \in A \cap \chi_{G_0}(uv)$.
    Since $f(a) \in \set{u, v}$, it follows that $f(a) \in \range(f)$ is a suitable witness for $uv$ being covered by $\range(f)$.
\end{proof}
}

\looseness=-1
We now tackle the more difficult part of the correctness proof, the forward direction, where we use a vertex cover to find a splitting sequence that destroys all forbidden subgraphs in the construction.
Here, the choice of $C$ and $\ell$ will matter.
We are given a vertex cover of the ``skeleton graph'' $\vec{G}$ and split all of the attachment points in the construction according to a corresponding splitting configuration.
In the final graph of the splitting sequence, the whole construction needs to be free of embeddings of forbidden (induced) subgraphs.
This can be rephrased as two separate properties that a splitting configuration must guarantee when applying our construction to any conceivable instance of \textsc{$2\ell$-Subdivided Cubic Vertex Cover} and splitting it according to a vertex cover:%
\begin{itemize}
\item There are no embeddings of forbidden (induced) subgraphs contained entirely within any individual edge gadget.
\item There are no embeddings of forbidden (induced) subgraphs reaching from one edge gadget to a neighboring edge gadget.
\end{itemize}%
\looseness=-1%
In \cref{definition:admissible}, we formalize both these requirements.
The first requirement is addressed through the concept of intra-edge embedding-free splitting configurations,
while the second requirement is formalized using the notion of separating splitting configurations.
Intuitively, using a separating splitting configuration has the property that, when carrying out $\Constr(G, C, S)$ where $S$ is a vertex cover of $G$, regardless of the orientation of $G$ and choice of $S$,
each embedding of a forbidden (induced) subgraph in $\Constr(G, C, S)$ is confined to a single edge gadget, provided that $G$ is an $L$-subdivision of a graph for large enough $L$.
Note that the requirement on $\mathcal{F}$ to be of bounded diameter will serve to guarantee that such an $L$ can be found. 
If a splitting configuration is both intra-edge embedding-free and separating, we say it is admissible.
\begin{definition}\label{definition:admissible}
    Let $\mathcal{F}$ be a family of graphs of bounded diameter
    with
    $H \in \mathcal{F}$ and $C = (H, a, A_1, A_2, b, B_1, B_2)$
    a splitting configuration.
    Furthermore, let $L \coloneqq 2 \cdot \max_{F \in \mathcal{F}} \diam(F)$.
    Then, $C$ is called \emph{separating}
    for $\mathcal{F}$ if for all graphs $G$ that are an $L$-subdivision of some cubic graph, we have
    \begin{equation*}
        \forall G^* \in \AllConstr(G, C) \colon \forall F \in \mathcal{F} \colon \forall \pi \in \emb_\subseteq(F, G^*) \colon \exists e \in E(G) \colon \range(\pi) \subseteq \chi_{G^*}(e).
    \end{equation*}
    Additionally, we say $C$ is \emph{intra-edge embedding-free} for $\mathcal{F}$ if $\AllConstr(K_2, C) \subseteq \free_\subseteq(F)$.
    Finally, the splitting configuration $C$ is called \emph{admissible} for $\mathcal{F}$ if it is both separating for $\mathcal{F}$ as well as intra-edge embedding-free for $\mathcal{F}$.
\end{definition}

Once we have obtained such an admissible splitting configuration, we can also tackle the opposite direction of the correctness proof of our reduction.
\begin{restatable}[\appsymb]{lemma}{lemmareductionadmissible}\label{lemma:reduction_admissible}
    Let $\mathcal{F}$ be a family of graphs of bounded diameter,
    $C$ a splitting configuration admissible for $\mathcal{F}$,
    $L \coloneqq 2 \cdot \max_{F \in \mathcal{F}} \diam(F)$,
    and $(G, k)$ an instance
    of \textsc{$L$-Subdivided Cubic Vertex Cover}.
    Then, $(G, k)$ is a positive instance of
    \textsc{$L$-Subdivided Cubic Vertex Cover}
    if and only if $(\Constr(\vec{G}, C, \varnothing), k)$
    is a positive instance of \textsc{$\free_\subseteq(\mathcal{F})$-VS} (resp. \textsc{$\free_\prec(\mathcal{F})$-VS})
    where $\vec{G}$ is an orientation of $G$.
\end{restatable}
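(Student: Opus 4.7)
The backward direction is given immediately by \cref{lemma:splitting_sequence_to_vertex_cover}: given a splitting sequence of length at most $k$ from $\Constr(\vec{G}, C, \varnothing)$ ending in a graph in $\free_\subseteq(\mathcal{F})$ (respectively $\free_\prec(\mathcal{F})$), that lemma produces a vertex cover of $G$ of size at most $k$, which is exactly the witness needed for the \textsc{$L$-Subdivided Cubic Vertex Cover} instance to be positive.

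For the forward direction, fix a vertex cover $S = \{v_1, \ldots, v_m\}$ of $G$ with $m \leq k$, set $S_i := \{v_1, \ldots, v_i\}$, and write $G_i := \Constr(\vec{G}, C, S_i)$. The plan is to exhibit the splitting sequence $G_0, G_1, \ldots, G_m$ and then show $G_m \in \free_\subseteq(\mathcal{F})$. Each transition $G_{i-1} \to G_i$ should be a single valid vertex split: in $G_{i-1}$, the vertex $v_i \notin S_{i-1}$ is represented by a unique vertex whose neighborhood decomposes across its incident edge gadgets exactly along the $A_1$/$A_2$ (or $B_1$/$B_2$) split prescribed by $C$. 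Since $A_1, A_2, B_1, B_2$ are nonempty and their unions give the respective neighborhoods by \cref{definition:splitting_configuration}, this is a valid vertex split, yielding a splitting sequence of length $m \leq k$.

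To then show $G_m \in \free_\subseteq(\mathcal{F})$, suppose for contradiction that $\pi \in \emb_\subseteq(F, G_m)$ for some $F \in \mathcal{F}$. Since $C$ is separating for $\mathcal{F}$ and $G$ is an $L$-subdivision of a cubic graph, $\range(\pi) \subseteq \chi_{G_m}(e)$ for some $e = uv \in E(G)$. Because $S$ is a vertex cover of $G$, the set $S \cap \{u, v\}$ is nonempty and thus a vertex cover of $K_2$; moreover, the induced subgraph of $G_m$ on $\chi_{G_m}(e)$ is isomorphic to $\Constr(\vec{K_2}, C, S \cap \{u, v\}) \in \AllConstr(K_2, C)$, where $\vec{K_2}$ carries the orientation induced from $\vec{G}$. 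By intra-edge embedding-freeness, no member of $\AllConstr(K_2, C)$ admits a subgraph embedding of $F$, contradicting the existence of $\pi$. Hence $G_m \in \free_\subseteq(\mathcal{F})$, and since $\free_\subseteq(\mathcal{F}) \subseteq \free_\prec(\mathcal{F})$, the conclusion holds for both variants of the problem.

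The hard part will be rigorously justifying the isomorphism step above: one must verify that within $G_m$, the induced subgraph on $\chi_{G_m}(e)$ is exactly $\Constr(\vec{K_2}, C, S \cap \{u, v\})$. This amounts to checking that distinct edge gadgets share only their attachment ends (so no spurious cross-gadget edges land inside $\chi_{G_m}(e)$) and that the splits performed at the two ends of $e$ in $G_m$ coincide with those the $K_2$-construction would produce under the vertex cover $S \cap \{u, v\}$ and the induced orientation of $e$. Once this correspondence is in hand, the two halves of admissibility combine cleanly to finish the argument.
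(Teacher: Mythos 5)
Your proposal is correct and follows essentially the same route as the paper's proof: the backward direction is delegated to \cref{lemma:splitting_sequence_to_vertex_cover}, and the forward direction orders the vertex cover to build the splitting sequence ending in $\Constr(\vec{G}, C, S)$, then combines the separating property with intra-edge embedding-freeness to rule out any embedding of a forbidden graph. The isomorphism between $G_m[\chi_{G_m}(e)]$ and a member of $\AllConstr(K_2, C)$, which you flag as the remaining hard part, is asserted without further elaboration in the paper as well, so your level of rigor matches (and your explicit observation that $S \cap \{u,v\}$ is a vertex cover of $K_2$ is a small clarification the paper omits).
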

\appendixproof{lemma:reduction_admissible}{
\ifshort\lemmareductionadmissible*\fi
\begin{proof}
    We only prove the variant of this statement involving \emph{induced} subgraphs; the proof for the alternative regarding subgraphs is analogous.

    $(\Rightarrow)\colon$
    Let $S$ be a vertex cover of $G$ with $|S| \le k$.
    Then, by fixing a total order of $S$, we can obtain a splitting sequence $G_0, \ldots, G_k$ with
    $G_0 = \Constr(\vec{G}, C, \varnothing)$ and
    $G_k = \Constr(\vec{G}, C, S)$.
    It remains to show that $G_k \in \free_\prec(\mathcal{F})$.
    Towards a contradiction, suppose the opposite, that is, there is $F \in \mathcal{F}$ with $\pi \in \emb_\prec(F, G_k)$.
    Since $C$ is separating for $\mathcal{F}$, we deduce that there is $e \in E(\vec{G})$ such that $\range(\pi) \subseteq \chi_{G_k}(e)$.
    But $G_k[\chi_{\vec{G}}(e)]$ is isomorphic to some graph of $\AllConstr(K_2, C)$ and by the admissibility of $C$ for $\mathcal{F}$, we have $\AllConstr(K_2, C) \subseteq \free_\subseteq(\mathcal{F}) \subseteq \free_\prec(\mathcal{F})$, a contradiction to the existence of $\pi$.
    Thus $G_k \in \free_\prec(\mathcal{F})$.

    $(\Leftarrow)\colon$ The claim follows directly from \cref{lemma:splitting_sequence_to_vertex_cover}.
\end{proof}
}

Having shown both directions of the correctness proof, one generically, the other only for admissible splitting configurations, it is easy to establish the following \NP-hardness result:
\begin{restatable}[\appsymb]{lemma}{lemmanpcompleteadmissible}\label{lemma:np_complete_admissible}
    Let $\mathcal{F}$ be a family of graphs
    of bounded diameter and let $C$ be a splitting configuration admissible for $\mathcal{F}$.
    Then, \textsc{$\free_\prec(\mathcal{F})$-VS} and \textsc{$\free_\subseteq(\mathcal{F})$-VS} are \NP-hard.
\end{restatable}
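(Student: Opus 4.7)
The plan is to give a polynomial-time many-one reduction from \textsc{$L$-Subdivided Cubic Vertex Cover} to each of the two target problems, where $L \coloneqq 2 \cdot \max_{F \in \mathcal{F}} \diam(F)$. Since $\mathcal{F}$ has bounded diameter by hypothesis, $L$ is a well-defined fixed constant, and the source problem is \NP-hard by (the appendix version of) \cref{lemma:two_l_subdivided_cubic_vc_np_hard}. The reduction itself is dictated by the shape of \cref{lemma:reduction_admissible}: given an input $(G, k)$ of \textsc{$L$-Subdivided Cubic Vertex Cover}, fix any orientation $\vec{G}$ of $G$ (for instance via an arbitrary total order on $V(G)$) and output the pair $(\Constr(\vec{G}, C, \varnothing), k)$ as an instance of \textsc{$\free_\prec(\mathcal{F})$-VS}, respectively \textsc{$\free_\subseteq(\mathcal{F})$-VS}.

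Next I would verify that the reduction is computable in polynomial time. Here the crucial observation is that $C = (H, a, A_1, A_2, b, B_1, B_2)$ is part of the fixed specification of the problem family (it depends only on $\mathcal{F}$, not on the input), so in particular $H$ is of constant size; this means each of the edge gadgets $H_e$ appearing in \cref{definition:constr} is a graph of constant size, so $\Constr(\vec{G}, C, \varnothing)$ has $O(|V(G)| + |E(G)|)$ vertices and edges and can be built in time polynomial in $|V(G)|$ by following the explicit recipe given in the construction (creating one copy of $H$ per arc of $\vec{G}$ and identifying attachment points according to $\beta(\cdot,\cdot)$). The parameter $k$ is simply copied over, so the output has size polynomial in the input.

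Finally, correctness is essentially immediate from \cref{lemma:reduction_admissible}: that lemma states exactly that $(G, k)$ is a positive instance of \textsc{$L$-Subdivided Cubic Vertex Cover} if and only if $(\Constr(\vec{G}, C, \varnothing), k)$ is a positive instance of \textsc{$\free_\prec(\mathcal{F})$-VS} (and, as noted there, the analogous statement for $\free_\subseteq$ follows by the same argument). Since we have a polynomial-time reduction from an \NP-hard problem, both \textsc{$\free_\prec(\mathcal{F})$-VS} and \textsc{$\free_\subseteq(\mathcal{F})$-VS} are \NP-hard, as claimed.

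There is no real obstacle in this step: all the conceptual work has been done in \cref{lemma:reduction_admissible} and \cref{lemma:splitting_sequence_to_vertex_cover}, which together package the construction, the admissibility conditions, and the vertex-cover extraction. The only mild subtlety worth flagging is that the constant $L$ must be an even integer so that the source problem is actually an instance of \textsc{$2\ell$-Subdivided Cubic Vertex Cover} for some $\ell \in \mathbb{N}$; this holds by the definition $L = 2 \cdot \max_{F \in \mathcal{F}} \diam(F)$, and in any case \textsc{$2\ell'$-Subdivided Cubic Vertex Cover} is \NP-hard for every $\ell' \in \mathbb{N}$, so any bounded even integer would do.
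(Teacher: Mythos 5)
Your proposal is correct and matches the paper's own proof: both reduce from \textsc{$L$-Subdivided Cubic Vertex Cover} with $L = 2 \cdot \max_{F \in \mathcal{F}} \diam(F)$ by outputting $(\Constr(\vec{G}, C, \varnothing), k)$ and invoking \cref{lemma:reduction_admissible} for correctness. Your additional remarks on polynomial-time computability (constant-size gadgets) and the evenness of $L$ are sound elaborations of what the paper leaves implicit.
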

\appendixproof{lemma:np_complete_admissible}{
\begin{proof}
    Without loss of generality, we only consider \textsc{$\free_\prec(\mathcal{F})$-VS}.
    To establish \NP-hardness, we 
    construct a polynomial-time many-one reduction from \textsc{$L$-Subdivided Cubic Vertex Cover} to \textsc{$\free_\prec(\mathcal{F})$-VS}:
    Let $(G, k)$ be an instance of \textsc{$L$-Subdivided Cubic Vertex Cover}.
    Using \cref{lemma:reduction_admissible}, we obtain an instance of $\textsc{$\free_\prec(\mathcal{F})$-VS}$ in polynomial time 
    that is equivalent to the original instance $(G, k)$ of \textsc{$L$-Subdivided Cubic Vertex Cover}.
    Thus, since \textsc{$L$-Subdivided Cubic Vertex Cover} is \NP-hard, so is \textsc{$\free_\prec(\mathcal{F})$-VS}.
\end{proof}
}

\subsection{Biconnected Forbidden Subgraphs}
\label{section:biconnected_fisgs}

In \cref{lemma:np_complete_admissible}, we established a method for obtaining \NP-hardness results for vertex splitting problems,
provided an appropriate admissible splitting configuration existed.
This naturally prompts the question of how to find such splitting configurations.
This subsection address this question for the case of biconnected forbidden (induced) subgraphs.
Before proceeding, we need to introduce one last piece of notation: the \emph{width} of a splitting configuration, denoted by $\width(\cdot)$, represents the minimum distance between the two descendants of a split endpoint, $a$ and $b$, respectively, after $H$ has been split according to the splitting configuration.
\begin{definition}\label{definition:width_splitting_configuration}
    Let $C = (H, a, A_1, A_2, b, B_1, B_2)$ be a splitting configuration, graph
    $H_1 \coloneqq \Split(H, a, A_1, A_2, a_1, a_2)$, graph
    $H_2 \coloneqq \Split(H, b, B_1, B_2, b_1, b_2)$, where $a_1, a_2, b_1$ and $b_2$ are fresh vertices.
    Then, we define $\width(C) \coloneqq \min \set{d_{H_1}(a_1, a_2), d_{H_2}(b_1, b_2)}$.
\end{definition}

\looseness=-1
\iflong
  In the next two lemmas we show that, given a splitting configuration of a certain width that is not separating (for some $\mathcal{F}$), we can derive a new splitting configuration of increased width (\cref{definition:width_splitting_configuration}).
\else
  We now show that, given a splitting configuration of a certain width that is not separating (for some $\mathcal{F}$), we can derive a new splitting configuration of increased width (\cref{definition:width_splitting_configuration}).
\fi
Since we cannot apply this process ad infinitum (when restricted to $\mathcal{F}$ of bounded circumference), we will ultimately arrive at a separating splitting configuration (\cref{lemma:existence_splitting_config_for_biconnected}).

\begin{lemma}\label{lemma:splitting_config_of_greater_width}
    Let $\mathcal{F}$ be a family of biconnected graphs
    of bounded diameter
    and let $C^0 = (H^0, a^0, A_1^0, A_2^0, b^0, B_1^0, B_2^0)$ be a disjoint
    splitting configuration of finite width such that there is $H^0 \in \mathcal{F}$ that is not separating for $\mathcal{F}$.
    Then, there exists a disjoint splitting configuration
    $C^1 = (H^1, a^1, A_1^1, A_2^1, b^1, B_1^1, B_2^1)$ with $H^1 \in \mathcal{F}$
    of finite width satisfying $\width(C^1) > \width(C^0)$.
\end{lemma}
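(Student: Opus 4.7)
The plan is to extract the new configuration from the witness of non-separation. Since $C^0$ is not separating for $\mathcal{F}$, there exist an $L$-subdivided cubic graph $G$ with $L = 2 \max_{F \in \mathcal{F}} \diam(F)$, an orientation $\vec{G}$, a vertex cover $S$ of $G$, some $H^1 \in \mathcal{F}$, and an embedding $\pi \in \emb_\subseteq(H^1, \Constr(\vec{G}, C^0, S))$ whose range is not contained in any single $\chi_{\Constr(\vec{G}, C^0, S)}(e)$. I would take $H^1$ as the base graph of $C^1$ and read off the splits from $\pi$ at the split ends of the construction.

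To locate the attachment points, let $M \subseteq V(H^1)$ be the set of vertices that $\pi$ maps to an end of some $u \in S$, and call $v \in M$ \emph{spread} if $\pi(N_{H^1}(v))$ touches at least two distinct edge gadgets. Because each edge of the construction lies within a single edge gadget, if no vertex of $M$ were spread then every vertex of $H^1$ could be consistently assigned to a single edge gadget so as to respect all edges of $H^1$, forcing $H^1$ into one gadget and contradicting the spanning hypothesis. Biconnectivity upgrades this to at least two spread vertices: if only $v^* \in M$ were spread, the same argument applied to the connected graph $H^1 - v^*$ would place $H^1 - v^*$ into one gadget, and hence all $H^1$-neighbors of $v^*$ into that gadget, contradicting the spread of $v^*$. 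Choose distinct spread $a^1, b^1 \in M$. For $a^1$, let $H_{e_1}$ be an edge gadget containing some but not all images of $a^1$'s $H^1$-neighbors, set $A^1_1 \coloneqq \set{u \in N_{H^1}(a^1) \mid \pi(u) \in V(H_{e_1})}$ and $A^1_2 \coloneqq N_{H^1}(a^1) \setminus A^1_1$, and define $B^1_1, B^1_2$ analogously. Both sides are nonempty by spreadness, the split is disjoint by construction, and finiteness of the width follows because a disjoint split of a biconnected graph is connected.

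The main obstacle is verifying $\width(C^1) > \width(C^0)$. Since the split at $a^1$ is disjoint, $d_{H^1_1}(a^1_1, a^1_2) = 2 + d_{H^1 - a^1}(A^1_1, A^1_2)$, so it suffices to prove $d_{H^1 - a^1}(A^1_1, A^1_2) \geq 2\width(C^0) - 2$. A shortest realizing path in $H^1 - a^1$ transfers via $\pi$ into a walk in the construction from a $\pi(a^1)$-neighbor in $H_{e_1}$ to a $\pi(a^1)$-neighbor in some other gadget that avoids $\pi(a^1)$; such a walk must leave $H_{e_1}$. The only local exit is through the sibling end at the skeleton vertex $v_0 \in S$ where $\pi(a^1)$ sits, and because every gadget incident to $v_0$ is a copy of $H^0$ with the $v_0$-end split exactly as in $C^0$, the two ends at $v_0$ are at distance $\geq \width(C^0)$ within every such gadget. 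This forces $\geq \width(C^0) - 1$ steps from the starting neighbor to the sibling end in $H_{e_1}$ and, symmetrically, another $\geq \width(C^0) - 1$ steps from the sibling end to the finishing neighbor in the destination gadget. Any non-local alternative must leave the star of $v_0$ through the $L$-subdivided skeleton and is no shorter, since $L \geq 2\diam(H^0) \geq 2\width(C^0) - 2$. Adding $2$ yields $\width(C^1) \geq 2\width(C^0) > \width(C^0)$ because $\width(C^0) \geq 2$, and the analogous bound for $b^1$ is obtained by the symmetric argument.
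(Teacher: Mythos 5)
Your overall strategy is the same as the paper's: read off $C^1$ from a witness of non-separation, and use the fact that inside each gadget the two descendants of a $C^0$-split are at distance at least $\width(C^0)$ to double the width. However, two steps do not hold as written. First, your existence argument for two spread vertices is incomplete because $M$ only contains preimages of ends at vertices of the vertex cover $S$. Vertices of $H^1$ mapped to \emph{unsplit} ends (at skeleton vertices outside $S$) are also shared among several gadgets, so ``no vertex of $M$ is spread'' does not let you assign every vertex of $H^1$ consistently to a single gadget: the embedding could cross between gadgets at an unsplit end without any vertex of $M$ being spread. To exclude this you need the observation that an unsplit end is a cut vertex of the union of its incident gadgets, so a biconnected $H^1$ crossing there would have to ``go around'' the construction, which is impossible by the diameter bound; this is precisely the separator argument the paper carries out (showing $\pi^{-1}(\chi_{G^*}(v))$ separates $H^1$ and, by biconnectivity, has size exactly $2$), and it is missing from your proof. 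The same reasoning is also what guarantees that \emph{both} ends at the relevant split vertex lie in the image of $\pi$, which is what actually produces the two attachment vertices.

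Second, your dismissal of the non-local (``around'') route relies on the chain $L \ge 2\diam(H^0) \ge 2\width(C^0)-2$, i.e.\ on $\diam(H^0) \ge \width(C^0)-1$. This inequality is false in general: for $H^0 = C_n$ split at a vertex into its two neighbours, $\width(C^0) = n-1$ while $\diam(C_n) = \lfloor n/2 \rfloor$, so the width can be nearly twice the diameter. The correct way to rule out the around route (needed both in the width bound and, as above, in the existence argument) is not that it is ``no shorter'' but that it is impossible: a walk that leaves the star of $v_0$ must fully traverse a subdivided skeleton edge, forcing two vertices of $H^1$ to be at distance at least $L = 2\max_{F\in\mathcal{F}}\diam(F) > \diam(H^1)$ in $H^1$, a contradiction. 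With these two repairs your argument becomes essentially the paper's proof; your choice to split $a^1$ by gadget membership rather than by the components of $H^1 - \set{a^1,b^1}$, and to bound the two split widths independently, is a harmless variation that yields the same $\width(C^1) \ge 2\width(C^0)$ bound.
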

\begin{proof}
    As $C^0$ is not separating for $\mathcal{F}$, there is a 
    graph $G$ that is an $L$-subdivision of some cubic graph $G'$,
    $G^* \in \AllConstr(G, C^0)$,
    $F \in \mathcal{F}$ with $\pi \in  \emb_\subseteq(F, G^*)$, as well as 
    distinct $u, v, w \in V(G)$ with $L \coloneqq 2 \cdot \max_{F \in \mathcal{F}} \diam(F)$, 
    $\range(\pi) \cap (\chi_{G^*}(uv) \setminus \chi_{G^*}(v)) \neq \varnothing$, and
    $\range(\pi) \cap (\chi_{G^*}(vw) \setminus \chi_{G^*}(v)) \neq \varnothing$.

    In other words, $G^*$ is a graph constructed according to \cref{definition:constr}
    using a highly subdivided cubic graph ($G$) as basis, where its edges were replaced by some forbidden graph $H \in \mathcal{F}$, and was split at the ``attachment points'' of edge gadgets according to some vertex cover of $G'$ and the splitting configuration $C^0$.
    For this graph, we are provided a witness certifying that the splitting configuration $C^0$ is not separating with respect to $\mathcal{F}$ in the form of
    an embedding $\pi$ of $F \in \mathcal{F}$ into $G^*$, where the embedding of $F$ is not constrained to a single edge gadget, but rather uses at least vertices 
    of two neighboring edge gadgets (of edges $uv, vw \in E(G')$),  $\chi_{G^*}(uv)$ and $\chi_{G^*}(vw)$, but not those in the shared intersection $\chi_{G^*}(v)$; the embedding thus ``goes across'' two edge gadgets.
    Notice that $\pi(\cdot)^{-1}$ refers to vertices of $F$, whereas $\pi(\cdot)$ refers to vertices of $G^*$.
    See \cref{figure:p_star} for an illustration.

    \looseness=-1
    We now show that $\pi^{-1}(\chi_{G^*}(v))$ is a vertex separator of $F$, that is, if these vertices
    are deleted from $F$, the resulting graph is disconnected.
    The argument to derive this at its core works by observing that $F$ can be embedded into $G^*$ in a particular way (as witnessed by $\pi$), and since $G^*$ has certain structural features, these carry over to $F$, leading to a contradiction.
    
    \begin{figure}
        \begin{center}
            \includegraphics[]{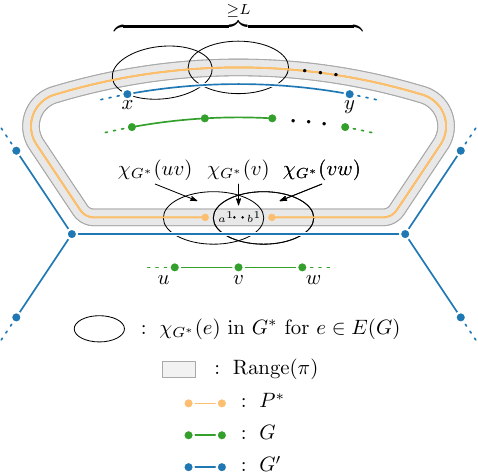}
        \end{center}
        \caption[Illustration accompanying \cref{lemma:splitting_config_of_greater_width}.]{Illustration accompanying \cref{lemma:splitting_config_of_greater_width}. The black ovals denote the edge gadgets in $G^*$, graph $G$ is displayed in green, and the underlying graph $G'$ is rendered in blue. The gray area shows the range of a hypothetical embedding $\pi$ of $F$ in $G^*$, that has to ``go around'' in the construction, since it cannot span across the intersection of edge gadgets $\chi_{G^*}(v)$.
        Additionally, in orange, the path $P^*$ traversing the embedding is shown.}
        \label{figure:p_star}
    \end{figure}

    Suppose that $\pi^{-1}(\chi_{G^*}(v))$ is not a vertex separator of $F$. Then, all neighbors of $\pi^{-1}(\chi_{G^*}(v))$ in $V(F) \setminus \pi^{-1}(\chi_{G^*}(v))$
    are pairwise connected via some path in $F$ not using any of $\pi^{-1}(\chi_{G^*}(v))$ each.
    Select any one of these paths and call it $P$.
    Without loss of generality, $P$ starts with a vertex of $(\chi_{G^*}(uv) \setminus \chi_{G^*}(v))$ and ends in a vertex of $(\chi_{G^*}(vw) \setminus \chi_{G^*}(v))$.
    Due to the existence of $\pi$, we know that $P^* \coloneqq \pi(P)$ 
    gives an isomorphic path in $G^*$.
    Since $P$ does not use vertices of $\pi^{-1}(\chi_{G^*}(v))$, $P^*$ does not use vertices of $\chi_{G^*}(v)$.

    \looseness=-1
    By construction of $G^*$, all paths connecting the first and last vertex of $P^*$ in $G^*$ that are constrained to the union of the vertex sets of both edge gadgets, that is to $\chi_{G^*}(uv) \cup \chi_{G^*}(vw)$, must 
    traverse the intersection of both edge gadgets, that is, $\chi_{G^*}(uv) \cap \chi_{G^*}(vw) = \chi_{G^*}(v)$.
    But $P^*$ does not intersect with $\chi_{G^*}(v)$, hence it cannot be one of these paths.
    Therefore, $P^*$ must traverse $G^*$ using edge gadgets the ``other way around'', that is, not use the direct connection.

    Observe that $P^*$ induces a path corresponding to the edge gadgets it traverses in $G$, which in turn induces a path of length at least three in the underlying cubic graph $G'$.
    At least one of these edges in $G'$, call it $xy$, must be fully traversed by $P^*$ in the corresponding part of~$G^*$.
    Thus, there are then $x', y' \in V(P^*)$ where $x' \in \chi_{G^*}(x) \cap V(P^*)$ and $y' \in \chi_{G^*}(y) \cap V(P^*)$.
    The distance between $x'$ and $y'$ in $G^*$ is at least $L = 2 \cdot \max_{F \in \mathcal{F}} \diam(F)$, the number of times $xy$ is subdivided in $G$.
    But then $\pi^{-1}(x')$ and $\pi^{-1}(y')$, vertices of $F$, have a distance of at least $L$ in $F$ as well, a contradiction to the choice of $L$.
    Thus, $\pi^{-1}(\chi_{G^*}(v))$ is a vertex separator of $F$.
    Furthermore, since $|\chi_{G^*}(v)| \le 2$ and $F$ is biconnected, indeed $|\chi_{G^*}(v)| = 2$;
    we shall denote the two corresponding elements by $a^1$ and $b^1$.
    
    Let $D$ be any connected component of $F \setminus \set{a^1, b^1}$.
    Suppose there is only one edge of the form $dv$ with $d \in V(D)$ and $v \in \set{a^1, b^1}$ in $E(F)$.
    Then, $F$ could not be biconnected, for the removal of a single vertex (either $a^1$ or $b^1$) would suffice to render $F$ disconnected.
    Thus, there is a path $P^1$ from $a^1$ to $b^1$ in $F$ with $P^1 \subseteq V(D) \cup \set{a^1, b^1}$. Since $G^*$ was constructed with respect to the splitting configuration $C^0$, we notice that $|P^1| \ge \width(C^0)$.

    We will carry on with exploiting the structure of $F$ to obtain a splitting configuration satisfying the conditions of this lemma. 
    Let $X$ be the vertex set of some distinct connected component of $F \setminus \set{a^1, b^1}$,
    and let $Y \coloneqq V(F) \setminus (\set{a^1, b^1} \cup X)$. We notice that $a^1b^1 \not\in E(F)$, since $\pi(a^1)$ and $\pi(b^1)$ are descendants of the same split in the construction of $G^*$.
    Furthermore, $X$ and $Y$ form a partition of $V(F) \setminus \set{a^1, b^1}$.
    Thus, we may define a new disjoint splitting configuration $C^1 \coloneqq (F, a^1, A_1^1, A_2^1, b^1, B_1^1, B_2^1)$ with
    $A_1^1 \coloneqq N_F(a^1) \cap X$,
    $A_2^1 \coloneqq N_F(a^1) \cap Y$,
    $B_1^1 \coloneqq N_F(b^1) \cap X$, and
    $B_2^1 \coloneqq N_F(b^1) \cap Y$.

    Remember that, by \cref{definition:width_splitting_configuration}, $\width(C^1) = \min \set{d_{F_1}(a_1^1, a_2^1), d_{F_2}(b_1^1, b_2^1)}$,
    where
    $F_1 \coloneqq \Split(F, a^1, A_1^1, A_2^1, a_1^1, a_2^1)$ and
    $F_2 \coloneqq \Split(F, b^1, B_1^1, B_2^1, b_1^1, b_2^1)$,
    such that $a_1^1, a_2^1, b_1^1$, and $b_2^1$ are fresh vertices.
    Consider $F_1$:
    By the argument above, we deduce that there is a shortest path through the
    descendant vertices of $X$
    from $a_1^1$ to $b^1$ in $F_1$.
    Furthermore, since $F \setminus \set{a^1, b^1}$ is comprised of at least two connected components,
    there also exists a shortest path through one of them (using descendant vertices of $Y$) from $b^1$ to $a_2^1$.
    Each of the considered shortest paths must have length at least $\width(C^0)$, as $G^*$ was constructed with respect to the splitting configuration $C^0$.
    Also, note that all paths connecting $a_1^1$ and $a_2^1$ in $F$ must traverse $b^1$.
    Thus, combining these paths yields that $d_{F_1}(a_1^1, a_2^1) \ge 2 \width(C^0)$.
    \iflong{}See \cref{figure:new_splitting_conf} for an illustration.\fi

    \iflong
        \begin{figure}
            \begin{center}
                \includegraphics[]{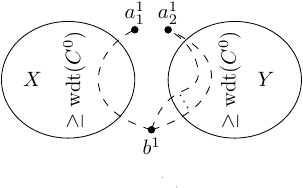}
            \end{center}
            \caption[Illustration accompanying \cref{lemma:splitting_config_of_greater_width}.]{Illustration accompanying \cref{lemma:splitting_config_of_greater_width}, displaying the increased width of the derived splitting configuration.}
            \label{figure:new_splitting_conf}
        \end{figure}
    \fi

    We proceed symmetrically for $F_2$.
    Hence, in total, we obtain that $\width(C^1) > \width(C^0)$, and conclude that $C^1$ is a splitting configuration satisfying the required conditions.
\end{proof}

With this, we have essentially arrived at an algorithm to find separating disjoint splitting configurations.
The proof proceeds inductively: Given any disjoint splitting configuration that is not separating, we will be able to leverage \cref{lemma:splitting_config_of_greater_width} above to find a new disjoint splitting configuration of increased width.
Since this width cannot grow without bound, at some point, a suitable configuration will be found.
\begin{restatable}[\appsymb]{lemma}{lemmaexistencesplittingconfigforbiconnected}\label{lemma:existence_splitting_config_for_biconnected}
    Let $\mathcal{F}$ be a family of biconnected graphs
    of bounded circumference and bounded diameter.
    Then, there exists a separating
    disjoint splitting configuration for $\mathcal{F}$.
\end{restatable}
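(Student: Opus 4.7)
My plan is to prove the statement by induction, using \cref{lemma:splitting_config_of_greater_width} as the driving mechanism. Starting from any initial disjoint splitting configuration based on $\mathcal{F}$, whenever it fails to be separating I apply that lemma to obtain a strictly wider one. Since a uniform upper bound on the width holds, this process must halt after finitely many steps, and the terminal configuration is necessarily separating.

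For the base case, assuming $\mathcal{F}\neq\varnothing$ (otherwise the claim is vacuous), I pick any $H\in\mathcal{F}$ and any two distinct vertices $a,b\in V(H)$. Biconnectedness of $H$ implies $|V(H)|\geq 3$ and that every vertex has degree at least two, so each of $N_H(a)$ and $N_H(b)$ can be partitioned into two non-empty sets, yielding a disjoint splitting configuration $C^0$ based on $\mathcal{F}$. Moreover, $H-a$ and $H-b$ are connected, so the two descendants of $a$ (respectively $b$) remain connected after splitting, forcing $\width(C^0)<\infty$.

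The heart of the termination argument is the bound $\width(C)\leq c$, where $c$ denotes the supremum of circumferences over $\mathcal{F}$, valid for every disjoint splitting configuration $C=(H,a,A_1,A_2,b,B_1,B_2)$ based on $\mathcal{F}$. To see this, any shortest path from $a_1$ to $a_2$ in $\Split(H,a,A_1,A_2,a_1,a_2)$ has the form $a_1,x_1,\ldots,x_{k-1},a_2$ with all $x_i$ pairwise distinct, $x_1\in A_1$, and $x_{k-1}\in A_2$. Since the edges among the $x_i$ and between the $x_i$ and $a$ are inherited from $H$, this yields a simple cycle $a,x_1,\ldots,x_{k-1},a$ of length $k$ in $H$, so $k$ is at most the circumference of $H$; the symmetric argument applies at $b$. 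Hence the sequence $C^0,C^1,C^2,\ldots$ produced by iterated application of \cref{lemma:splitting_config_of_greater_width} has strictly increasing widths bounded above by $c$, so it terminates after at most $c$ steps; the final configuration is separating by construction.

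The real work has already been done by \cref{lemma:splitting_config_of_greater_width}; what must be supplied here is the width bound via circumference together with a valid starting point. Both ingredients rely crucially on biconnectedness (for the existence of disjoint partitions of the neighborhoods and for the connectedness of $H-a$), while the circumference bound itself is exactly where the hypothesis that $\mathcal{F}$ has bounded circumference is invoked.
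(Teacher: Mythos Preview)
Your proof is correct and follows essentially the same approach as the paper: start from an arbitrary disjoint splitting configuration based on $\mathcal{F}$ (whose existence uses biconnectedness), iterate \cref{lemma:splitting_config_of_greater_width}, and terminate via the uniform circumference bound on the width. Your treatment is in fact slightly more careful than the paper's, as you explicitly verify that $\width(C^0)<\infty$ (needed to invoke \cref{lemma:splitting_config_of_greater_width}) and spell out the cycle construction underlying the bound $\width(C)\le c$, which the paper states only as the observation that $C_{\width(C^i)}\subseteq H^i$.
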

\appendixproof{lemma:existence_splitting_config_for_biconnected}{
\ifshort\lemmaexistencesplittingconfigforbiconnected*\fi
\begin{proof}
    Select a disjoint splitting configuration $C^0 = (H^0, a^0, A_1^0, A_2^0, b^0, B_1^0, B_2^0)$
    with $H_0 \in \mathcal{F}$, which is possible since biconnected graphs have order, as well as minimum-degree, of at least two.
    We will now consider a sequence of disjoint splitting configurations that starts with $C^0$,
    which we will construct as follows: Let $C^i$ be the last member of the sequence so far.
    If $C^i$ is separating for $\mathcal{F}$, the sequence is complete. Otherwise, we obtain
    $C^{i+1}$ by applying \cref{lemma:splitting_config_of_greater_width} to $C^i$.

    Towards a contradiction, assume this sequence is infinite.
    Let $C\!\uparrow$ be the largest $k$ such that there is $F \in \mathcal{F}$ with 
    $\emb_\subseteq(C_k, F) \neq \varnothing$.
    Observe that $\emb_\subseteq(C_{\width(C^i)}, H^i) \neq \varnothing$ for all $C^i$.
    Thus, $\width(C^i) \le C\!\uparrow$ for all $C^i$.
    But $\width(C^i)$ grows without bound for increasing $i$, a contradiction.
    Therefore, the sequence is finite and its last element is a separating disjoint splitting configuration for $\mathcal{F}$.
\end{proof}
}

With a means of finding separating splitting configurations, we are left to ensure they are also intra-edge embedding-free and therefore admissible.
To that end, we impose the restriction on $\mathcal{F}$ that when any $F \in \mathcal{F}$ is destroyed by one or two disjoint splits, the resulting graph is free of forbidden (induced) subgraphs.
For example, each finite set of cycles satisfies this condition.
Finally, it remains to apply \cref{lemma:np_complete_admissible} to obtain \NP-hardness for the vertex splitting problem in question.
\begin{theorem}\label{theorem:admissible_biconnected_np_complete}
    Let $\mathcal{F}$ be a family of biconnected graphs
    of bounded circumference and bounded diameter
    such that for every $F \in \mathcal{F}$ it holds that, if 
    $F'$ is obtained from $F$ by performing at least one and at most two
    non-trivial disjoint splits, then $F' \in \free_\subseteq(\mathcal{F})$.
    Then, \textsc{$\free_\subseteq(\mathcal{F})$-VS} and \textsc{$\free_\prec(\mathcal{F})$-VS} are \NP-complete.
\end{theorem}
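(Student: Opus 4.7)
The plan is to reduce this to a direct application of \cref{lemma:np_complete_admissible}; all I need is to exhibit a splitting configuration for $\mathcal{F}$ that is both separating and intra-edge embedding-free. The separating half will come for free from the preceding machinery: since $\mathcal{F}$ consists of biconnected graphs of bounded circumference and bounded diameter, \cref{lemma:existence_splitting_config_for_biconnected} yields a disjoint splitting configuration $C = (H, a, A_1, A_2, b, B_1, B_2)$ based on some $H \in \mathcal{F}$ that is separating for $\mathcal{F}$.

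The remaining task will be showing intra-edge embedding-freeness, that is, $\AllConstr(K_2, C) \subseteq \free_\subseteq(\mathcal{F})$, and this is precisely where the theorem's hypothesis enters. I would unwind \cref{definition:constr} for $G = K_2$: $K_2$ has two orientations, and its vertex covers are exactly $\{u\}$, $\{v\}$, and $\{u, v\}$. Each choice gives a single-edge-gadget construction in which either only $a$, only $b$, or both $a$ and $b$ are split according to $C$. Because $C$ is disjoint and all of $A_1, A_2, B_1, B_2$ are nonempty by the definition of a splitting configuration, every graph in $\AllConstr(K_2, C)$ will arise from $H \in \mathcal{F}$ via exactly one or two non-trivial disjoint splits. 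The hypothesis on $\mathcal{F}$ then forces each such graph to lie in $\free_\subseteq(\mathcal{F})$, giving intra-edge embedding-freeness.

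With both properties in hand, $C$ will be admissible for $\mathcal{F}$, so \cref{lemma:np_complete_admissible} yields \NP-hardness of both \textsc{$\free_\subseteq(\mathcal{F})$-VS} and \textsc{$\free_\prec(\mathcal{F})$-VS}. For membership in \NP I would use the standard argument: guess a splitting sequence of length at most $k$ (of polynomial size), build the resulting graph, and verify it lies in $\free_\subseteq(\mathcal{F})$ (respectively $\free_\prec(\mathcal{F})$), relying on the implicit assumption that the target property admits a polynomial-time recognition algorithm.

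I expect the main difficulty to be almost entirely bookkeeping: the heavy lifting lives in \cref{lemma:existence_splitting_config_for_biconnected} and \cref{lemma:np_complete_admissible}, so the proof essentially consists of carefully unpacking $\AllConstr(K_2, C)$ and invoking the hypothesis. The genuine insight behind the statement is that its two structural assumptions cleanly decompose the correctness of the reduction: bounded diameter together with biconnectivity and bounded circumference delivers separation, while the one-or-two-disjoint-splits condition delivers intra-edge embedding-freeness.
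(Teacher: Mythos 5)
Your proposal is correct and follows exactly the paper's argument: obtain a separating disjoint splitting configuration from \cref{lemma:existence_splitting_config_for_biconnected}, use the one-or-two-non-trivial-disjoint-splits hypothesis to get $\AllConstr(K_2, C) \subseteq \free_\subseteq(\mathcal{F})$ and hence admissibility, and conclude via \cref{lemma:np_complete_admissible}. Your unpacking of $\AllConstr(K_2, C)$ (the three vertex covers of $K_2$, non-triviality from the nonemptiness of $A_1, A_2, B_1, B_2$, disjointness from $C$ being disjoint) is merely a more explicit version of the step the paper states in one sentence.
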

\begin{proof} 
    We apply \cref{lemma:existence_splitting_config_for_biconnected} to obtain a separating disjoint splitting configuration $C$ 
    for $\mathcal{F}$.
    The requirement posed on $\mathcal{F}$ allows us to deduce that $\AllConstr(K_2, C) \subseteq \free_\subseteq(\mathcal{F})$, hence $C$ is intra-edge embedding-free for $\mathcal{F}$.
    Thus, $C$ is admissible for $\mathcal{F}$ and by \cref{lemma:np_complete_admissible},
    we conclude that both \textsc{$\free_\subseteq(\mathcal{F})$-VS}, as well as \textsc{$\free_\prec(\mathcal{F})$-VS}, are \NP-hard.
\end{proof}

If there is just a single biconnected forbidden (induced) subgraph, the situation simplifies:

\begin{theorem}\label{theorem:single_biconnected_np_complete}
    Let $F$ be a biconnected graph. Then, \textsc{$\free_\subseteq(\set{F})$-VS}
    and \textsc{$\free_\prec(\set{F})$-VS} are \NP-complete.
\end{theorem}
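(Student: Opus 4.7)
The plan is to derive the theorem directly from \cref{theorem:admissible_biconnected_np_complete} with $\mathcal{F} = \set{F}$. Membership in \NP{} is standard (a splitting sequence of length at most $k$ is a polynomial-size certificate, and testing $\free_\prec(\set{F})$-membership of the resulting graph is polynomial since $F$ is fixed), so the work lies entirely in verifying the hypotheses of \cref{theorem:admissible_biconnected_np_complete}. Being a finite family of biconnected graphs, $\set{F}$ trivially has bounded circumference and bounded diameter, so the only condition that needs verification is that every graph $F'$ obtained from $F$ by one or two non-trivial disjoint splits satisfies $F \not\subseteq F'$.

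The key step is to prove this latter claim via a short edge-counting argument. A non-trivial disjoint split adds exactly one vertex and, because $V_1 \cap V_2 = \varnothing$ in such a split, preserves the total edge count; hence after $i \in \set{1,2}$ such splits we have $|V(F')| = |V(F)| + i$ while $|E(F')| = |E(F)|$. Since $F$ is biconnected on at least three vertices, every vertex of $F$ has degree at least two, and a quick case check shows that after any non-trivial disjoint split both descendants retain at least one neighbor (by non-triviality) and every other vertex either keeps its entire neighborhood or replaces the split vertex by at least one of its descendants; thus every vertex of $F'$ has degree at least one, and likewise after two such splits. Suppose now, towards a contradiction, that an injection $\pi \colon V(F) \to V(F')$ is a subgraph embedding of $F$ in $F'$. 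Then $\pi(E(F))$ consists of $|E(F)| = |E(F')|$ distinct edges of $F'$, forcing $\pi(E(F)) = E(F')$ and, in particular, forcing every edge of $F'$ to be incident with $\pi(V(F))$. The $i \geq 1$ vertices of $V(F') \setminus \pi(V(F))$ would then be isolated in $F'$, contradicting the minimum-degree bound just established. Hence $F \not\subseteq F'$, and \cref{theorem:admissible_biconnected_np_complete} yields the desired \NP-hardness of both $\free_\subseteq(\set{F})$\nobreakdash-VS and $\free_\prec(\set{F})$\nobreakdash-VS.

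The only delicate point is the corner case $F = K_2$: no vertex of $K_2$ has degree two, so $K_2$ admits no non-trivial disjoint split and \cref{lemma:existence_splitting_config_for_biconnected} cannot even get started. This is resolved by adopting the standard convention that biconnected graphs have at least three vertices (equivalently, vertex connectivity at least two), which is implicitly assumed in the theorem statement; the main obstacle in the whole argument is really just this minor bookkeeping, since once the edge-count observation is in hand the hypothesis of the general framework reduces to a single line.
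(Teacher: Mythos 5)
Your proposal is correct and follows essentially the same route as the paper: reduce to \cref{theorem:admissible_biconnected_np_complete} and verify its hypothesis by an edge-counting argument showing that a subgraph embedding of $F$ into $F'$ would force an isolated vertex in $F'$, which non-trivial disjoint splits of a biconnected graph cannot produce. The paper phrases this by deleting a vertex outside the embedding's range rather than observing $\pi(E(F)) = E(F')$ directly, but the underlying contradiction is identical.
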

\begin{proof}
  \looseness=-1
    We first show that for each $F'$ obtained from $F$ by performing
    at least one, but at most two non-trivial disjoint splits, it holds
    that $F' \in \free_\subseteq{\set{F}}$.
    Suppose there is $\pi \in \emb_\subseteq(F, F')$. %
    Then, because $\N{F'} > \N{F}$, we may select $v \in V(F') \setminus \range(\pi)$ such that $F \subseteq F' - v$. %
    Thus $\M{F} \leq \M{F' - v}$.
    On the other hand, by the non-triviality of the splits in question and the biconnectedness of $F$, vertex~$v$ cannot be isolated.
    Thus, $\M{F' - v} < \M{F}$, and together both inequalities imply $\M{F} < \M{F}$. This contradicts $\M{F} = \M{F}$, which is implied by the splits being disjoint.
    Therefore, $\pi$ cannot exist and $F$ possesses the claimed property.
    With this, we can invoke \cref{theorem:admissible_biconnected_np_complete} and conclude that both problems are \NP-hard.
    The \NP-membership is a trivial consequence of the fact that $\set{F}$ is finite.
\end{proof}

 \subsection{Forbidden Subgraphs of Higher Connectedness}
\label{section:higher_connected_fisgs}

As we progress onward from biconnected graphs to higher degrees of connectedness,
the restrictions imposed on the forbidden subgraphs relax.
In the case of triconnectedness, we are able to drop all restrictions on $\mathcal{F}$, except the bounded diameter:
\begin{restatable}[\appsymb]{theorem}{theoremtriconnectedboundedcircumnpcomplete}\label{theorem:triconnected_bounded_circum_np_complete}
    Let $\mathcal{F}$ be a family of triconnected graphs of bounded diameter.
    Then, \textsc{$\free_\subseteq(\mathcal{F})$-VS} and \textsc{$\free_\prec(\mathcal{F})$-VS} are \NP-hard.
\end{restatable}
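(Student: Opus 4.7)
The plan is to construct a disjoint splitting configuration $C$ admissible for $\mathcal{F}$ and then invoke \cref{lemma:np_complete_admissible}. Compared to the biconnected case (\cref{theorem:admissible_biconnected_np_complete}), triconnectedness gives two simplifications: the \emph{separating} condition becomes essentially automatic (so the width-doubling argument of \cref{lemma:splitting_config_of_greater_width} is unnecessary and the bounded-circumference hypothesis can be dropped), while the \emph{intra-edge embedding-free} condition---which was a hypothesis in the biconnected case---can now be established from the structure of $\mathcal{F}$ itself.

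First I would pick $H \in \mathcal{F}$ minimizing $|V(H)|$, and subject to that minimizing $|E(H)|$; both minima exist because the values are non-negative integers. Triconnectedness yields $|V(H)| \geq 4$ and minimum degree at least $3$. Next, pick distinct $a, b \in V(H)$ together with neighbors $n_a \in N_H(a) \setminus \{b\}$ and $n_b \in N_H(b) \setminus \{a\}$; these exist because $\deg_H(a), \deg_H(b) \geq 3$, and ensure $an_a \neq bn_b$ as unordered edges. Define $C \coloneqq (H, a, \{n_a\}, N_H(a) \setminus \{n_a\}, b, \{n_b\}, N_H(b) \setminus \{n_b\})$. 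This is a disjoint splitting configuration based upon $\mathcal{F}$, and the crucial structural feature is that each of the two splits it prescribes creates a pendant vertex ($a_1$ or $b_1$).

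For separating, I would reuse the first part of the proof of \cref{lemma:splitting_config_of_greater_width}: if $C$ were not separating, the bounded-diameter argument there shows that $\pi^{-1}(\chi_{G^*}(v))$ is a vertex separator of some $F \in \mathcal{F}$; but since $|\chi_{G^*}(v)| \leq 2$ while $F$ is triconnected (so every vertex separator has size at least $3$), this is an immediate contradiction. For intra-edge embedding-free, every member of $\AllConstr(K_2, C)$ is obtained from $H$ by performing one or both of the prescribed splits, each introducing a pendant. Since any $F \in \mathcal{F}$ has minimum degree at least $3$, a subgraph embedding of $F$ cannot map any vertex to such a pendant; hence the embedding fits inside the host minus its pendants, which is isomorphic to $H$ with one or two edges removed (distinct edges, thanks to $an_a \neq bn_b$). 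By the choice of $H$, any $F \in \mathcal{F}$ satisfies $|V(F)| \geq |V(H)|$, and since the host has only $|V(H)|$ vertices after removing pendants this forces $|V(F)| = |V(H)|$; then by $E$-minimality, $|E(F)| \geq |E(H)|$, contradicting having to fit into a graph with strictly fewer edges than $H$.

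The main obstacle is establishing intra-edge embedding-free intrinsically rather than assuming it. The pendant-creating splits, combined with the lexicographic-minimal choice of $H$, do this cleanly, and the argument genuinely requires minimum degree at least $3$ so that no embedding can hide the pendants. Assembling the pieces, $C$ is admissible for $\mathcal{F}$ and \cref{lemma:np_complete_admissible} yields \NP-hardness of both \textsc{$\free_\subseteq(\mathcal{F})$-VS} and \textsc{$\free_\prec(\mathcal{F})$-VS}.
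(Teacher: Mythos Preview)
Your proposal is correct and follows essentially the same approach as the paper's proof sketch: pick $H \in \mathcal{F}$ minimal in size, use a disjoint splitting configuration whose splits create pendant vertices, obtain separating from the bounded-diameter separator argument of \cref{lemma:splitting_config_of_greater_width} (a size-$\leq 2$ separator contradicts triconnectedness), and obtain intra-edge embedding-free by observing that no triconnected $F$ can use a pendant, so $F$ would embed into $H$ minus an edge, contradicting minimality. The only cosmetic difference is that the paper minimizes $|E(H)|$ alone---which already suffices for the final contradiction---while you minimize $(|V(H)|,|E(H)|)$ lexicographically.
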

\appendixproof{theorem:triconnected_bounded_circum_np_complete}{
\ifshort\theoremtriconnectedboundedcircumnpcomplete*\fi
\begin{proof}[Proof sketch]
    Let $C$ be a disjoint splitting configuration based on $\mathcal{F}$ such that each split produces a vertex of degree one and $H \in \argmin_{F \in \mathcal{F}} \M{F}$.
    We show that $C$ is admissible for $\mathcal{F}$. By \cref{lemma:np_complete_admissible} our proof is then complete.

    First, we need to establish that $C$ is separating for $\mathcal{F}$.
    We proceed in a similar vein as in \cref{lemma:splitting_config_of_greater_width}.
    Assume $C$ is not separating for $\mathcal{F}$.
    Then, there is an instantiation of our construction $G^*$ 
    such that some $F \in \mathcal{F}$ has an embedding reaching from one edge gadget to a neighboring one.
    We observe that for the same reason as in the biconnected case, with the construction being ``too elongated'', the embedding cannot take the indirect route through $G^*$. 
    But the direct route through the edge gadget ``overlap'' can support only at most two vertex-disjoint paths, contradicting that there have to be at least three by the triconnectedness of $F$.

    Secondly, we need to show that inside a single edge gadget (after it was split), no embeddings of any $F \in \mathcal{F}$ can occur.
    Since the splits produce a vertex of degree one, and a triconnected graph cannot have such vertices, the largest triconnected component remaining has fewer edges then $H$.
    By choice of $H \in \mathcal{F}$, no $F \in \mathcal{F}$ can hence be embedded into the resulting graph.
\end{proof}
}

Finally, we consider the case of 4-connected $\mathcal{F}$.
Notice that all technical restrictions on $\mathcal{F}$ have vanished.

\begin{restatable}[\appsymb]{theorem}{theoremfourconnectednpcomplete}\label{theorem:four_connected_np_complete}
    Let $\mathcal{F}$ be a family of $4$-connected graphs. 
    Then, \textsc{$\free_\subseteq(\mathcal{F})$-VS} 
    and \textsc{$\free_\prec(\mathcal{F})$-VS} are \NP-hard.
\end{restatable}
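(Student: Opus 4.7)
The plan is to invoke \cref{lemma:np_complete_admissible} with a disjoint splitting configuration tailored to exploit the minimum degree $\geq 4$ of every $F \in \mathcal{F}$. First I would pick $H \in \argmin_{F \in \mathcal{F}} \M{F}$, distinct $a, b \in V(H)$, and arbitrary $x_a \in N_H(a) \setminus \{b\}$ and $x_b \in N_H(b) \setminus \{a\}$, all of which exist because $4$-connectedness forces $|N_H(a)|, |N_H(b)| \geq 4$. Then I would define $C = (H, a, N_H(a) \setminus \{x_a\}, \{x_a\}, b, N_H(b) \setminus \{x_b\}, \{x_b\})$, a disjoint splitting configuration in which each prescribed split produces a descendant of degree exactly one.

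Intra-edge embedding-freeness of $C$ will follow from minimality: every graph in $\AllConstr(K_2, C)$ arises from $H$ by at least one such split and therefore contains a vertex of degree one. Since every $F \in \mathcal{F}$ has minimum degree at least $4$, no embedding of $F$ can use such a vertex, so any candidate embedding would live in a graph with at most $\M{H} - 1$ edges, contradicting $\M{F} \geq \M{H}$.

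For separating, I would consider an embedding $\pi \colon F \to G^* = \Constr(\vec{G}, C, S)$ with $\vec{G}$ an orientation of a cubic skeleton and $S$ a vertex cover. Each ``singleton-side'' merged vertex $\beta(v, 2)$ for $v \in S$ has degree exactly $\deg_G(v) \leq 3$ in $G^*$, because every incident edge gadget contributes only the edge to its $x_a$- or $x_b$-neighbor. Since $F$ has minimum degree $\geq 4$, $\pi$ avoids every $\beta(v, 2)$, so $\pi(V(F))$ is contained in $G_0 := G^* \setminus \{\beta(v, 2) : v \in S\}$, in which adjacent edge gadgets share only the single vertex $\beta(v, 1)$ at each common skeleton endpoint. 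Let $U \subseteq \vec{G}$ be the subgraph of edges whose gadgets are non-trivially used by $\pi$; the connectedness of $F$ forces $U$ to be connected.

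The main obstacle is to show $|E(U)| = 1$ purely from the connectivity of $F$, with no need for any subdivision or diameter bound. Because $\vec{G}$ is cubic, $U$ has vertex connectivity at most $3$. Supposing $|E(U)| \geq 2$, the vertex connectivity $k$ of $U$ lies in $\{1, 2, 3\}$, so some $k$ vertices $v_1, \ldots, v_k \in V(U)$ disconnect $U$; the corresponding $\beta(v_i, 1)$ then disconnect $G_0[\pi(V(F))]$, and hence the removal of their $\pi$-preimages disconnects $F$, contradicting its $(k{+}1)$-connectedness (using $2$-, $3$-, or $4$-connectedness, respectively). Thus $|E(U)| = 1$ and $\pi(V(F))$ lies in a single edge gadget, so $C$ is separating and therefore admissible. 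Invoking \cref{lemma:np_complete_admissible}, with the framework's subdivision parameter $L$ set to~$0$ (which only requires reducing from \textsc{Cubic Vertex Cover}, still \NP-hard), then yields \NP-hardness of both \textsc{$\free_\subseteq(\mathcal{F})$-VS} and \textsc{$\free_\prec(\mathcal{F})$-VS}.
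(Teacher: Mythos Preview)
Your approach---using the degree bound on the merged ``singleton-side'' vertices $\beta(v,2)$ to force the embedding into the reduced graph $G_0$ in which adjacent gadgets share a single vertex, and then arguing via the connectivity of the used-edge subgraph $U$---is genuinely different from the paper's. The paper also sets $L=2$ but does not use the $\beta(v,2)$ degree trick; instead it modifies the vertex cover $S$ so that each subdivided edge of the underlying cubic graph carries at least one \emph{uncovered} degree-$2$ subdivision vertex, which yields a single-vertex cut directly in $G^*$. Any embedding that spans two adjacent gadgets must then (since the direct overlap has size at most~$2$) ``go around'' and cross at most three of these cut vertices coming from a $K_{1,3}$ in the underlying cubic graph, contradicting $4$-connectedness. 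Your route via $G_0$ and the auxiliary graph $U$ is arguably cleaner when it works.

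However, there is a real gap in your separating argument with $L=0$. You assert that since $\kappa(U)\in\{1,2,3\}$, ``some $k$ vertices $v_1,\ldots,v_k$ disconnect $U$''. This is false precisely when $U$ is complete: for $U\simeq K_3$ or $U\simeq K_4$ one has $\kappa(U)=|V(U)|-1$, but removing $\kappa(U)$ vertices leaves a single vertex, which is connected. Both cases can actually occur when the skeleton is a plain cubic graph (take $G=K_4$, or any cubic graph containing a triangle). So as written the argument does not cover these skeletons, and your reduction from \textsc{Cubic Vertex Cover} is not yet complete. A second, smaller point: you should be explicit that you cannot invoke \cref{lemma:np_complete_admissible} as stated, since \cref{definition:admissible} fixes $L=2\max_{F}\diam(F)$, which is meaningless here; you need to restate separating/admissible for your chosen~$L$ and re-run the (easy) proof of \cref{lemma:reduction_admissible}. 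The paper makes exactly this adjustment.

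The cleanest fix is to take $L=2$ instead of $L=0$: the skeleton then has girth at least~$9$, so $U$ can be neither $K_3$ nor $K_4$, and your connectivity argument goes through verbatim (any connected non-complete graph of maximum degree~$\le 3$ on at least three vertices has a separating set of size $\kappa(U)\le 3$). Alternatively, you can handle $U\in\{K_3,K_4\}$ by a short direct argument: at least one gadget of $U$ contributes an internal vertex to $\range(\pi)$ (otherwise $\pi(F)$ sits in the attachment-point graph of degree~$\le 3$), and removing the two (resp.\ three) attachment points at its ends isolates those internals from the rest of~$\range(\pi)$.
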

\appendixproof{theorem:four_connected_np_complete}{
\ifshort\theoremfourconnectednpcomplete*\fi
\begin{figure}
    \begin{center}
        \includegraphics{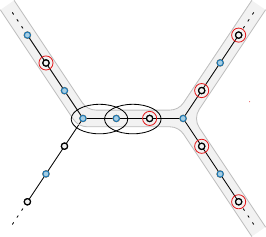}
    \end{center}
    \caption[Illustration for the proof sketch of \cref{theorem:four_connected_np_complete}.]{Illustration for the proof sketch of \cref{theorem:four_connected_np_complete}.
    The figure shows part of a 2-subdivision of a cubic graph and a vertex cover thereof (marked in blue).
    Two edge gadgets are marked with ellipses. The shaded gray area denotes a hypothetical embedding of a forbidden graph
    that ``goes around'' the construction, connecting the marked edge gadgets.
    In red, ``cutpoint-vertices'' are marked, stemming from vertices not part of the vertex cover.
    In this example, it suffices to delete two ``cutpoints'' in order to disconnect the forbidden graph; in general, this could require up to three.
    In either case, a contraction is reached, as the forbidden graph is 4-connected and can thus not be disconnected by less than four vertex deletions.}
    \label{figure:four_connected}
\end{figure}
\begin{proof}[Proof sketch]
We proceed similarly as in the triconnected case and use the same kind of splitting configuration.
Notice that we need a slightly different notion of admissibility compared to \cref{definition:admissible}:
The quantity $L$ as defined there is not meaningful in our case, as we have no bound on the diameter of the forbidden subgraphs.
Instead, we set $L \coloneqq 2$, and also use this value in the reduction itself when constructing~$G^*$.

We can apply the same argument as to why there are no embeddings of $F \in \mathcal{F}$ in edge gadgets after splitting.
The difference lies in proving the separability; the previous argument is not applicable, as there is no bound on the diameter in the given case.
Instead, we observe the following:
Suppose there is an embedding of some $F \in \mathcal{F}$ in $G^*$ that reaches from one edge gadget to a neighboring one.
The embedding cannot only use the ``direct overlap'', as then deleting at most two vertices would render the embedding disconnected, contradicting the 4-connectedness.
Thus, the embedding needs to ``go the other way'' around the construction.
But then, it fully traverses the edge gadgets that stem from either two or three edges of the 2-subdivision of a $K_{1,3}$ in the underlying cubic graph.
Notice that we can always modify the given vertex cover of the 2-subdivided cubic graph such that one vertex on each subdivided edge is not part of the vertex cover.
These vertices outside of the vertex cover were not split in the construction of $G^*$.
Thus, all paths that directly connect two edge gadgets whose intersection is a vertex not part of the vertex cover must traverse through a single ``cutpoint-vertex''.
Hence, by deleting at most three such vertices, we can render our 4-connected embedding disconnected, a contradiction.
Reference \cref{figure:four_connected} for an illustration of the argument.
\end{proof}
}

\section{Infinite Families of Forbidden Cycles}
\label{section:cycles}
\appendixsection{section:cycles}

In this section, we show that \textsc{Bipartite Vertex Splitting} and \textsc{Perfect Vertex Splitting} are \NP-complete by reducing from
\textsc{$2$-Subdivided Cubic Vertex Cover}.
In the reduction, we replace each edge of the \textsc{$2$-Subdivided Cubic Vertex Cover} instance with a triangle.
Intuitively, this forces each splitting sequence making the graph bipartite 
to ``hit'' each triangle, analogous to how a vertex cover needs to ``hit'' each edge of a graph as well.
The more difficult direction is to show how to use a vertex cover to split the constructed graph as to make it bipartite.
Here, we first apply a set of preprocessing rules to a given vertex cover. In essence, we remove vertices that are ``unnecessarily'' included in the vertex cover.
Then, we partition the constructed graph into smaller components and recognize that each component can be split according to the preprocessed vertex cover and a finite set of rules to make it 2-colorable.
We then show that all ``local'' 2-colorings compose into a ``global'' 2-coloring, hence yielding a splitting sequence that renders the constructed graph bipartite.

\begin{restatable}[\appsymb]{lemma}{lemmabipartitereduction}\label{lemma:bipartite_reduction}
    Let $(G, k)$ be an instance of \textsc{$2$-Subdivided Cubic Vertex Cover},
    and let the graph $G^*$ be obtained from $G$ by adding a new vertex $w$ and the edges $\set{wu, wv}$ to $G$ for each $uv \in E(G)$.
    Then, $(G, k)$ is a positive instance of \textsc{$2$-Subdivided Cubic Vertex Cover}
    if and only if $(G^*, k)$ is a positive instance of \textsc{Bipartite Vertex Splitting}.
\end{restatable}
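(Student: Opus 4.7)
The plan is to prove the equivalence in both directions, using the existing reduction framework for one direction and an explicit construction for the other.

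For the backward direction, I will observe that $G^*$ coincides with $\Constr(\vec{G}, C, \varnothing)$ for any orientation $\vec{G}$ of $G$ and the disjoint splitting configuration $C = (K_3, a, \{b\}, \{w\}, b, \{a\}, \{w\})$ based on $\{K_3\}$, where $V(K_3) = \{a, b, w\}$: the construction replaces each edge of $G$ by a triangle attached at its two endpoints, exactly matching the definition of $G^*$. Since every bipartite graph is $K_3$-free as a subgraph, any splitting sequence of length at most $k$ transforming $G^*$ into a bipartite graph satisfies the hypotheses of \cref{lemma:splitting_sequence_to_vertex_cover} with $\mathcal{F} = \{K_3\}$, immediately yielding a vertex cover of $G$ of size at most $k$.

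For the forward direction, given a vertex cover $S$ of $G$ with $|S| \leq k$, I will perform exactly $|S|$ disjoint splits in $G^*$ and exhibit a bipartition of the resulting graph $H$. Fix any total order $<$ on $S$. For each $v \in S$, split $v$ into two fresh vertices $v_1$ and $v_2$ with the following assignment of $v$'s neighbors in $G^*$: each $u \in N_G(v) \setminus S$ is assigned to $v_1$; each $u \in N_G(v) \cap S$ is assigned to $v_1$ if $v < u$ and to $v_2$ if $u < v$; every triangle vertex $w_{uv}$ for $uv \in E(G)$ is assigned to $v_2$. The proposed $2$-coloring $\phi$ of $H$ sets $\phi(v) = 0$ for $v \in V(G) \setminus S$, $\phi(v_1) = 1$ and $\phi(v_2) = 0$ for $v \in S$, and $\phi(w_{uv}) = 1$ for every triangle vertex $w_{uv}$. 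A case analysis over the edges of $H$ (distinguishing $G$-edges with one or two endpoints in $S$, and triangle-edges with zero or one $S$-endpoint) will confirm that $\phi$ is proper.

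The main obstacle is to ensure consistency when a $G$-edge $uv$ has both endpoints in $S$: a symmetric assignment sending every $G$-neighbor of $v$ to $v_1$ would create a monochromatic edge $u_1 v_1$. The total order $<$ breaks this symmetry by routing $uv$ between descendants of opposite colors, namely $u_1$ and $v_2$ whenever $u < v$. The antisymmetry of $<$ guarantees that both endpoint-assignments agree on this routing, so the final graph $H$ is unambiguously defined independently of the order in which the splits are applied.
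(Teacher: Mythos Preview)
Your backward direction matches the paper's. Your forward direction is correct and considerably more direct than the paper's argument. The paper first massages the given vertex cover $S$ into a normalized $S'$ via four local replacement rules on each subdivided-edge path $uvwx$, then decomposes $G^*$ into claw-shaped pieces (around degree-$3$ vertices) and linking triangles (between them), and finally assembles the split graph from a finite table of locally $2$-colored gadgets whose boundary colors are chosen to agree. You instead handle everything in one shot: route every $G$-neighbor of a cover vertex to its color-$1$ copy and every triangle vertex to its color-$0$ copy, using a total order on $S$ only to break the tie when both endpoints of a $G$-edge lie in $S$. A short case check then certifies the bipartition.

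The trade-off is that the paper's more elaborate machinery is reused almost verbatim for \cref{theorem:perfect_vs_np_complete} (replacing triangles by $C_5$'s and swapping the rule tables), whereas your argument is tailored to triangles. For the present lemma, your proof is cleaner; if you later need the perfect-graph analogue, you would have to redo the coloring analysis for $C_5$-gadgets rather than just swap tables.
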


\appendixproof{lemma:bipartite_reduction}{
\ifshort\lemmabipartitereduction*\fi

\begin{figure}
    \begin{center}
        \includegraphics{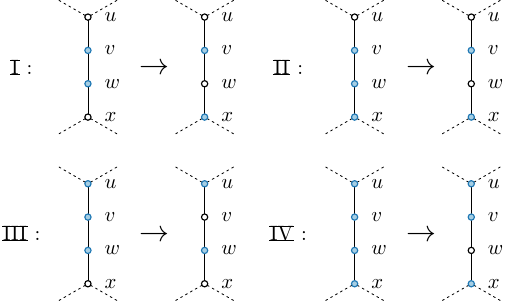}
    \end{center}
    \caption[The four preprocessing rules used in the proof of \cref{lemma:bipartite_reduction}.]{The four preprocessing rules used in the proof of \cref{lemma:bipartite_reduction}.
    Informally, for each rule, the left-hand side denotes a portion of a graph together with the corresponding part of a vertex cover, and the right-hand side shows how the vertex cover shall be altered.}
    \label{figure:preprocessing}
\end{figure}

\begin{figure}
    \begin{center}
        \includegraphics{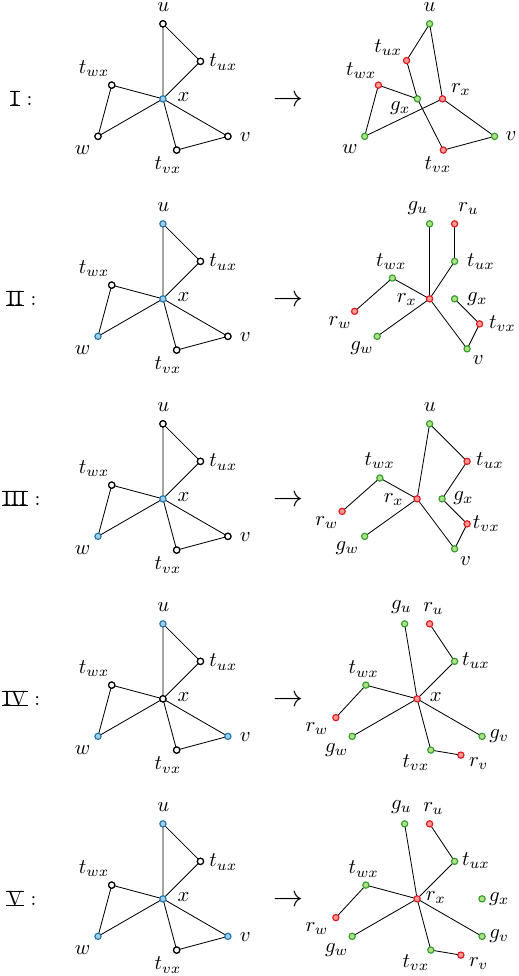}
    \end{center}
    \caption[The five rules used in the proof of \cref{lemma:bipartite_reduction}.]{The five rules used in the proof of \cref{lemma:bipartite_reduction}.
    Informally, for each rule, the left-hand side denotes a portion of a graph over the vertex set $V(G_0)$ together with the corresponding part of a vertex cover.
    If the rule matches, the right-hand side defines a graph over the vertex set $V^*$ and a two-coloring thereof.}
    \label{figure:bipartite_table}
\end{figure}

\begin{figure}
    \begin{center}
        \includegraphics{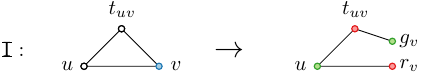}
    \end{center}
    \caption[The final rule used in \cref{lemma:bipartite_reduction}.]{The final rule used in \cref{lemma:bipartite_reduction}, analogous to the rules given in \cref{figure:bipartite_table}.  }
    \label{figure:linking_triangle}
\end{figure}

\begin{figure}
    \begin{center}
        \includegraphics{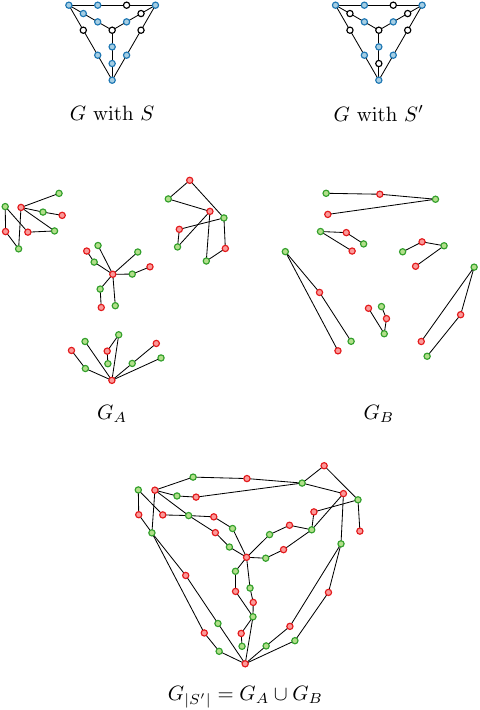}
    \end{center}
    \caption[Example for the construction used in the proof of \cref{lemma:bipartite_reduction}.]{Example for the construction used in the proof of \cref{lemma:bipartite_reduction}.}
    \label{figure:bipartite_example}
\end{figure}
    
\begin{proof}
    $(\Leftarrow)\colon$
    Let $G_0, \ldots, G_\ell$ be a splitting sequence where the first graph equals $G^*$,
    the last graph is bipartite, and $\ell \leq k$.
    We observe that $G_0 = \Constr(\vec{G}, C, \varnothing)$ (defined in \cref{section:a_construction})
    where $\vec{G}$ is some orientation of $G$,
    and $C$ is a splitting configuration based upon $\set{K_3}$.
    The class of bipartite graphs is described by $\free_\subseteq(\mathcal{F})$,
    where $\mathcal{F}$ is the set of all odd circles, implying $C$ is also based upon $\mathcal{F}$.
    Therefore, we can apply \cref{lemma:splitting_sequence_to_vertex_cover} and obtain a vertex cover $S$ of $G$ with $|S| \leq k$.

    $(\Rightarrow)\colon$
    Let $S$ be a vertex cover of $G$ using at most $k$ vertices.

    Roughly speaking, we will construct a bipartite graph by composing 2-colorable components
    selected according to the vertex cover $S$. The resulting graph will be the last graph of a $k$-splitting sequence starting with $G^*$;
    the precise order of splits will be unimportant.

    \smallskip
    \textit{Preprocessing.}
    We begin with applying preprocessing rules to $S$ to obtain a derived vertex cover $S'$ with $|S'| \leq |S|$.
    Let $S_0 \coloneqq S$. We will build a sequence $S_0, S_1, \dots$ as follows:
    Consider some path $uvwx$ in $G$ such that $d_G(u) = d_G(x) = 3$.
    Consult each left-hand side of the four rules depicted in \cref{figure:preprocessing}.
    If all of the depicted vertices marked in blue are contained in $S_i$, while all remaining depicted vertices
    are not contained in $S_i$, then we say the corresponding rule is applicable.
    If this is the case, we apply the rule by setting $S_{i + 1}$ to $S_i$, but updating the membership status of $u, v, w, x$ in $S_{i+1}$ such that
    the vertices marked blue on the corresponding right-hand side of the rule are contained in $S_{i+1}$, whereas the vertices that are not marked blue are not contained in $S_{i+1}$.
    We notice that $S_{i+1}$ is still a vertex cover, as all edges of the path $uvwx$ are covered, as depicted in \cref{figure:preprocessing},
    and all other edges of $G$ are covered by some part of $S_i$ not modified by applying the rule.
    Furthermore, observe that $|S_{i+1}|$ cannot exceed $|S_i|$.

    Repeat the above procedure until no rule is applicable anymore. Consult \cref{figure:preprocessing} and observe that this is the case eventually.
    We use $S'$ to denote the last element of the sequence $S_0, S_1, \ldots$ of vertex covers.

    \smallskip
    \textit{Construction of $G_{|S'|}$.}
    In the remainder of this proof, we will make use of three sets of new vertices:
    \begin{align*}
        V^*_\text{red}   &\coloneqq \set{r_v \mid v \in V(G) \cap S'},\\
        V^*_\text{green} &\coloneqq \set{g_v \mid v \in V(G) \cap S'}\text{, and}\\
        V^*_\text{triangle} &\coloneqq \set{t_{e} \mid e \in E(G)}.
    \end{align*}
    Later, $V^*_\text{red}$ will denote vertices produced by vertex splitting to be colored red,
    $V^*_\text{green}$ will serve a similar role for vertices to be colored green,
    and $V^*_\text{triangle}$ will serve as endpoints for triangles that we are about to introduce.
    Our constructions will be based on the vertex set 
    \begin{align*}
        V^* \coloneqq V(G) \cup V^*_\text{red} \cup V^*_\text{green} \cup V^*_\text{triangle}.
    \end{align*}

    To start, we construct the graph $G_0$ by augmenting $G$ with the vertices
    $V^*_\text{triangle}$ and the edges $\bigcup_{vw \in E(G)} \set{v v_{vw}, w v_{vw}}$,
    that is, we extend each edge of $G$ to a triangle. Note that this graph is isomorphic to $G^*$ by construction.
    Next, we define two sets of connected components of $G_0$:
    \begin{align*}
        C_A &\coloneqq \set{ G_0[\set{x, u, t_{ux}, v, t_{vx}, w, t_{wx}}] \mid x \in V(G), d_G(x) = 3, N_G(x) = \set{u, v, w}}\text{,}\\
        C_B &\coloneqq \set{ G_0[\set{u, v, t_{uv}}] \mid uv \in E(G), d_G(u) = d_G(v) = 2}.
    \end{align*}
    In other words, the set of graphs $C_A$ contains the subgraphs induced by the claws of $G$ where each edge is extended to a triangle,
    whereas $C_B$ is composed of the triangles stemming from the edges in $G$ connecting its claws, where each edge is extended to a triangle using one vertex of $V^*_\text{triangle}$ each.
    The (non-vertex-disjoint) union of all these graphs forms $G_0$, that is, $\bigcup_{C \in C_A \cup C_B} C = G_0$.

    Next, we will define a map $f_A$ that assigns to each $H \in C_A$ a graph over $V^*$.
    Alongside, we will define a coloring $c_A \colon \bigcup_{H \in C_A} V(f_A(H)) \to \set{\text{red}, \text{green}}$.
    Let $H \in C_A$.
    We use \cref{figure:bipartite_table} to obtain the value of $f_A(H)$
    and $c_A(y)$ for $y \in V(H)$ as follows:
    In the figure, five rules are depicted; consider one such rule and 
    let $H'$ be the graph depicted on the left-hand side.
    We say, the rule \emph{matches}, if there is $\pi \in \emb_\prec(H, H')$, such that for each $v \in V(H)$
    it holds that $v \in S'$ if and only if $\pi(v)$ is marked in blue in the depiction of $H'$.
    If the rule matches, $\pi$ in combination with the graph on the right-hand side specify the graph $f_A(H)$.
    Additionally, we use the depicted $\set{\text{red}, \text{green}}$-coloring to assign 
    the values of $c_A(v)$ for $v \in V(H)$.

    We observe that exactly one rule will match for each $H \in C_A$,
    as $(V(H) \cap S') \setminus V^*_\text{triangle}$ is a vertex cover of $H[V(H) \setminus V^*_\text{triangle}] \simeq K_{1,3}$,
    and the graphs depicted on the left-hand sides of the rules (when disregarding vertices of $V^*_\text{triangle}$) specify all possible vertex covers of $K_{1, 3}$ up to isomorphism.
    Furthermore, we notice that if $a \neq b \in C_A$, then $f_A(a)$ and $f_B(b)$ are vertex-disjoint.
    Thus, $c_A$ is well-defined.

    We proceed symmetrically with the definition of $f_B$ that assigns to each $H \in C_B$ a graph over $V^*$ and
    the coloring $c_B \colon \bigcup_{H \in C_B} V(f_B(H)) \to \set{\text{red}, \text{green}}$.
    Different from the last case, we use \cref{figure:linking_triangle} as the table of rules.
    We also use another justification as to why always exactly one pattern matches:
    Due to the preprocessing step where we derived $S'$ from $S$, exactly one vertex of each $H \in C_B$ that is not a member of $V^*_{\text{triangle}}$ is contained in $S'$.
    Next, we define
    \begin{align*}
        G_A \coloneqq \dot{\bigcup}_{H \in C_A} f_A(H) \qquad \text{and} \qquad G_B \coloneqq \dot{\bigcup}_{H \in C_B} f_B(H).
    \end{align*}    
    Finally, let $G_{|S'|}$ be the (non-vertex-disjoint) union of $G_A$ and $G_B$. (See \cref{figure:bipartite_example} for an example.)
    
    \smallskip
    \textit{Properties of $G_{|S'|}$.}
    To understand the rationale for constructing $G_{|S'|}$ this way, we make a series of observations.
    Notice that for all $A_1 \neq A_2 \in C_A$, the graphs $f_A(A_1)$ and $f_B(A_2)$ are vertex-disjoint.
    Additionally, an analog statement holds for $C_B$.
    Thus, $c_A$ describes a proper 2-coloring of $G_A$,
    whereas $c_B$ describes a proper 2-coloring of $G_B$.
    Next, remember that all graphs in $G_A$ and $G_B$ were defined using the common vertex set $V^*$.
    Further, notice that the only way for two graphs $g(H_1)$ and $h(H_2)$ with $\set{g, h} = \set{f_A, f_B}$ and $H_1 \neq H_2 \in C_A \cup C_B$ to share common vertices is
    when, without loss of generality, $g = f_A, h = f_B, H_1 \in C_A$ and $H_2 \in C_B$.
    In this case, we have that $V(f_A(H_1)) \cap V(f_B(H_2)) = \set{r_v, g_v} \subseteq V^*_\text{red} \cup V^*_\text{green}$
    for some $v \in V(G)$ if $v \in S'$; otherwise, $v \not\in S'$ and $V(f_A(H_1)) \cap V(f_B(H_2)) = \set{v}$.
    Using this, we make the crucial observation that the definitions of $f_A$ and $f_B$ were chosen precisely such that when the 2-sets 
    $\set{ \set{r_v, g_v} \mid v \in S'}$
    are merged in $G_{|S'|}$ respectively, a graph isomorphic to $G_0$ is obtained.

    We claim that $c_A \cup c_B$ provides a proper 2-coloring of $G_{|S'|}$.
    To this end, we show that if $x \in \dom(c_A) \cap \dom(c_B)$, then $c_A(x) = c_B(x)$.
    Let $x \in \dom(c_A) \cap \dom(c_B)$.
    First, assume that $x \subseteq V^*_{\text{red}} \cup V^*_{\text{green}}$.
    Then, $c_A(x) = c_B(x)$, as both in \cref{figure:bipartite_table} as well as in \cref{figure:linking_triangle}
    vertices of $V^*_{\text{red}}$ are always assigned the color red, whereas vertices of $V^*_{\text{green}}$ are always assigned the color green.
    Otherwise, $x \in V(G)$. Then, again inspecting \cref{figure:bipartite_table} and \cref{figure:linking_triangle} yields that $c_A(x) = c_B(x) = \text{green}$.
    Therefore, we conclude that $G_{|S'|}$ admits a proper 2-coloring, or phrased differently: $G_{|S'|}$ is bipartite.

    Finally, select a total order $s_1, \ldots, s_{|S'|}$ of $S'$.
    We will construct a $|S'|$-splitting sequence in reverse, starting with its last element, $G_{|S'|}$.
    To obtain $G_{i-1}$ from $G_i$ for $i \in \set{1, \ldots, |S'|}$, consider vertex $s_i$.
    We observe that $r_{s_i}$ and  $g_{s_i}$ are two independent vertices of $V(G_i)$.
    Now, $G_{i-i}$ is obtained from $G_i$ by merging $r_{s_i}$ and $g_{s_i}$ into $s_i$.
    Thus, the initial graph of the sequence obtained this way equals $G_0$ as defined above.
    Therefore, $G_0, \dots, G_{|S'|}$ certifies that $(G^*, k)$ is a positive instance of \textsc{Bipartite Vertex Splitting}.
\end{proof}
}

Using this reduction the hardness proof is immediate:
\begin{restatable}[\appsymb]{theorem}{theorembipartitevsnpcomplete}\label{theorem:bipartite_vs_np_complete}
    \textsc{Bipartite Vertex Splitting} is \NP-complete.
\end{restatable}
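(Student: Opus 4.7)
The plan is to combine the reduction from \cref{lemma:bipartite_reduction} with the \NP-hardness of \textsc{$2$-Subdivided Cubic Vertex Cover}, which is the $\ell=1$ case of \cref{lemma:two_l_subdivided_cubic_vc_np_hard}. First, I would argue \NP-membership: a certificate consists of a splitting sequence of length at most $k$, which has total size polynomial in the input because each split increases the vertex count by exactly one. Given such a sequence, one can check in polynomial time that each $G_{i+1}$ is obtainable from $G_i$ by a single vertex split (by comparing neighborhoods of the two new vertices against the neighborhood of the claimed ancestor) and that the final graph $G_k$ is bipartite (breadth-first search and $2$-coloring).

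For \NP-hardness, I would give a polynomial-time many-one reduction. Given an instance $(G,k)$ of \textsc{$2$-Subdivided Cubic Vertex Cover}, construct $G^*$ as described in \cref{lemma:bipartite_reduction}: add a fresh vertex $w_{uv}$ for each edge $uv \in E(G)$ and connect it to both $u$ and $v$. This transformation can be carried out in linear time and produces an instance $(G^*,k)$ of \textsc{Bipartite Vertex Splitting}. By \cref{lemma:bipartite_reduction}, $(G,k)$ is a positive instance if and only if $(G^*,k)$ is a positive instance.

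Since \textsc{$2$-Subdivided Cubic Vertex Cover} is \NP-hard by \cref{lemma:two_l_subdivided_cubic_vc_np_hard} (applied with $\ell = 1$), and the reduction is polynomial-time computable, \textsc{Bipartite Vertex Splitting} is \NP-hard. Combined with \NP-membership, the theorem follows. I do not expect any genuine obstacle here, as all technical difficulty has been absorbed into \cref{lemma:bipartite_reduction}; the present proof is a short composition of established facts.
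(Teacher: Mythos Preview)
Your proposal is correct and matches the paper's own proof essentially line for line: both derive \NP-hardness by composing \cref{lemma:bipartite_reduction} with the \NP-hardness of \textsc{$2$-Subdivided Cubic Vertex Cover} from \cref{lemma:two_l_subdivided_cubic_vc_np_hard}, and both treat \NP-membership as routine. Your write-up merely spells out the certificate-checking for \NP-membership in slightly more detail than the paper's one-word ``trivial''.
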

\appendixproof{theorem:bipartite_vs_np_complete}{
\ifshort\theorembipartitevsnpcomplete*\fi
\begin{proof}
    The \NP-membership of \textsc{Bipartite Vertex Splitting} is trivial.
    To derive its \NP-hardness, we use the equivalence given in \cref{lemma:bipartite_reduction}
    to obtain a polynomial-time many-one reduction from the 
    \NP-hard \textsc{$2$-Subdivided Cubic Vertex Cover} (\cref{lemma:two_l_subdivided_cubic_vc_np_hard}) problem to \textsc{Bipartite Vertex Splitting}.
\end{proof}
}

\subsection{Splitting to Perfect Graphs}

Finally, we extend our argument to show that \textsc{Perfect Vertex Splitting} is \NP-complete too.
By the Strong Perfect Graph Theorem \cite{perfect_graph_theorem}, a graph is perfect if and only if it contains no induced cycle of odd length greater than three (``odd holes'') and no induced complement of such a cycle (``odd antiholes'').
Exploiting this, we base our construction upon $C_5$ instead of $K_3$ and provide an appropriate table of rules to obtain a bipartite graph, analogous to the previous case.
Finally, we use a simple sparsity argument to show that when splitting according to these rules, no sufficiently large odd antiholes are introduced.

\begin{restatable}[\appsymb]{theorem}{theoremperfectvsnpcomplete}\label{theorem:perfect_vs_np_complete}
    \textsc{Perfect Vertex Splitting} is \NP-complete.
\end{restatable}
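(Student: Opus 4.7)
I plan to adapt the reduction of \cref{lemma:bipartite_reduction} by replacing each triangle gadget with a $C_5$-gadget. Given an instance $(G, k)$ of \textsc{$2$-Subdivided Cubic Vertex Cover}, construct $G^*$ by adding, for each edge $uv \in E(G)$, three fresh vertices $x_{uv}, y_{uv}, z_{uv}$ together with the edges $ux_{uv}, x_{uv}y_{uv}, y_{uv}z_{uv}, z_{uv}v$, so that $(u, x_{uv}, y_{uv}, z_{uv}, v)$ together with $uv$ induces a $C_5$ in $G^*$. The backward direction is immediate: $C_5$ is forbidden in every perfect graph, so \cref{lemma:splitting_sequence_to_vertex_cover} applied to the $C_5$-based splitting configuration extracts a vertex cover of $G$ of size at most $k$ from any $k$-splitting sequence ending in a perfect graph.

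For the forward direction, from a vertex cover $S$ with $|S| \leq k$ I aim to produce a splitting sequence of length $|S|$ whose final graph is \emph{bipartite}, and then observe that this already suffices. By the Strong Perfect Graph Theorem, a graph is perfect iff it contains no induced odd cycle of length at least $5$ and no induced complement of such. A bipartite graph has no odd cycle at all, handling all odd holes. For odd antiholes of length at least $7$, note that for every $m \geq 3$ the vertices $v_0, v_2, v_4$ of $C_{2m+1}$ are pairwise at cyclic distance at least $2$ (since $2m+1 \geq 7$), so they form a triangle in $\overline{C_{2m+1}}$; hence every odd antihole of length at least $7$ contains $K_3$, which is absent from a bipartite graph. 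Combined with $\overline{C_5} \cong C_5$, this ``sparsity'' argument reduces the forward direction to constructing a bipartite $G^*$ by at most $|S|$ splits.

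To build this bipartite completion I would preprocess $S$ exactly as in \cref{lemma:bipartite_reduction} and then give analogues of the rule tables in \cref{figure:bipartite_table,figure:linking_triangle} for $C_5$-gadgets: for each local pattern of $S$ at the two ends of a $C_5$-gadget, the descendants of the split endpoints are distributed so that the five gadget vertices receive a consistent proper 2-coloring, and these local colorings glue along the shared endpoints exactly as in the bipartite proof. The main obstacle is the lengthier but mechanical case analysis forced by the two additional internal vertices of a $C_5$-gadget; fortunately, any proper 2-coloring of a residual path obtained from a $C_5$ after splitting one or both endpoints is forced up to swapping color classes once the colors of the endpoints are fixed, so gluing along shared endpoints remains straightforward. \NP-membership is immediate.
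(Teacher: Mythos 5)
Your proposal matches the paper's proof in all essentials: the same construction replacing each edge of the \textsc{$2$-Subdivided Cubic Vertex Cover} instance by a $C_5$-gadget, the same appeal to \cref{lemma:splitting_sequence_to_vertex_cover} for the backward direction, and the same forward-direction strategy of producing a \emph{bipartite} graph by at most $|S'|$ splits via rule tables analogous to those for \textsc{Bipartite Vertex Splitting}, then invoking the Strong Perfect Graph Theorem. The only (minor) divergence is how odd antiholes of length at least $7$ are excluded: you note that each such antihole contains a triangle (since $v_0, v_2, v_4$ are pairwise non-adjacent in $C_{2m+1}$ for $2m+1 \ge 7$) and hence cannot occur in any bipartite graph, whereas the paper uses a degree argument tailored to the constructed graph (minimum degree at least four in such antiholes versus maximum degree three after discarding degree-two vertices); both are valid short arguments for the same sub-step, and yours is the more general of the two.
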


\appendixproof{theorem:perfect_vs_np_complete}{
\ifshort\theoremperfectvsnpcomplete*\fi

\begin{figure}
    \begin{subfigure}{\linewidth}
        \begin{center}
            \includegraphics{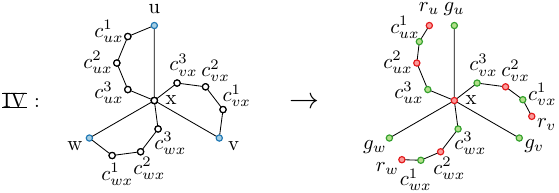}\hfill    
        \end{center}
        \caption[Analogue of \cref{figure:bipartite_table} for the case of \textsc{Perfect-Vertex Splitting}.]{Analogue of \cref{figure:bipartite_table}.
        Of the five rules, only rule \barroman{IV} is shown, which is the analog of \barroman{IV} in \cref{figure:bipartite_table}.
        The other four rules can be obtained by taking the correspondingly numbered rule from \cref{figure:bipartite_table}, and for each ``triangle-vertex'', $t_e$, subdividing both incident edges one time.
        The coloring on the right-hand side is then obtained by carrying over the colors of all descendants of $u, v, w, x$. This fixes the colors of the remaining vertices as well.
        }
        \label{figure:perfect_table}
    \end{subfigure}\par\medskip
    \begin{subfigure}{\linewidth}
        \begin{center}
            \includegraphics{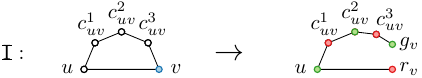}\hfill    
        \end{center}
        \caption[Analogue of \cref{figure:linking_triangle} for the case of \textsc{Perfect-Vertex Splitting}.]{Analogue of \cref{figure:linking_triangle}.}
        \label{figure:linking_c5}
    \end{subfigure}\par\medskip
    \caption[The rules used in the proof of \cref{theorem:perfect_vs_np_complete}.]{The rules used in the proof of \cref{theorem:perfect_vs_np_complete}. Their semantics are completely analogous to the rules used to show the \NP-hardness of \textsc{Bipartite Vertex Splitting}, but instead cater to \textsc{Perfect Vertex Splitting}. }
\end{figure}

\begin{proof}
    We use the same technique as we did when deriving the \NP-completeness of \textsc{Bipartite Vertex Splitting}.
    Thus, we only describe how to alter \cref{lemma:bipartite_reduction} and \cref{theorem:perfect_vs_np_complete}.

    The instance mapping changes as follows:
    Given an instance $(G, k)$ of \textsc{$2$-Subdivided Cubic Vertex Cover},
    instead of extending each edge $uv$ of $G$ to a triangle, we add the path $u c^1_{uv} c^2_{uv} c^3_{uv} v$ for each such edge, where $c^1_{uv}, c^2_{uv}, c^3_{uv}$ are new vertices.
    This has the effect of replacing each edge with a $C_5$, instead of a $C_3$ as was the case in the bipartite case.
    Instead of using $V^*_\text{triangle}$, we collect these new vertices in $V^*_{C_5}$, that is,
    \begin{align*}
        V^*_{C_5} &\coloneqq \set{c^i_{e} \mid e \in E(G), i \in \set{1, 2, 3}}.
    \end{align*}
    Furthermore, everywhere in the proof where a vertex $t_e$ is mentioned, replace this occurrence with $c^1_e, c^2_e, c^3_e$.

    $(\Leftarrow)\colon$ In the backward direction of the correctness proof, we proceed completely analogously, but select a splitting configuration based upon $\set{C_5}$ instead of $\set{K_3}$.

    $(\Rightarrow)\colon$ In the forward direction of the correctness proof, we need to make the following alterations:
    Instead of using the rules defined in \cref{figure:bipartite_table} and \cref{figure:linking_triangle}, we use \cref{figure:perfect_table} and \cref{figure:linking_c5} respectively.
    Then, using the same argument as in the \textsc{Bipartite Vertex Splitting} case, $G_{|S'|}$ is bipartite and hence does not contain induced odd cycles.
    It remains to show that $G_{|S'|}$ is also free of induced antiholes of odd length, that is, $\set{\overline{C_{2i+1}} \mid i \in \mathbb{N}^+}$.
    The first such graph, $\overline{C_5}$, is isomorphic to $C_5$ and hence not contained in $G_{|S'|}$.
    Observe that the minimum vertex degree in any of $\overline{C_7}, \overline{C_9}, \ldots$ is at least four.
    Now, suppose that one of $\overline{C_7}, \overline{C_9}, \ldots$ is an induced subgraph of $G_{|S'|}$.
    Then, any such embedding cannot use vertices of $G_{|S'|}$ with degree two.
    But the induced subgraph of $G_{|S'|}$ where all vertices with degree two were deleted clearly has a maximum vertex degree of three.
    Therefore, such an embedding cannot exist and $G_{|S'|}$ is free of all odd antiholes, and thereby also perfect.
\end{proof}
}

\newcommand{\kTriangleFreeVS}[1]{\textsc{Triangle-Free $#1$-Vertex Splitting}}
\newcommand{\TriangleFreeVS}[0]{\textsc{Triangle-Free Vertex Splitting}}
\newcommand{\TriangleFreeCol}{\textsc{Triangle-Free 3-Colorability}}

\newcommand{\ShallowTriangleVS}{\textsc{Shallow Triangle-Free Vertex Splitting}}
\newcommand{\kShallowTriangleVS}[1]{\textsc{Shallow Triangle-Free $#1$-Vertex Splitting}}

\section{Parameterized Complexity}
\label{section:parameterized}
\appendixsection{section:parameterized}

\looseness=-1
In this section, we derive that \TriangleFreeVS{} is para-\NP-hard, but show that when we may only split each vertex at most once, the problem becomes \XP-tractable.
(For the relevant notions from parameterized algorithmics we refer to the literature~\cite{fpt_downey_fellows,fptbook,FlumG06,Niedermeier06}.)

For a fixed $k$, we call the restriction of \TriangleFreeVS{} to instances $(G, \ell)$ where $\ell = k$, \kTriangleFreeVS{k}.
We will show that \kTriangleFreeVS{k} is \NP-hard for $k \ge 2$, and hence,
\TriangleFreeVS{} parameterized by the number of splits is para-\NP-hard.
To this end, we reduce from the \NP-hard \TriangleFreeCol{} problem \cite{complexityColoringHFree} to \kTriangleFreeVS{k}.
In essence, we take an instance $G$ of the \TriangleFreeCol{} and add a universal vertex $u$.
Then, the triangles in the resulting graph correspond precisely to all ``coloring constraints'', as $G$ is triangle-free.
Furthermore, each triangle contains $u$.
Now, suppose that $u$ is split exactly two times using only disjoint splits, such that all triangles get destroyed.
Then, no edge $xy$ of $G$ can have two adjacent edges with any one of the three descendants of $u$.
Thus, there is an edge from $x$ to one of the three descendants, and an edge from $y$ to a different descendant.
As there are precisely three descendants, we can use this structure to obtain a proper 3-coloring of $G$.

It is left to ensure that $u$ must be split as described above in the resulting instance.
For this, we perform two measures:
Firstly, we add $k - 2$ disjoint triangles that are not connected to the universal vertex.
This fixes all but two splits.
Secondly, we do not just use $G$, but rather three disjoint copies of $G$.
This way, it is ensured that at least one copy of $G$ is only split at the vertex $u$.
From this copy, we will be able to extract the desired 3-coloring.

\begin{restatable}[\appsymb]{theorem}{theoremparaNPhard}\label{theorem:paraNPhard}
    \kTriangleFreeVS{k} is \NP-hard for $k \ge 2$.
\end{restatable}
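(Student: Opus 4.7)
The plan is to give a polynomial-time many-one reduction from \TriangleFreeCol{}, which is known to be \NP-hard, following the informal outline in the paragraph preceding the theorem. Given an instance $G$ of \TriangleFreeCol{}, I would construct $G'$ by taking three vertex-disjoint copies $G_1, G_2, G_3$ of $G$, introducing a fresh universal vertex $u$ adjacent to every vertex of $V(G_1) \cup V(G_2) \cup V(G_3)$, and attaching $k - 2$ pairwise vertex-disjoint triangles $T_1, \ldots, T_{k-2}$ that share no vertex with each other, with $u$, or with any $G_j$. Since $k$ is a fixed constant, this construction is clearly polynomial. The key structural feature I would exploit is that, because each $G_j$ is triangle-free, every triangle of $G'$ either sits entirely inside some $T_i$ or consists of $u$ together with an edge of some copy $G_j$.

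For the forward direction, given a proper 3-colouring $c\colon V(G) \to \set{1,2,3}$, I would first use $k - 2$ splits to destroy the triangles $T_1, \ldots, T_{k-2}$, and then perform two disjoint splits of $u$ that create three descendants $u_1, u_2, u_3$; for every copy $G_j$ and every $v \in V(G_j)$, I would assign the edge $uv$ solely to $u_{c(v)}$. Any surviving triangle would have to be of the form $u_i x y$ with $xy \in E(G_j)$ and both $x$ and $y$ adjacent to the same $u_i$, which is ruled out because $c(x) \neq c(y)$.

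The main obstacle is the reverse direction, where I must extract a 3-colouring from an arbitrary splitting sequence $G'_0, \ldots, G'_k$ with $G'_k$ triangle-free. First I would show that the splits spent on the auxiliary triangles are essentially forced: the $T_i$ are pairwise vertex-disjoint and disjoint from the ``main part'' containing $u$ and the copies of $G$, and each $T_i$ requires at least one split whose $G'_0$-ancestor lies in $V(T_i)$, so at most two of the $k$ splits may have their ancestor in the main part. Because the three copies $G_j$ are vertex-disjoint, at most two of them can contain any split ancestor; hence there is at least one copy, say $G_1$, such that no vertex of $V(G_1)$ and no descendant of such a vertex is split in the sequence. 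In $G'_k$, the vertex $u$ therefore has at most three descendants $u_1, \ldots, u_t$ with $t \leq 3$, and every $v \in V(G_1)$ is still adjacent to some non-empty set $S_v \subseteq \set{u_1, \ldots, u_t}$, because splitting always preserves at least one descendant edge for each original edge.

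Triangle-freeness of $G'_k$ then forces $S_x \cap S_y = \varnothing$ for every edge $xy \in E(G_1)$, since otherwise a common $u_i$ would complete a triangle with $x$ and $y$. Choosing for each $v \in V(G_1)$ any element of $S_v$ as its colour yields a proper 3-colouring of $G_1 \simeq G$, and the two cases $t \in \set{1,2}$ yield proper colourings that simply leave colours unused. Combining the two directions gives the desired reduction and therefore \NP-hardness of \kTriangleFreeVS{k} for every fixed $k \geq 2$; as an immediate consequence, \TriangleFreeVS{} parameterised by the number of splits is para-\NP-hard.
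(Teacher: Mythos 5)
Your proposal is correct and follows essentially the same reduction as the paper: three disjoint copies of $G$, a universal vertex $u$, and $k-2$ padding triangles, with the identical forward direction and a backward direction that isolates an unsplit copy and reads a 3-colouring off the descendants of $u$. The only (cosmetic) difference is that you derive the colouring from the sets $S_v$ of $u$-descendants adjacent to $v$ and get disjointness from triangle-freeness directly, whereas the paper first normalises the splitting sequence to disjoint splits so that each edge $uv$ has a unique descendant; both arguments are sound.
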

\appendixproof{theorem:paraNPhard}{
\ifshort\theoremparaNPhard*\fi
\begin{figure}
    \begin{subfigure}{\linewidth}
        \begin{center}
            \includegraphics{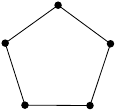}\hfill    
        \end{center}
        \caption{A triangle-free graph $G$.}
        \label{figure:c5}
    \end{subfigure}\par\medskip
    \begin{subfigure}{\linewidth}
        \begin{center}
            \includegraphics{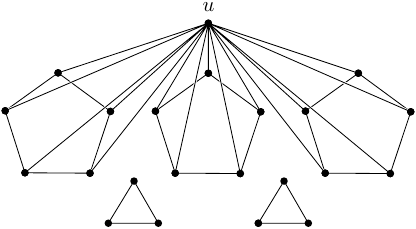}\hfill    
        \end{center}
        \caption{The reduction applied to $G$, resulting in $G^*$ and the instance $(G^*, 4)$ of \kTriangleFreeVS{4}.
        }
        \label{figure:3col_reduction}
    \end{subfigure}\par\medskip
    \begin{subfigure}{\linewidth}
        \begin{center}
            \includegraphics{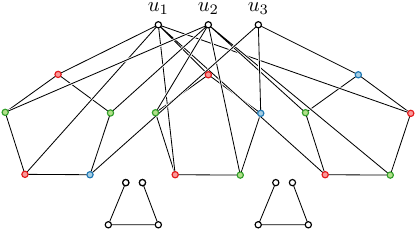}\hfill    
        \end{center}
        \caption{The final graph of a splitting sequence of length four starting with $G^*$ and ending with a triangle-free graph. The neighborhoods of $\set{u_1, u_2, u_3}$ induce three proper 3-colorings of $G$.}
        \label{figure:3col_reduction_split}
    \end{subfigure}\par\medskip
    \caption[Example for the reduction used in the proof of \cref{theorem:paraNPhard}.]{Example for the reduction used in the proof of \cref{theorem:paraNPhard} with $k = 4$. }
    \label{figure:3col_reduction_example}
\end{figure}
\begin{proof}
    Let $k \ge 2$.
    We perform a polynomial-time reduction from the \NP-hard \TriangleFreeCol{} problem, which decides whether a triangle-free graph admits a proper 3-coloring \cite{complexityColoringHFree}.
    Let an instance of this problem be given by a graph $G$.
    To build an instance $(G^*, k)$, of \kTriangleFreeVS{k}, we take the disjoint union of three copies of $G$, call them $G^1, G^2, G^3$, and add a new universal vertex $u$.
    Finally, we add $k - 2$ disjoint copies of $K_3$.
    See \cref{figure:3col_reduction_example} for an example.

    We claim that $G$ is a positive instance of \TriangleFreeCol{} if and only if $(G^*, k)$ is a positive instance of \kTriangleFreeVS{k}.

    $(\Rightarrow)\colon$
    Assume $G$ is 3-colorable.
    Let $c \colon V(G) \to \set{1, 2, 3}$ be a proper 3-coloring of $G^1 \cupdot G^2 \cupdot G^3$, where each copy of $G$ is colored according to some proper 3-coloring of $G$.
    We construct the splitting sequence $G_0^*, \dots, G_k^*$ with $G_0^* = G^*$ as follows:
    The first two splits are used to split 
    the universal vertex $u$ two times using disjoint splits; call its descendants $u_1, u_2$ and $u_3$.
    First, we split $u$ and assign each edge $vu \in E(G_0^*)$ to $u_1$ if $c(v) = 1$ and to $u_2$ otherwise.
    In the second split, we split $u_2$ and assign each edge $vu_2 \in E(G_1^*)$ to $u_2$ if $c(v) = 2$ and to $u_3$ otherwise.
    We use the remaining $k - 2$ spits to destroy the remaining $k - 2$ disjoint copies of $K_3$.
    
    We claim that $G_2^*$ is triangle-free apart from the $k - 2$ disjoint copies of $K_3$.
    Suppose not.
    Then, there is a triangle that does not use any vertices of the $k - 2$ disjoint copies of $K_3$.
    Observe that in the first two splits, only the universal vertex $u$ (and descendants of $u$) were split.
    Also, $G^1, G^2, G^3$ are triangle-free and there are no edges between them in $G_2^*$.
    Thus, the triangle contains a descendant of $u$, call it $u_i$, and two additional vertices that both stem from a shared copy of $G$; call them $x, y$.
    But then, by the construction of the splitting sequence, we find that $i = c(x) = c(y)$.
    But $c(x) \neq c(y)$, since $c$ is a proper 3-coloring of $G^1 \cupdot G^2 \cupdot G^3$.
    Thus $G_2^*$ is indeed triangle-free apart from the $k - 2$ disjoint copies of $K_3$.
    As the last $k - 2$ splits that destroy one of the $k - 2$ disjoint triangles each do not introduce new triangles, we conclude that $G_k^*$ is triangle-free.

    $(\Leftarrow)\colon$
    Let $G_0^*, \dots, G_\ell^*$ be a splitting sequence with $G_0^* = G^*$, $\ell \leq k$, and $G_\ell^* \in \free_\subseteq(K_3)$.
    We assume that the sequence consists of exactly $k$ splits, since,
    in case the given sequence is shorter, we can pad the sequence to the desired length using trivial splits that only create isolated vertices.
    Furthermore, we can assume that the sequence only performs disjoint splits. In case a split was not disjoint, we can simply remove the duplicated edges throughout the sequence.
    Clearly, removing edges cannot introduce triangles to a triangle-free graph.
    Finally, we may assume that precisely the last $k - 2$ splits were used to destroy the $k - 2$ disjoint copies of $K_3$ in $G^*$.
    Thus, already $G_2^*$ must be free of triangles apart from the $k - 2$ disjoint copies of $K_3$.

    In the case that already $G_0^*$ is triangle-free apart from the $k - 2$ disjoint copies of $K_3$, $G$ is edgeless and thus 3-colorable.
    Otherwise, $G_0^*$ contains a triangle not using any of the vertices of the $k - 2$ disjoint copies of $K_3$.
    Suppose that $u$ was not split in the sequence,
    then at least one $G$-copy, together with all of the triangles it forms with $u$, must still be intact in $G_2^*$, as there are three copies, but only two splits in the sequence that are not used to destroy the $k - 2$ disjoint copies of $K_3$.
    As $G_2^*$ is triangle-free apart from the $k - 2$ disjoint copies of $K_3$, this cannot be. Hence, $u$ was split in the splitting sequence.

    Next, we select a suitable copy $G^i$ with $i \in \set{1, 2, 3}$ of $G$ which we will use to construct a 3-coloring of $G$.
    In case $u$ was split two times, we let $i \coloneqq 1$.
    Otherwise, a vertex of a copy $G^j$ with $j \in \set{1, 2, 3}$ was split. Then, let $i \in \set{1, 2, 3} \setminus \set{j}$.
    Regardless of the applicable case, none of $V(G^i)$ were split in the splitting sequence.
    We use $u_1, u_2, \ldots$ to denote the descendants of $u$ in $G_2^*$.
    Then, we define a coloring $c \colon V(G^i) \to \set{1, 2, 3}$ of $G^i$ as follows:
    Consider each $v \in V(G^i)$. The edge $vu$ was assigned to exactly one of $u_1, u_2, \ldots$ in the splitting sequence.
    Let the index be denoted by $j$ and set $c(v) \coloneqq j$.

    We claim that $c$ is a proper 3-coloring of $G^i$.
    Towards a contradiction, suppose there is $xy \in E(G^i)$ with $c(x) = c(y)$.
    Then, by the definition of $c$, there is $j \in \set{1, 2, 3}$, such that $xu_j, yu_j$ are edges in $G_2^*$, implying that the vertices $\set{x, y, u_j}$ form a triangle in $G_2^*$.
    But as $G_2^*$ is triangle-free apart from the $k-2$ disjoint copies of~$K_3$, we derived a contradiction and conclude that $c$ is a proper 3-coloring of $G^i \simeq G$.
\end{proof}
}

The ``root'' of the para-\NP-hardness we have just observed seems to stem from the ability to split a single vertex more than once.
Indeed, as we will see shortly, if we remove said ability, the resulting parameterized problem is a member of \XP.
We denote the restriction of \TriangleFreeVS{}, where each vertex can only be split at most once as \ShallowTriangleVS{}.
Note that \ShallowTriangleVS{} is still an \NP-hard problem:
Since $K_3$ is biconnected, we can apply the proof of \cref{theorem:single_biconnected_np_complete} and deduce that \ShallowTriangleVS{} is \NP-complete,
as the argument remains valid even if each vertex can be split at most once.

The outline of the algorithm is as follows:
Consider an instance $(G, k)$.
We formulate a Boolean formula, $\psi$, that has a model such that at most $k$ variables of a certain kind are allowed to be true if and only if $(G, k)$ is a positive instance of \kShallowTriangleVS{k}.
Intuitively, an interpretation $I$ of $\psi$ specifies how $G$ should be split.
We make this notion precise and describe a mapping that assigns each such interpretation $I$ a graph $G^I$.
Using these definitions, we prove that the encoding $\psi$ of the instance $(G, k)$ is correct.
Finally, we show that the satisfiability of the formula (where at most $k$ variables of a certain kind are allowed to be true) can be decided efficiently by guessing part of $\psi$'s variables and solving the reduced formula in linear time using an algorithm for \textsc{2-SAT}.
In total, we obtain a running time of $\mathcal{O}( \sqrt{2}^{k^2} \cdot \N{G}^{k+3} )$.

\newcommand{\var}[1]{\protect\overrightarrow{\vphantom{abcxy}#1} }

\begin{restatable}[\appsymb]{theorem}{theoremShallowTriangleVSinXP}\label{theorem:ShallowTriangleVSinXP}
    \ShallowTriangleVS{} parameterized by $k$, the number of splits, is in \XP.
\end{restatable}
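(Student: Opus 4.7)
The plan is to implement the strategy sketched in the problem statement: guess the set of split vertices, encode triangle-freeness as a Boolean formula, and then further guess the interactions between split vertices to reduce to 2-SAT. First, enumerate all $S\subseteq V(G)$ with $|S|\le k$ to serve as the set of vertices that will be split; there are $O(|V(G)|^k)$ such candidates. Without loss of generality every split is disjoint: given any shallow solution, moving each shared neighbor to a single side of its split only removes edges and therefore cannot introduce new triangles. So for each $S$ the task is to decide whether partitions $N_G(v)=V_1^v\cupdot V_2^v$, one per $v\in S$, can be chosen so that every triangle of $G$ is destroyed.

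I encode this as a formula $\psi_S$ over Boolean variables $x_{v,u}$ for $v\in S$ and $u\in N_G(v)$, meaning that $u\in V_1^v$. The key observation is that a triangle $abc$ of $G$ survives precisely when, for every $v\in\{a,b,c\}\cap S$, the two other triangle vertices lie on the \emph{same} side of $v$'s partition. Thus, after discarding every $S$ that misses some triangle of $G$ entirely, the instance is positive for $S$ iff $\psi_S$ is satisfiable, where for each triangle $abc$ of $G$ the formula $\psi_S$ contains the clause
\[
  \bigvee_{v\in\{a,b,c\}\cap S}\bigl(x_{v,u}\oplus x_{v,w}\bigr),\qquad\{u,w\}=\{a,b,c\}\setminus\{v\}.
\]
Clauses from triangles with exactly one vertex in $S$ already decompose into a pair of 2-SAT clauses; the problematic ones are those from triangles with two or three vertices in $S$.

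To break these, I guess the values of all variables $x_{v,u}$ with $v,u\in S$, which only concerns edges of $G[S]$; exploiting the labeling symmetry of each individual split, the number of independent bits can be reduced to at most $\binom{k}{2}$, giving at most $\sqrt{2}^{k^2}$ branches overall. In each branch, every remaining clause simplifies either to a disjunction of at most two literals over the variables $x_{v,w}$ with $v\in S$, $w\notin S$, or, for a fully-$S$ triangle, to a constant that is simply checked against the current guess. The result is a 2-SAT instance solvable in linear time, yielding the advertised $O(\sqrt{2}^{k^2}\cdot|V(G)|^{k+3})$ running time. The main obstacle I expect is the correctness proof: establishing (i) the soundness of the disjoint-split normalization, (ii) that $\psi_S$ faithfully captures shallow triangle-eliminating splittings with split-set $S$, and (iii) that after the outer guessing every surviving clause has become a 2-SAT clause, each requiring a careful case distinction on $|\{a,b,c\}\cap S|\in\{1,2,3\}$.
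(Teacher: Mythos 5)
Your proposal is correct and follows essentially the same route as the paper's proof: encode the destruction of each triangle as an XOR-of-assignments constraint, enumerate the $O(n^k)$ candidate split sets (which must hit every triangle), guess the edge-assignment variables internal to $G[S]$, and observe that the surviving constraints collapse to 2-SAT, with the disjoint-split normalization justified exactly as you describe. The only cosmetic difference is that the paper builds one global formula with additional vertex variables and folds the choice of $S$ into the satisfiability procedure, whereas you instantiate a formula $\psi_S$ per candidate $S$; the case analysis on $|\{a,b,c\}\cap S|$ and the resulting running time are the same.
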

\appendixproof{theorem:ShallowTriangleVSinXP}{
\ifshort\theoremShallowTriangleVSinXP*\fi
\begin{proof}
    Let $(G, k)$ with $G = (V, E)$ be an instance of \ShallowTriangleVS{}.
    We will derive an algorithm that decides the instance in time $ \mathcal{O}( \sqrt{2}^{k^2} \cdot \N{G}^{k+3} )$.

    \smallskip
    \textit{Definition of $\psi$.}
    Let ``$<$'' be some total order on $V$.
    We define the propositional formula $\psi$ over the set of variables
    \begin{equation*}
        X \coloneqq V \cup \bigcup_{uv \in E \;\land\; u < v}\set{ \var{uv}, \var{vu} }.
    \end{equation*}
    In other words, $X$ contains a variable for each vertex of $G$, and two variables for each edge.
    The subset $V$ of $X$ will model which vertices to split, while the remaining variables will model how to select the neighborhoods of the two descendant vertices when performing a split.
    An \emph{interpretation} of $\psi$ is a set $I \subseteq X$. If a variable $x \in X$ is in $I$, we say that $x$ is true under $I$ and write $I \models x$.
    Otherwise, we say $x$ is false under $I$ and write $I \not\models x$.
    We also extend this notion to sentences over $X$ using the standard semantics of Boolean logic.
    The symbol ``$\oplus$'' denotes the exclusive or operator, ``$\supset$'' is the implication operator, and ``$\equiv$'' is used to compare literals for equality.

    Let $\mathcal{T}$ denote the set of triangles in $G$, where each triangle is given by a three-tuple of its vertices ordered according to ``$<$''.
    For each such triangle $T = (a, b, c) \in \mathcal{T}$, we define the formula
    \begin{equation*}
        \varphi(T) \coloneqq \parens{a \land \parens{\var{ba} \oplus \var{ca}}} \lor \parens{b \land \parens{\var{ab} \oplus \var{cb}}} \lor \parens{c \land \parens{\var{ac} \oplus \var{bc}}}.
    \end{equation*}
    Then, we set
    \begin{equation*}
        \psi \coloneqq \bigwedge_{T \in \mathcal{T}} \varphi(T).
    \end{equation*}

    Intuitively, for a given $T = (a, b, c) \in \mathcal{T}$, the formula $\varphi(T)$ encodes the condition for $T$ to be destroyed via vertex splitting.
    More precisely, at least one of $a, b$, or $c$ needs to be split such that the edges it forms with the other two vertices after the split do not intersect.
    Given this, the formula $\psi$ simply encodes that said condition should hold for all triangles of $G$.

    \smallskip
    \textit{Mapping $G$ to $G^I$.}
    Before we prove that $\psi$ actually describes the problem at hand properly, we first need some additional notation.
    Let $I \subseteq X$, that is, an interpretation of $\psi$.
    We construct the graph $G^I$ over the vertex set $ V \setminus I $, augmented with fresh vertices $v_1, v_2$ for each $v \in I \cap V$.

    \begin{figure}[t]
        \begin{center}
            \includegraphics{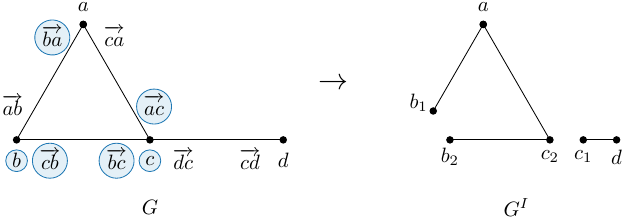}
        \end{center}
        \caption[Example for the mapping defined in \cref{theorem:ShallowTriangleVSinXP}.]{Example for the mapping defined in \cref{theorem:ShallowTriangleVSinXP}.
        A graph $G$ is mapped to $G^I$, where $I = \set{b, c, \var{ba}, \var{ac}, \var{cb}, \var{bc}} \subseteq X$ is a model of $\psi = \parens{a \land \parens{\var{ba} \oplus \var{ca}}} \lor \parens{b \land \parens{\var{ab} \oplus \var{cb}}} \lor \parens{c \land \parens{\var{ac} \oplus \var{bc}}}$.
        The variables that are true under $I$ are marked in the figure.
        Note that the same triangle-free $G^I$ would result if $\var{ba}$ were not set to true in $I$.
        }
        \label{figure:interpretation_to_graph}
    \end{figure}

    The edges of $G^I$ are obtained by mapping each edge $uv$ of $G$ to $uv^I = u'v'$, where
    \begin{equation*}
        u' \coloneqq 
        \begin{cases}
            u & \text{if } u \not\in I\\
            u_1 & \text{if } u \in I \land \var{vu} \not\in I\\
            u_2 & \text{if } u \in I \land \var{vu} \in I
        \end{cases}
        \quad \text{and} \quad
        v' \coloneqq 
        \begin{cases}
            v & \text{if } v \not\in I\\
            v_1 & \text{if } v \in I \land \var{uv} \not\in I\\
            v_2 & \text{if } v \in I \land \var{uv} \in I
        \end{cases}.
    \end{equation*}
    
    Intuitively, $G^I$ describes the graph $G$ split according to $I$.
    The subset $V \cap I$ of the interpretation indicates which vertices of $G$ should be split.
    Each vertex $v \in M \cap V$ is replaced by two descendants, $v_1$ and $v_2$.
    The split neighborhoods of $v_1$ and $v_2$ are determined by all variables in $X$ of the shape $\var{uv}$ where $u \in V$.
    If $\var{uv} \not\in I$, then the edge $uv$ (or a descendant of the edge if $u \in V$ as well) is assigned to $v_1$.
    If otherwise $\var{uv} \in I$, then the edge is assigned to $v_2$.
    Reference \cref{figure:interpretation_to_graph} for an example.

    It is easy to see that there is a bijection between the set of all possible splitting sequences of $G$ that only use disjoint splits and do not split any vertex more than once,
    and the set of possible interpretations of $X$.
    Furthermore, note that the order of splits does not matter.
    
    \smallskip
    \textit{Correctness of $\psi$.}
    We claim that $(G, k)$ is a positive instance of \kShallowTriangleVS{k} if and only if $\psi$ has a model $M \subseteq X$ with $|M \cap V| \leq k$.

    $(\Rightarrow)\colon$
    Let $G_0, \dots, G_\ell$ be a splitting sequence that splits each vertex at most once and obeys $\ell \leq k$ as well as $G_\ell \in \free_\prec(K_3)$.
    We assume that the sequence only performs disjoint splits; a sequence that uses non-disjoint splits can be transformed into such a sequence by removing the duplicated edges.
    Also, removing edges cannot introduce triangles. Hence, the last graph of the transformed sequence would be triangle-free as well.

    Let $I \subseteq X$ such that $G^I = G_\ell$.
    We claim that $I \models \psi$.
    Suppose the contrary.
    Then, there is $(a, b, c) = T \in \mathcal{T}$ such that $M \not\models \varphi(T)$.
    In case $\set{a, b, c} \cap M = \varnothing$, no vertex of $T$ was split, hence $G_\ell$ is not triangle-free, a contradiction.
    Otherwise, we observe that for $M \not\models \varphi(T)$, it needs to be the case that
    \begin{align*}
        M &\models a \supset ( \var{ba} \equiv \var{ca} ), \\
        M &\models b \supset ( \var{ab} \equiv \var{cb} ) \text{, and}\\
        M &\models c \supset ( \var{ac} \equiv \var{bc} ).
    \end{align*}
    But then, if one of $a, b, c$ is split in the splitting sequence, the edges of the triangle $T$ are always assigned to the same descendant in each split.
    Thus, $G_\ell$ contains a triangle, contradicting that $G_\ell$ is triangle-free.

    $(\Leftarrow)\colon$
    Let $M \subseteq X$ be a model of $\psi$ with $\ell \coloneqq |M \cap V| \leq k$.
    Fix a sequence of sets $S_0, \ldots, S_\ell$ with $\varnothing = S_0 \subset S_1 \subset \dots \subset S_{\ell-1} \subset S_\ell = M \cap V$.
    Then, $G_0, \dots, G_\ell$ with $G_i \coloneqq G^{S_i \cup (M \setminus V)}$ is a splitting sequence with $G_0 = G$ that splits each vertex at most once, has length $\ell \leq k$, and uses only disjoint splits.
    It remains to show that $G_\ell = G^M$ is triangle-free.
    Towards a contradiction, suppose there are $a, b, c \in V(G_\ell)$ such that $G_\ell[\set{a, b, c}] \simeq K_3$.
    The reverse operation to a disjoint vertex split is to merge two non-adjacent vertices, such that the number of edges remains invariant.
    Merging non-adjacent vertices clearly cannot destroy triangles.
    Thus, we can trace the triangle $a, b, c$ found in $G_\ell$ backward through the sequence,
    and find ancestors $a^*, b^*, c^* \in V(G_0)$ of $a, b, c$ respectively that form a triangle $T^*$ in $G_0$.
    Then, we find that $M \models \varphi(T^*)$ by choice of $M$.
    Hence, at least one of $a^*, b^*, c^*$ is in $M \cap V$.
    Without loss of generality, assume that $a^* \in M \cap V$.
    Therefore, $M$ assigns $\var{b^*a^*}$ and $\var{c^*a^*}$ different truth values.
    This implies that ${b^*a^*}^M = ba \in E(G_\ell)$ and ${c^*a^*}^M = ca \in E(G_\ell) $ are not adjacent in $G_\ell$.
    Hence, $a, b, c$ do not form a triangle in $G_\ell$, contrary to what we assumed.

\smallskip
\textit{Deciding $\psi$.}
Finally, we devise a procedure to check whether $\psi$ admits a model $M \subseteq X$ where at most $k$ variables of $V \subseteq X$ are set to true.
We search for a model by guessing the truth values of a constrained part of the variables of $\psi$.
Then, with this partial assignment in place, the remaining formula becomes equivalent to an instance of $\textsc{2-SAT}$, which is a problem belonging to the class \P.
For any $M$ that models $\psi$, the set $S \coloneqq M \cap V$ must be a hitting set of $\mathcal{T}$ of size at most $k$.
Clearly, there are at most $\mathcal{O}(\N{G}^k)$ candidate hitting sets $S$ to consider.
Based on the choice of $S$, we guess a partial truth assignment:
The variables $V \cap S$ are set to true, and the variables $V \setminus S$ are set to false.
Additionally, we identify the subset $X' \subseteq X \setminus V$ of all variables that correspond to edges in $G[S]$, that is,
\begin{equation*}
    X' \coloneqq \bigcup_{u, v \;\in\; S \colon uv \;\in\; E} \set{ \var{uv}, \var{vu} }.
\end{equation*}
We guess the truth value of all $x' \in X'$.
In total, at most $2^{\binom{|S|}{2}} = \mathcal{O}(\sqrt{2}^{k^2})$ partial truth assignments of $X$
for each choice of $S$ are enumerated.
Consider one such partial truth assignment.
For each conjunct $\varphi( (a, b, c) )$ of $\psi$, we replace the guessed variables by either $\top$ or $\bot$.
After simplifying the resulting formula, we obtain an equivalent set of clauses, in which each clause contains no more than two literals, in all cases:
\begin{description}
    \item[Case $|\set{a, b, c} \cap S| = 0\colon$]
    Impossible, as $S$ is a hitting set of $\mathcal{T}$.

    \item[Case $|\set{a, b, c} \cap S| = 1\colon$]
    Without loss of generality, assume $\set{a, b, c} \cap S = \set{a}$.
    Then, $\varphi( (a, b, c) )$ simplifies to
    $\var{ba} \oplus \var{ca}$,
    which is equivalent to the clause set
    $\set{\var{ba} \lor \var{ca}, \neg\var{ba} \lor \neg\var{ca}}$.

    \item[Case $|\set{a, b, c} \cap S| = 2\colon$]
    Without loss of generality, assume $\set{a, b, c} \cap S = \set{a, b}$.
    As $\set{a, b} \subseteq S$, the truth values of $\var{ab}$ and $\var{ba}$ are already fixed.
    Then, $\varphi( (a, b, c) )$ reduces to a clause set $\set{\var{ca^*} \lor \var{cb^*}}$, where $\var{ca^*}$ and $\var{cb^*}$ are literals of $\var{ca}$ and $\var{cb}$ respectively.

    \item[Case $|\set{a, b, c} \cap S| = 3\colon$]
    In this case, the truth values of all variables occurring in $\varphi( (a, b, c) )$ are already fixed.
    Thus, $\varphi( (a, b, c) )$ reduces to either $\set{\top}$ or $\set{\bot}$.
\end{description}
In total, this yields an instance of \textsc{2-SAT}.
There are at most $\binom{\N{G}}{3}$ triangles in $G$.
For each triangle, we obtain at most two clauses, in which at most four unique variables occur.
Thus, the size of the \textsc{2-SAT} instance, that is, the sum of the number of clauses and the number of variables, is in $\mathcal{O}(\N{G}^3)$.
By the classical result of Even, Itai, and Shamir, \textsc{2-SAT} admits a linear time algorithm \cite{linear2SAT}.

Considering all steps, that is, enumerating all hitting sets $S$ of size at most $k$, guessing the truth values of $X'$ for each such $S$, and finally deciding the resulting instance of \textsc{2-SAT}, gives a combined running time of 
$ \mathcal{O}( \sqrt{2}^{k^2} \cdot \N{G}^{k+3} )$.
\end{proof}
}

\section{Conclusion}
We have mostly obtained \NP-hardness results for \pPVSlong~(\pPVS).
However, because of the nontrivial polynomial-time algorithms for sets of small forbidden subgraphs, such as $\{\overline{K_3}, P_3\}$, and other so-far sporadic tractability results, such as for \pPVSa{Forest}, the line of separation between tractability and intractability is much more jagged than for instance for the vertex-deletion operation.
Our results show that well-connected forbidden subgraphs are an important driver of hardness, but the case of more fragile forbidden subgraphs is relatively open apart from hardness for \pFVS{P_3}.
One way to drive this direction forward is to settle the complexity of \pFVS{P_4} and \pFVS{K_{1, 3}}.
For the former, similar to \pFVS{P_3} and sigma clique covers, we can show a relation to a cograph-covering problem which we tend to believe is \NP-hard.
The latter we consider fully~open.

\bibliography{refs}

\ifshort
\cleardoublepage
\appendix
\section*{Appendix}
\label{sec:appendix}
\appendixText

\else
\fi

\end{document}

